\definecolor{blueviolet}{rgb}{0.2, 0.2, 0.6}
\definecolor{webgreen}{rgb}{0,.5,0}
\definecolor{webbrown}{rgb}{.6,0,0}
\newtheorem{theorem}{Theorem}
\newtheorem{definition}{Definition}
\newtheorem{corollary}{Corollary}
\newtheorem{lemma}{Lemma}
\newtheorem*{theorem*}{Theorem}
\newtheorem*{task*}{Task}
\newtheorem*{proposition*}{Proposition}
\newcommand{\dif}{{\rm d}}
\newcommand{\bs}{\boldsymbol}
\newcommand{\be}{\begin{equation}}
	\newcommand{\ee}{\end{equation}}
\DeclareMathOperator*{\E}{{\mathbb{E}}}
\newcommand{\iu}{{\rm i}}
\begin{document}
	
	\title{The Scrooge ensemble in many-body quantum systems}
	
	\author{Max McGinley}
	\affiliation{T.C.M.~Group, Cavendish Laboratory, University of Cambridge, Cambridge CB3 0HE, UK}
	
	\author{Thomas Schuster}
	\affiliation{Walter Burke Institute for Theoretical Physics and Institute for Quantum Information and Matter, California Institute of Technology, Pasadena, California 91125, USA}
	\affiliation{Google Quantum AI, Venice, California 90291, USA}

	\begin{abstract}
		In many physical settings, the statistical properties of quantum states are thought to be described by the Scrooge ensemble, a more structured generalization of the Haar ensemble. In this work, we prove several key results on the properties and complexity of Scrooge-random states in macroscopic quantum systems, and provide a general-purpose calculus for evaluating their moments. A key theme of our results is a separation between universal random fluctuations in non-local properties and exponential concentration of all local properties. Implications for device benchmarking, sampling advantages beyond random circuits, quantum complexity growth, and the physical origin of Scrooge-random states are discussed.
	\end{abstract} 
	
	\maketitle

	Despite the extraordinary complexity that arises in many-body quantum systems, their universal properties can often be described by simple statistical theories---an idea that dates back to Wigner \cite{wigner1955characteristic}.
	By identifying an appropriate ensemble of systems wherein averaged properties can be computed, quantitative predictions about a particular system of interest can be made without needing to analyze any single, often intractable, instance.
	
	In the context of dynamics, the spherical Haar ensemble---a uniform distribution over all states in Hilbert space---is expected to describe the late-time states of scrambling systems without any conservation laws (e.g.~energy, charge, etc.)~\cite{Page93,hayden2007black,cotler2017chaos,roberts2017chaos}.  By understanding the key properties of Haar-random states, such as the structure of bipartite entanglement \cite{Page93}, the thermalization of subsystems \cite{popescu2006entanglement}, and the statistics of measurement outcome distributions \cite{boixo2018characterizing}, precise predictions can be made about a diverse range of physical systems in a unified way. For example, states generated by local random quantum circuits converge towards Haar-random behavior, in a sense made precise by the study of their low-order moments \cite{emerson2003pseudo,harrow2009random,brown2012scrambling, brandao2016local,haferkamp2022random, chen2024incompressibility,laracuente2024approximate,schuster2024random,laracuente2025quantum}. In addition to providing a tractable toy model for understanding scrambling \cite{nahum2017entgrowth,nahum2018operator,von2018operator}, the emergence of Haar-randomness in these circuits has facilitated applications in device benchmarking \cite{emerson2005scalable,boixo2018characterizing}, quantum learning \cite{huang2020predicting,aharonov2022quantum,huang2021quantum,chen2021exponential}, cryptography \cite{ji2018pseudorandom,ma2024construct,schuster2025strong}, and quantum sampling advantages \cite{arute2019quantum,movassagh2023hardness,bouland2019complexity}.
	
	At the same time, the Haar ensemble is limited in its ability to describe most physical systems due to its complete lack of structure. Many systems cannot scramble towards Haar-random behavior, e.g.~due to constraints imposed by conserved charges or energy, yet nonetheless  exhibit many hallmarks of scrambling and chaos~\cite{srednicki1994chaos,rigol2008thermalization,rakovszky2018diffusive,khemani2018operator,cui2025random}. Given the impact that the ensemble approach has had on our understanding of scrambling in unstructured systems, a similar statistical description of these more structured settings is highly desirable. 

	Building on earlier studies~\cite{goldstein2006distribution,goldstein2016universal,cotler2021emergent}, a potential breakthrough in this direction emerged in a recent work~\cite{mark2024maximum}, which suggested that such a description is provided by the so-called Scrooge ensemble, studied previously for few-body quantum systems several decades ago~\cite{jozsa1994lower}.
	This proposal seems imminently plausible: The central assumption of scrambling is that a state looks maximally random up to some global physical constraints; the Scrooge ensemble provides precisely such a way to specify this additional structure.
	However, unlike for Haar-random states, our knowledge of the universal properties of Scrooge-random quantum states is much less known.
	This stems not only from limited explorations, but also from an apparent intractability of computing ensemble-averaged quantities. This significantly limits the analytic power of the  ensemble approach as it stands.

	In this work, we overcome this challenge by developing a simple and rigorous framework for computing averages over Scrooge ensembles in many-body quantum systems, and use this framework to establish several key results on the properties and complexity of Scrooge-random states. Our main contributions are fourfold:

	First, our key technical contribution is a general-purpose approximate formula for the moments of any Scrooge ensemble, with rigorous accuracy guarantees (Theorem \ref{thm:approx}). Our formula is inspired by a proposition made for the second moment in~\cite{mark2024maximum}. We prove that our approximate formula holds up to an exponentially small relative error: the strongest notion of approximation borrowed from the unitary design literature~\cite{brandao2016local,schuster2024random}.

	Second, building on this formula, we show that the moments of the Scrooge ensemble concentrate tightly around their mean on any subsystem missing only a small $\mathcal{O}(\log n)$ number of sites---a vanishing fraction of the entire system. 
	This suppression of fluctuations parallels recent results on random circuits and Hamiltonian dynamics~\cite{HaydenLeungWinter2006,cotler2022fluctuations,huang2019instability,shaw2025experimental,tang2025estimating}, and is much more prominent than in traditional perspectives on thermalization~\cite{abanin2019colloquium}, which usually apply only to sub-extensive subsystems~\cite{subsystemFootnote}.

	Third, motivated by the use of similar analyses to understand sampling advantages in random quantum circuits~\cite{boixo2018characterizing}, we characterize the output distributions that arise when Scrooge-random states are measured in any product basis.
	We show that the statistics of the output probabilities, which are useful for device benchmarking and verification~\cite{mark2023benchmarking,shaw2024benchmarking}, are described by the so-called Wishart distribution~\cite{wishart1928generalised}, a multi-variate generalization of the Porter-Thomas distribution.
	We also show that the output distributions exhibit long-ranged conditional mutual information similar to random circuits, suggesting they might also be  hard to sample from classically.

	Finally, we apply our results to characterize the \textit{quantum} complexity of Scrooge-random states, by providing a simple lower bound on the number of bits of randomness needed for any ensemble of quantum states to replicate the moments of the Scrooge ensemble. This allows us to lower bound the evolution time needed for the Scrooge ensemble to emerge from fixed Hamiltonian dynamics, and the number of two-qubit gates needed for it to emerge in quantum circuits.

	\emph{Background.}---The Scrooge ensemble $\mathcal{S}_\rho$ is a probability distribution over pure quantum states.
	The distribution is parametrized by a fixed ``background'' density matrix $\rho$, which will allow one to encode additional structure such as charge or energy conservation.
	Specifically, $\mathcal{S}_\rho$ is defined as the unique, maximally entropic ensemble of pure states whose first moment, $\E_{\psi \sim \mathcal{S}_\rho} \dyad{\psi}$, is equal to $\rho$.
	The notion of maximum entropy is defined with respect to the accessible information for state ensembles; we refer to earlier works for details~\cite{jozsa1994lower, mark2024maximum}.
	Remarkably, an explicit construction of the Scrooge ensemble following this definition is possible~\cite{jozsa1994lower}. 
	For any $\rho$, the Scrooge ensemble is obtained by drawing pure states, $\ket{\psi_\phi} \equiv \sqrt{\rho} \ket{\phi} / \sqrt{p_\rho(\phi)}$, with probability density, $p_\rho(\phi) \equiv \bra{\phi} \rho \ket{\phi}$.
	This is the same ensemble of states that one would obtain by performing a Haar-random rank-1 measurement on the ancilla qubits of any purification of $\rho$~\footnote{Specifically, if $\rho$ is supported on a system $S$, its purification is any pure state $\ket{\Psi^{QQ'}_\rho}$ for which $\Tr_{Q'} \dyad*{\Psi^{QQ'}_\rho} = \rho^Q$.}.
	When $\rho$ is proportional to the identity matrix, the Scrooge ensemble reduces to the Haar ensemble.
	
	As for the Haar ensemble, the emergence of the Scrooge ensemble in physical settings is most naturally characterized through its moments~\cite{cotler2021emergent,mark2024maximum}.
	The $k$-th moment of the ensemble is defined as, $\chi^{(k)}_{\mathcal{S}_\rho} \equiv \E_{\psi \sim \mathcal{S}_\rho} \dyad{\psi}^{\otimes k} = \int d\phi \, p_\rho(\phi)  \dyad{\psi_\phi}^{\otimes k}$.
	The moments capture the statistics of various properties of $\psi$, such as the variances and higher cumulants of its expectation values, and the expected outcomes of multi-copy measurements. 
	
	The Scrooge ensemble and its moments have been predicted to describe an increasingly wide range of quantum many-body settings~\cite{goldstein2006distribution,goldstein2016universal,cotler2021emergent,mark2024maximum,teufel2025canonical,shaw2024benchmarking,liu2024deep,manna2025projected,sherry2025information,chang2025deep,mok2025nature,schuster2025fast}.
	Broadly speaking, these  fall into three classes.
	%
	The first class are \emph{temporal ensembles}, such as $\mathcal{E}_t = \{ e^{-iHt} \ket{\psi_0} \, | \, t \sim [0,T) \}$, for an ergodic Hamiltonian $H$, initial state $\ket{\psi_0}$ and maximum evolution time $T$. 
	Temporal ensembles describe the typical late-time behavior of pure states in chaotic many-body systems. 
	Under certain, generic conditions, they are equivalent to Scrooge ensembles, setting $\rho \equiv \int_0^T dt e^{-iHt} \dyad{\psi_0} e^{iHt}$~\footnote{To be specific, Ref.~\cite{mark2024maximum} shows that the temporal ensemble for asymptotically large $T$ is equivalent to the so-called \emph{random phase} ensemble, assuming the no-resonance condition for the Hamiltonian $H$. The moments of the random phase ensemble are computed up to small additive error in Ref.~\cite{mark2024maximum}, and produce an identical formula to our approximate expression for the moments of the Scrooge ensemble.
		Hence, the two ensembles are equivalent up to small additive error.
		We suspect that this equivalence also holds up to small relative error for observables that are sufficiently  spread out in the energy eigenbasis.}. The Scrooge ensemble, as opposed to the Haar ensemble, is required to capture this behavior due to conservation of energy under $H$~\cite{cui2025random}. 
	
	The second class are \emph{projected ensembles}, where one projectively measures a subsystem $B$ of a bipartite quantum state, $\ket{\psi_0^{AB}}$, and considers the ensemble of post-measurement pure states on the unmeasured subsystem $A$~\cite{cotler2021emergent}.
	%
	%
	The first moment of this ensemble is given by the reduced density matrix, $\rho_A \equiv \Tr_B(\dyad*{\psi_0^{AB}})$.
	The higher moments have been observed to match those of the Scrooge ensemble with respect to $\rho_A$~\cite{cotler2021emergent,mark2024maximum,shaw2024benchmarking,liu2024deep,manna2025projected,sherry2025information,chang2025deep,mok2025nature}.
	%
	
	The final class are \emph{dynamical ensembles}, where the evolution time is fixed but the rules governing evolution are randomized.
	This is analogous to the emergence of the Haar ensemble (i.e.~state and unitary designs) in local random circuits of a given depth~\cite{brandao2016local,chen2024incompressibility}.
	Building on our results, forthcoming work shows that the Scrooge ensemble emerges at short poly-logarithmic times in $U(1)$-symmetric random circuits~\cite{schuster2025fast}, mirroring the fast formation of designs in random circuits without symmetry~\cite{schuster2024random}.
	%

	\emph{Moments of Scrooge ensembles.}---The aim of this work is to characterize the universal features of Scrooge-random states in many-body quantum systems, agnostic to their physical origin.
	Accordingly, our results will apply to each  setting described above.
	To maintain physical relevance, we
	focus on
	ensembles whose \emph{low moments} approximate those of the Scrooge ensemble, but whose high moments may differ. 
	%
	%
	This mirrors standard approaches in random circuits and unitary designs~\cite{emerson2003pseudo,harrow2009random,brown2012scrambling, brandao2016local,haferkamp2022random, chen2024incompressibility,laracuente2024approximate,schuster2024random,laracuente2025quantum}. 
	
	To be specific, we consider the following notions of approximation error, following the design literature~\cite{brandao2016local,schuster2024random}:
	\vspace{-4mm}
	\begin{definition}
		An ensemble of states $\mathcal{E}$ forms an $\varepsilon$-approximate Scrooge $k$-design with relative (additive) error if its $k$-th moments  are $\varepsilon$-close to those of the Scrooge ensemble  in relative (additive) error.
	\end{definition} 
	\noindent Here, two operators $\mathcal A$ and $\mathcal B$ are close in additive error if $\lVert \mathcal A - \mathcal B \rVert_1 \leq \varepsilon$, where $\| \mathcal{X} \|_1 = \Tr \sqrt{\mathcal{X}^\dagger \mathcal{X}}$ is the trace norm.
	The operators are close in relative error if $(1-\varepsilon)\mathcal{A} \preceq \mathcal{B} \preceq (1+\varepsilon) \mathcal{A}$, where $\mathcal{X} \preceq \mathcal{Y}$ denotes that $\mathcal{Y} - \mathcal{X}$ is positive semi-definite~\cite{brandao2016local}. 
	A relative error $\varepsilon$ (which implies but is not implied by an additive error $2\varepsilon$) is sufficient to bound the expectation value of any positive $k$-copy operator up to a small multiplicative difference~\cite{schuster2024random}.
	
	Unlike the Haar ensemble, no analytic formula for the moments of the Scrooge ensemble is known.
	This has significantly limited rigorous  calculations of its properties so far.
	%
	Our main technical contribution, from which all subsequent results follow, is a simple approximation formula for the $k$-th moments of the Scrooge ensemble, with strong and precise error bounds. 
	\begin{theorem}
		\label{thm:approx}
		For any $k \leq 2^{S_\infty(\rho)/3 - 1}$. 
		The $k$-th moment of the Scrooge ensemble is approximated by
		\begin{align} \label{eq:approx}
			\chi^{(k)}_{\mathcal{S}_\rho} \approx \rho^{\otimes k} \cdot \sum_{\pi \in S_k} \pi,
		\end{align}
		up to relative error $\delta_{\rho, k}$, where
		\begin{align}
			\delta_{\rho, k} = \mathcal{O}(k^2\, 2^{-S_{\infty}(\rho)/2}).
		\end{align}
		Here, $S_\infty(\rho) = -\log(\max \textup{eig}\, \rho)$ is the min-entropy of $\rho$.
	\end{theorem}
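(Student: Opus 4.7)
The plan is to work directly from the integral representation of the Scrooge moment and approximate the re-weighting factor $1/p_\rho(\phi)^{k-1}$ by its Haar-average, controlling the residual error through the concentration of $\eta(\phi) := D\, p_\rho(\phi)$ around its mean $1$. From the defining formula,
\[
\chi^{(k)}_{\mathcal{S}_\rho} = D\int d\phi\,\frac{(\sqrt\rho)^{\otimes k}\,\dyad{\phi}^{\otimes k}\,(\sqrt\rho)^{\otimes k}}{p_\rho(\phi)^{k-1}},
\]
with $d\phi$ the normalized Haar measure on $\mathbb{C}^D$. Since every permutation $\pi\in S_k$ commutes with $\rho^{\otimes k}$, the target factorizes as $\rho^{\otimes k}\sum_\pi\pi = (\sqrt\rho)^{\otimes k}\sum_\pi\pi\,(\sqrt\rho)^{\otimes k}$, so it suffices to establish the relative operator approximation $D^k\,\E_\phi[\dyad{\phi}^{\otimes k}/\eta(\phi)^{k-1}] \approx \sum_\pi\pi$ on the symmetric subspace. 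The key concentration inputs are $\mathrm{Var}(\eta) \le \Tr\rho^2 \le 2^{-S_\infty(\rho)}$ and, more generally, bounds on higher central moments of $\eta$ in terms of the scalars $\Tr\rho^l \le 2^{-(l-1)S_\infty(\rho)}$.

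My strategy is a truncated Taylor expansion of $\eta^{-(k-1)}$ about $\eta = 1$. The zeroth-order term reproduces the target up to a benign combinatorial factor $1-\O(k^2/D)$, which can be absorbed into $\delta_{\rho,k}$. Each higher-order term $\E[(\eta-1)^j\,\dyad{\phi}^{\otimes k}]$ expresses as a Haar moment of order $j+k$ with $j$ insertions of $\rho$; by the standard cycle decomposition of $S_{j+k}$, it splits into a sum over permutation cycle types weighted by products of $\Tr\rho^l$ times operators (partial permutations) acting on the last $k$ copies. The $j=1$ term is subdominant because $\E[\eta-1] = 0$; the leading $j=2$ contribution, with its $\binom{k}{2}$ choices of positions for transpositions mixing the $\rho$-block with the $\dyad{\phi}$-block, produces the announced $\O(k^2\,2^{-S_\infty(\rho)/2})$ relative scale once one passes to the operator inequality.

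To upgrade these scalar estimates into a relative operator bound, I would reduce to matrix-element inequalities: for every unit symmetric $\ket{v}$ supported on $(\mathrm{supp}\,\rho)^{\otimes k}$, one needs $\E_\phi[g_v(\phi)/\eta^{k-1}] = (1\pm\delta_{\rho,k})\,\E_\phi[g_v(\phi)]$ with $g_v(\phi) := |\braket{v|\phi^{\otimes k}}|^2 \ge 0$. I would bound this uniformly in $\ket{v}$ by a concentration/H\"older split: on the ``good'' event $\{|\eta-1|\le\alpha\}$ with $\alpha \sim k\,2^{-S_\infty(\rho)/2}$, the truncated Taylor series controls $\eta^{-(k-1)} - 1$ uniformly; on the ``bad'' event, one uses concentration of $\eta$ together with an $L^p$-bound on $\E[\eta^{-s}]$ for $s = \O(k)$. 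The hard part will be this tail estimate, since $1/\eta^{k-1}$ is unbounded as $\eta\to 0$ and the naive Taylor series does not converge uniformly. I expect the hypothesis $k \le 2^{S_\infty(\rho)/3 - 1}$ to enter precisely here, guaranteeing that the required negative moments $\E[\eta^{-s}]$ remain bounded and that the bad-event contribution is parametrically smaller than the Taylor contribution. Combining the leading $k^2$ combinatorial term, the good-event Taylor terms, and the tail estimate then yields $\delta_{\rho,k} = \O(k^2\,2^{-S_\infty(\rho)/2})$.
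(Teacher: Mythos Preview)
Your reduction is the same as the paper's: conjugate by $(\sqrt{d\rho})^{-1}$ to compare $\E_\phi[\eta^{1-k}\dyad{\phi}^{\otimes k}]$ against the Haar moment, and control the difference via concentration of $\eta$ about $1$. Where you diverge is in the execution. You propose a truncated Taylor expansion of $\eta^{-(k-1)}$ plus a good/bad event split; the paper instead uses the \emph{finite} algebraic identity $r^{1-k}(r^{k-1}-1)=(r-1)\sum_{a=0}^{k-2}r^{1-k+a}$ and then a single three-factor H\"older inequality separating $\|r-1\|_{L^p}$, $\|r^{1-k+a}\|_{L^q}$, and $\|\braket{x|\Pi_\phi|x}\|_{L^s}$. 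This avoids both the Taylor truncation error and any event-splitting, and with the choice $p=q=2k$, $s=k/(k-1)$ gives the $k^2/\sqrt{m}$ bound directly. Your route can also be made to work, but the paper's is shorter and produces explicit constants.

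The genuine gap in your sketch is the concentration input. You write that higher central moments of $\eta$ are controlled ``in terms of the scalars $\Tr\rho^l$'' and that you ``expect'' the bounded negative moments $\E[\eta^{-s}]$ to follow from $k\le 2^{S_\infty/3-1}$. The paper does not obtain these by direct moment computation: its key technical step is to show (via a divided-difference representation of the distribution of $\eta$ and a convexity result of Farwig) that $\E[f(\eta)]$ is a \emph{Schur-convex} function of the spectrum of $\rho$ whenever $f$ is convex. This majorizes the general-$\rho$ case by the flat-spectrum case $\rho=\Pi_m/m$ with $m=\lfloor 2^{S_\infty(\rho)}\rfloor$, where $\eta$ is a rescaled beta variable with explicitly computable raw, reciprocal, and central moments. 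Without this reduction (or an equivalent substitute), your good/bad split stalls: a Chebyshev-type tail from $\mathrm{Var}(\eta)\le\Tr\rho^2$ alone gives only $\Pr[|\eta-1|>\alpha]\lesssim 2^{-S_\infty}/\alpha^2$, which at $\alpha\sim k\,2^{-S_\infty/2}$ is $\mathcal{O}(1/k^2)$ and not nearly small enough to suppress the bad-event contribution to relative error $\mathcal{O}(k^2\,2^{-S_\infty/2})$. You need sub-exponential tails \emph{and} a finite bound on $\E[\eta^{-qk}]$ for $q=\mathcal{O}(1)$; the Schur-convexity step is precisely what delivers both.
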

	\noindent The summation in Eq.~\eqref{eq:approx} is over all $k!$ permutations of the $k$ copies of the Hilbert space. The min-entropy $S_\infty(\rho)$, which controls the error $\delta_{\rho, k}$ of the approximation, is the $\alpha \rightarrow \infty$ limit of the $\alpha$-R{\'e}nyi entropies $S_\alpha \coloneqq (1-\alpha)^{-1}\tr(\rho^\alpha)$, and can be lower bounded as $S_\infty \geq \frac{1}{2}S_2 = - \frac{1}{2}\log \tr(\rho^2)$. In typical many-body settings where $\rho$ has finite entropy density, we expect $S_\infty(\rho) \propto n$, in which case Eq.~\eqref{eq:approx} holds up to an exponentially small error in the system size.
	
	
	Our approximate formula can be equivalently obtained as follows.
	One considers a Haar-random state $\ket{\phi}$, and deforms it via $\ket{\phi} \rightarrow \ket{\tilde{\psi}_\phi} \coloneqq \sqrt{\rho}\ket{\phi}$, without properly re-normalizing afterwards.
	The $k$-th moments of this ensemble are precisely  equal to Eq.~(\ref{eq:approx}), up to a near-unity constant prefactor. 
	The idea of using this ensemble to approximate the Scrooge moments was anticipated in~\cite{mark2024maximum}, where a small additive error for the second moment $k = 2$ was argued for.
	Our Theorem~\ref{thm:approx} strengthens this to  a rigorous relative-error bound that extends to higher moments $k>2$. These improvements will be essential for 
	most of our ensuing results.
	
	Our proof of Theorem~\ref{thm:approx} combines a careful analysis of the relevant integrals over $\phi$ with strong concentration bounds on the fluctuations of the normalizing factor $p_\rho(\phi) = \braket{\tilde{\psi}_\phi|\tilde{\psi}_\phi}$. A key step is to leverage Schur convexity to show that, for any $\rho$, $p_\rho(\phi)$ concentrates at least as sharply about its mean as the case where $\rho$ has a flat spectrum of rank $\lfloor2^{S_\infty(\rho)}\rfloor$. By simplifying the spectrum of $\rho$ in this way, many of the requisite calculations become tractable.
	We refer to the Supplemental Material (SM) for further details and our full technical proof~\cite{supp}.
	%

	In the remainder of our work, we describe several important physical properties of ensembles of many-body quantum states that form approximate Scrooge $k$-designs, each of which we derive using Theorem \ref{thm:approx}.
	
	\textit{Concentration of reduced states.---} 
	We begin by considering the reduced density matrices of Scrooge-random states.
	In many scenarios, one is interested in observables that are localized to some subregion $A$. The statistics of such quantities are fully captured by the moments of the marginal states, $\chi_{\mathcal{S}_\rho,A}^{(k)} \coloneqq \E_{\psi \sim \mathcal{S}_\rho} [\tr_B(\dyad{\psi})]^{\otimes k}$, which result from tracing out the complementary subsystem $B$ from the full moments $\chi^{(k)}_{\mathcal{S}_\rho}$.
	
	Unlike the full moments,  we find that reduced moments of Scrooge-random states concentrate extremely sharply about their values in $\rho$. 
	To be specific, we prove that the reduced  moments are approximated by,
	\begin{align}
		\chi_{\mathcal{S}_\rho,A}^{(k)} \approx \tr_B(\rho)^{\otimes k},
		\label{eq:subsystem approx}
	\end{align}
	up to a relative error that is---for typical many-body $\rho$ (see below)---exponentially small in the size of $B$, i.e.~the number of qubits traced out. The moments on the right  pertain to an ensemble in which each random state $\dyad{\psi}$ is replaced by the fixed state $\rho$. Hence, Eq.~(\ref{eq:subsystem approx}) shows that, much like the Haar ensemble~\cite{HaydenLeungWinter2006}, instance-to-instance fluctuations of Scrooge-random states can only appear in fully non-local observables.
	
	In the SM~\cite{supp}, we derive fully general relative-error bounds on the accuracy of the approximation Eq.~\eqref{eq:subsystem approx}. As an important special case, for any set of $k$ positive semi-definite observables, $(O_1^A, \ldots, O_k^A)$,  supported on $A$, the moments $\E_\psi\big[\prod_{r=1}^k\braket{\psi|O^A_r|\psi}\big]$ are equal to $\prod_{r=1}^k\text{tr}(\rho O^A_r)$ up to relative error $\mathcal{O}(k^2 2^{-S^*_\infty(B)})$. Here, $S_\infty^*(B)$ is the smallest min-entropy over all states on $B$ that can be obtained by performing a projective measurement on $A$, starting from $\rho$. In typical many-body states $\rho$ with finite entropy density, we expect $S_\infty^*(B)$ to scale linearly with $|B|$. 
	Thus, the moments collapse exponentially quickly in the number of traced out qubits.

	\emph{Output distributions.}---To further explore the implications of our approximate formulas for the moments [Eqs.~(\ref{eq:approx}) and~(\ref{eq:subsystem approx})], we consider the distribution of output when a Scrooge-random state $\ket{\psi}$ is measured in the computational basis.
	This produces a bitstring $x\in\{0,1\}^{\otimes n}$ with probability $p^\psi(x) = |\langle x | \psi \rangle|^2$. 
	Such distributions have been of intense recent interest in quantum sampling advantages and benchmarking protocols~\cite{boixo2018characterizing,arute2019quantum,shaw2024benchmarking,mark2023benchmarking}.
	
	We begin by characterizing fundamental properties of the output distribution, and then discuss practical implications.
	%
	For each bitstring $x$, the probability $p^\psi(x)$ is a random variable over the Scrooge ensemble, with a mean value $\E_\psi [p^\psi(x)] = p^\rho(x) \coloneqq \braket{x|\rho|x}$. Remarkably, this variable does not concentrate, and instead features strong instance-to-instance fluctuations. We can characterize this through moments of $p^\psi(x)$, which can be computed easily following Theorem \ref{thm:approx}. For any ensemble $\mathcal{E}$ that forms an $\epsilon$-approximate Scrooge $k$-design with relative error, we have
	\begin{align}
		\E_{\psi \sim \mathcal{E}}p^\psi(x)^k & \approx k! \, p^\rho(x)^k,
		\label{eq:porterthomas}
	\end{align}
	up to relative error $\mathcal{O}(\epsilon + k^2 2^{-S_\infty(\rho)/2})$.
	The factor of $k!$ arises from the sum over permutations $\pi \in S_k$. 
	This matches the moments of a re-scaled Porter-Thomas distribution, $\text{Pr}(p) = \frac{1}{p^\rho(x)} e^{-p/p^\rho(x)}$, up to small relative error, as conjectured and numerically observed in~\cite{mark2024maximum}. 
	From Eq.~\eqref{eq:porterthomas}, we can also prove that the output distribution $p^\psi(x)$ is on average far from $p^\rho(x)$ in total variation distance (TVD), $\E_{\psi \sim \mathcal{E}}\sum_x|p^\psi(x) - p^\rho(x)| = \Theta(1 - \epsilon)$, whenever $\mathcal{E}$ is an $\epsilon$-approximate Scrooge 4-design~\cite{supp}. 
	%
	
	Intriguingly, we find that the joint moments of the probabilities of two bitstrings, $p^\psi(x)$ and $p^\psi(x’)$, do \emph{not} follow a product of independent Porter-Thomas distributions, and instead feature correlations characterized by the so-called Wishart distribution \cite{wishart1928generalised}. In the SM~\cite{supp}, we describe how these correlations impact the sensitivity of the output distribution to read-out noise.
	
	Whereas the global output distribution exhibits strong fluctuations, Eq.~\eqref{eq:subsystem approx} implies that the marginals of the output distribution, $p^\psi_A(x_A) = \braket{x_A|\tr_B(\dyad{\psi}) |x_A}$ for a subregion $A$,  exhibit very small fluctuations. Specifically,
	\begin{align}
		\E_{\psi \sim \mathcal{E}} p^\psi_A(x_A)^k \approx p^\rho_A(x_A)^k,
		\label{eq:marginal prob conc}
	\end{align}
	up to relative error $\mathcal{O}(\epsilon + k^22^{-S_\infty^*(B)})$,
	where $B$ is the complement of $A$. 
	As a result, we find that the TVD between $p^\psi_A(x_A)$ and the (non-random) background distribution $p^\rho_A(x_A)$ is exponentially small in the size of $B$, assuming $S^*_\infty(B) = \Theta(|B|)$ as before~\cite{supp}.

	\textit{Conditional correlations and hardness of sampling.---}The emergence of Porter-Thomas fluctuations in the output probabilities of Scrooge random states is reminiscent of random quantum circuits. This suggests that Scrooge-random quantum states may be similarly hard to sample from classically. While a full investigation of this question lies beyond the scope of our work, we provide several observations that support this intuition.
	
	First, we show that a particular widely-considered set of classical algorithms~\cite{napp2022efficient,lee2025classical,zhang2025classically, wei2025measurement} fail to simulate sampling from Scrooge-random states.
	These `patching' algorithms proceed by sampling from one subregion of a state at a time, and, for each subregion, conditioning the sampling only on previous outcomes within nearby subregions.
	This partial conditioning makes the algorithm efficient, but is valid only if the conditional mutual information (CMI) of the output distribution is small between faraway subregions.
	In the supplement~\cite{supp}, we show that for any geometry, the average CMI of states drawn from approximate Scrooge 4-designs is nonzero and at least of order unity, thereby invalidating such approaches. Similarly, algorithms that use tensor network ans{\"a}tze to represent the intermediate conditional states (e.g.~sideways-evolving block decimation \cite{napp2022efficient, wei2025measurement}) are also likely to fail due to the large bipartite entanglement in Scrooge-random states \cite{goldstein2016universal}.
	
	While Porter-Thomas fluctuations and long-ranged CMI are associated with hardness of sampling, we also show that these phenomena quickly vanish under small amounts of noise.
	We consider local depolarizing noise of strength $\gamma$ applied to each qubit before measurement.
	This modifies the output distribution to $\tilde{p}^\psi(x) = \braket{x|\mathcal{D}_\gamma(\dyad{\psi})|x}$.
	Using Eq.~\eqref{eq:marginal prob conc}, we prove that the TVD between $\tilde{p}^\psi(x)$
	and the output distribution of the noisy background state, $\tilde{p}^\rho(x) = \braket{x|\mathcal{D}_\gamma[\rho]|x}$, decays exponentially in $\gamma n$ for typical many-body $\rho$~\cite{supp}. Thus, any algorithm that simulates sampling from $\rho$ (which is generally highly mixed) can be used to sample from a very weakly noisy Scrooge-random state $\tilde{p}^\psi(x)$. This reflects a similar phenomenon in random circuits, where small amounts of noise lead to a fast collapse of the distribution to uniform \cite{deshpande2022tight, dalzell2021random}.
	This also supports the use of Porter-Thomas fluctuations as a highly sensitive diagnostic for certifying the absence of noise in a quantum device, as in Ref.~\cite{mark2023benchmarking,shaw2024benchmarking}.
	
	\emph{Constraints on the formation and complexity of Scrooge $k$-designs.}---We now turn to our final application of our approximate formula [Eq.~(\ref{eq:approx})], to provide a lower bound on the number of bits of randomness $m_{\rm bits}$ required to generate any approximate Scrooge $k$-design, i.e.~the logarithm of the number of distinct states in the ensemble. 
	In the supplement, we prove that
	the number of bits of randomness in any state ensemble $\mathcal{E}$ that forms an $\epsilon$-approximate Scrooge $k$-design with additive error must be at least
	\begin{align}
		m_{\rm bits}[\mathcal{E}] \geq k (S_\infty(\rho) - \log k) - \log(1 - \epsilon - 2\delta_{\rho,k}),
		\label{eq:bits}
	\end{align}
	where $\delta_{\rho,k}$ is the approximation error in Theorem \ref{thm:approx}.
	\noindent We emphasize that the above bound must be satisfied even if one only demands \textit{additive} error designs.
	This mirrors lower bounds for Haar state $k$-designs, which require $m_{\text{bits}} = \Omega(nk)$ bits of randomness~\cite{gross2007evenly,brandao2016local}.

	As a simple application of this bound, we consider the time $T$ required for the temporal ensemble to form an approximate Scrooge $k$-design. 
	As previously mentioned, the temporal ensemble has been widely used as a model of chaotic many-body Hamiltonian dynamics, and discussed in relation to Scrooge ensembles. Evolving a state under a fixed Hamiltonian for a random time $t \sim \text{Unif}[0,T]$ generates an ensemble with effectively $m_{\rm bits} = \mathcal{O}(\log T)$ bits of randomness \cite{cui2025random}.  
	Interestingly, as a result of \eqref{eq:bits}, we find that the temporal ensemble cannot form a Scrooge $k$-design unless $T = \Omega(e^{kS_\infty(\rho)}(1-\epsilon)^2/n)$.
	This is exponentially large in $nk$ for typical $\rho$, where $S_\infty(\rho) = \Theta(n)$.
	
	Finally, we use Eq.~\eqref{eq:bits} to prove that states drawn from Scrooge $k$-designs have high complexity, scaling with $nk$ for typical $\rho$. A pure state $\ket{\psi}$ has a $\delta$-robust complexity $\mathcal{C}_\delta(\ket{\psi})$ of at least $r$ if all quantum circuits made up of $r$ two qubit gates fail to prepare $\ket{\psi}$ to an accuracy $\delta$. It has been conjectured that the state of a time-evolving many-body system should generically exhibit growth of complexity over exponentially long time periods \cite{susskind2016computational, brown2016complexity}, well beyond the thermalization time. In settings without conservation laws, the formation of spherical designs has been shown to be a powerful proxy for lower bounding complexity growth \cite{brandao2016local, roberts2017chaos, brandao2019models}; here we prove a similar result for Scrooge-designs.
	
	We show that if $\ket{\psi}$ is drawn from a $\epsilon$-approximate Scrooge $k$-design with \textit{additive} error, then its $\delta$-robust complexity is at least
	\begin{align}
		\mathcal{C}_\delta(\ket{\psi}) = \Omega\left(\frac{k(S_\infty(\rho)-\log k)}{\log[kS_\infty(\rho)]} \right)
	\end{align}
	with high probability (made precise in the supplement \cite{supp}). In the typical case $S_\infty(\rho) = \Theta(n)$, this gives a quasi-linear scaling with $nk$. Thus, the formation of Scrooge $k$-designs with increasing $k$ can be used to witness the growth of complexity.

	\emph{Outlook.}---Our results provide an analytic foundation for future studies of the emergence and applications of Scrooge ensembles in many-body quantum settings. Multiple important questions remain open. What source of randomness seeds the Scrooge ensemble in  physical Hamiltonian time-evolutions? Our lower bounds on temporal ensembles indicate  it cannot be the evolution time itself, in tension with common assumptions. Can we more sharply characterize when one expects the Scrooge ensemble emerge? Our results provide many signatures to test for its behavior. 
	Can the precise characterization of Scrooge-random states provided by our work be applied to quantum learning applications, analogous to recent progress using Haar-random ensembles \cite{huang2020predicting,aharonov2022quantum,huang2021quantum,chen2021exponential}? 
	And finally, the permutation structure approximate formula for the Scrooge ensemble moments bears close resemblance to symmetry-breaking descriptions of measurement-induced phase transitions \cite{bao2020theory, jian2020measurement}.
	Can this perspective be used to understand the emergence of Scrooge-random states, and conversely, can the Scrooge ensemble capture these behaviours?

	\textit{Acknowledgements}---We are grateful to David Gosset, Wai-Keong Mok, Yinchen Liu, John Preskill, and  Shreya Vardhan for discussions on related work. M.M.~acknowledges support from Trinity College, Cambridge.
	T.S. acknowledges support from the U.S. Department of Energy, Office of Science, National Quantum Information Science Research Centers, Quantum Systems Accelerator.
	The Institute for Quantum Information and Matter is an NSF Physics Frontiers Center.

	
	\let\oldaddcontentsline\addcontentsline
	\renewcommand{\addcontentsline}[3]{}
	\bibliography{refs}

\begin{thebibliography}{92}%
\makeatletter
\providecommand \@ifxundefined [1]{%
 \@ifx{#1\undefined}
}%
\providecommand \@ifnum [1]{%
 \ifnum #1\expandafter \@firstoftwo
 \else \expandafter \@secondoftwo
 \fi
}%
\providecommand \@ifx [1]{%
 \ifx #1\expandafter \@firstoftwo
 \else \expandafter \@secondoftwo
 \fi
}%
\providecommand \natexlab [1]{#1}%
\providecommand \enquote  [1]{``#1''}%
\providecommand \bibnamefont  [1]{#1}%
\providecommand \bibfnamefont [1]{#1}%
\providecommand \citenamefont [1]{#1}%
\providecommand \href@noop [0]{\@secondoftwo}%
\providecommand \href [0]{\begingroup \@sanitize@url \@href}%
\providecommand \@href[1]{\@@startlink{#1}\@@href}%
\providecommand \@@href[1]{\endgroup#1\@@endlink}%
\providecommand \@sanitize@url [0]{\catcode `\\12\catcode `\$12\catcode
  `\&12\catcode `\#12\catcode `\^12\catcode `\_12\catcode `\%12\relax}%
\providecommand \@@startlink[1]{}%
\providecommand \@@endlink[0]{}%
\providecommand \url  [0]{\begingroup\@sanitize@url \@url }%
\providecommand \@url [1]{\endgroup\@href {#1}{\urlprefix }}%
\providecommand \urlprefix  [0]{URL }%
\providecommand \Eprint [0]{\href }%
\providecommand \doibase [0]{https://doi.org/}%
\providecommand \selectlanguage [0]{\@gobble}%
\providecommand \bibinfo  [0]{\@secondoftwo}%
\providecommand \bibfield  [0]{\@secondoftwo}%
\providecommand \translation [1]{[#1]}%
\providecommand \BibitemOpen [0]{}%
\providecommand \bibitemStop [0]{}%
\providecommand \bibitemNoStop [0]{.\EOS\space}%
\providecommand \EOS [0]{\spacefactor3000\relax}%
\providecommand \BibitemShut  [1]{\csname bibitem#1\endcsname}%
\let\auto@bib@innerbib\@empty
\bibitem [{\citenamefont {Wigner}(1955)}]{wigner1955characteristic}%
  \BibitemOpen
  \bibfield  {author} {\bibinfo {author} {\bibfnamefont {E.~P.}\ \bibnamefont
  {Wigner}},\ }\bibfield  {title} {\bibinfo {title} {Characteristic vectors of
  bordered matrices with infinite dimensions},\ }\href
  {https://doi.org/10.2307/1970079} {\bibfield  {journal} {\bibinfo  {journal}
  {Annals of Mathematics}\ }\textbf {\bibinfo {volume} {62}},\ \bibinfo {pages}
  {548} (\bibinfo {year} {1955})}\BibitemShut {NoStop}%
\bibitem [{\citenamefont {Page}(1993)}]{Page93}%
  \BibitemOpen
  \bibfield  {author} {\bibinfo {author} {\bibfnamefont {D.~N.}\ \bibnamefont
  {Page}},\ }\bibfield  {title} {\bibinfo {title} {Average entropy of a
  subsystem},\ }\href {https://doi.org/10.1103/PhysRevLett.71.1291} {\bibfield
  {journal} {\bibinfo  {journal} {Phys. Rev. Lett.}\ }\textbf {\bibinfo
  {volume} {71}},\ \bibinfo {pages} {1291} (\bibinfo {year}
  {1993})}\BibitemShut {NoStop}%
\bibitem [{\citenamefont {Hayden}\ and\ \citenamefont
  {Preskill}(2007)}]{hayden2007black}%
  \BibitemOpen
  \bibfield  {author} {\bibinfo {author} {\bibfnamefont {P.}~\bibnamefont
  {Hayden}}\ and\ \bibinfo {author} {\bibfnamefont {J.}~\bibnamefont
  {Preskill}},\ }\bibfield  {title} {\bibinfo {title} {Black holes as mirrors:
  quantum information in random subsystems},\ }\href
  {https://doi.org/10.1088/1126-6708/2007/09/120} {\bibfield  {journal}
  {\bibinfo  {journal} {JHEP}\ }\textbf {\bibinfo {volume} {2007}}\bibinfo
  {number} { (09)},\ \bibinfo {pages} {120}}\BibitemShut {NoStop}%
\bibitem [{\citenamefont {Cotler}\ \emph {et~al.}(2017)\citenamefont {Cotler},
  \citenamefont {Hunter-Jones}, \citenamefont {Liu},\ and\ \citenamefont
  {Yoshida}}]{cotler2017chaos}%
  \BibitemOpen
\bibfield  {number} {  }\bibfield  {author} {\bibinfo {author} {\bibfnamefont
  {J.}~\bibnamefont {Cotler}}, \bibinfo {author} {\bibfnamefont
  {N.}~\bibnamefont {Hunter-Jones}}, \bibinfo {author} {\bibfnamefont
  {J.}~\bibnamefont {Liu}},\ and\ \bibinfo {author} {\bibfnamefont
  {B.}~\bibnamefont {Yoshida}},\ }\bibfield  {title} {\bibinfo {title} {Chaos,
  complexity, and random matrices},\ }\href
  {https://doi.org/10.1007/JHEP11(2017)048} {\bibfield  {journal} {\bibinfo
  {journal} {Journal of High Energy Physics}\ }\textbf {\bibinfo {volume}
  {2017}},\ \bibinfo {pages} {1} (\bibinfo {year} {2017})}\BibitemShut
  {NoStop}%
\bibitem [{\citenamefont {Roberts}\ and\ \citenamefont
  {Yoshida}(2017)}]{roberts2017chaos}%
  \BibitemOpen
  \bibfield  {author} {\bibinfo {author} {\bibfnamefont {D.~A.}\ \bibnamefont
  {Roberts}}\ and\ \bibinfo {author} {\bibfnamefont {B.}~\bibnamefont
  {Yoshida}},\ }\bibfield  {title} {\bibinfo {title} {Chaos and complexity by
  design},\ }\href {https://doi.org/10.1007/JHEP04(2017)121} {\bibfield
  {journal} {\bibinfo  {journal} {Journal of High Energy Physics}\ }\textbf
  {\bibinfo {volume} {2017}},\ \bibinfo {pages} {121} (\bibinfo {year}
  {2017})}\BibitemShut {NoStop}%
\bibitem [{\citenamefont {Popescu}\ \emph {et~al.}(2006)\citenamefont
  {Popescu}, \citenamefont {Short},\ and\ \citenamefont
  {Winter}}]{popescu2006entanglement}%
  \BibitemOpen
  \bibfield  {author} {\bibinfo {author} {\bibfnamefont {S.}~\bibnamefont
  {Popescu}}, \bibinfo {author} {\bibfnamefont {A.~J.}\ \bibnamefont {Short}},\
  and\ \bibinfo {author} {\bibfnamefont {A.}~\bibnamefont {Winter}},\
  }\bibfield  {title} {\bibinfo {title} {Entanglement and the foundations of
  statistical mechanics},\ }\href {https://doi.org/10.1038/nphys444} {\bibfield
   {journal} {\bibinfo  {journal} {Nature Physics}\ }\textbf {\bibinfo {volume}
  {2}},\ \bibinfo {pages} {754} (\bibinfo {year} {2006})}\BibitemShut {NoStop}%
\bibitem [{\citenamefont {Boixo}\ \emph {et~al.}(2018)\citenamefont {Boixo},
  \citenamefont {Isakov}, \citenamefont {Smelyanskiy}, \citenamefont {Babbush},
  \citenamefont {Ding}, \citenamefont {Jiang}, \citenamefont {Bremner},
  \citenamefont {Martinis},\ and\ \citenamefont
  {Neven}}]{boixo2018characterizing}%
  \BibitemOpen
  \bibfield  {author} {\bibinfo {author} {\bibfnamefont {S.}~\bibnamefont
  {Boixo}}, \bibinfo {author} {\bibfnamefont {S.~V.}\ \bibnamefont {Isakov}},
  \bibinfo {author} {\bibfnamefont {V.~N.}\ \bibnamefont {Smelyanskiy}},
  \bibinfo {author} {\bibfnamefont {R.}~\bibnamefont {Babbush}}, \bibinfo
  {author} {\bibfnamefont {N.}~\bibnamefont {Ding}}, \bibinfo {author}
  {\bibfnamefont {Z.}~\bibnamefont {Jiang}}, \bibinfo {author} {\bibfnamefont
  {M.~J.}\ \bibnamefont {Bremner}}, \bibinfo {author} {\bibfnamefont {J.~M.}\
  \bibnamefont {Martinis}},\ and\ \bibinfo {author} {\bibfnamefont
  {H.}~\bibnamefont {Neven}},\ }\bibfield  {title} {\bibinfo {title}
  {Characterizing quantum supremacy in near-term devices},\ }\href
  {https://doi.org/0.1038/s41567-018-0124-x} {\bibfield  {journal} {\bibinfo
  {journal} {Nature Physics}\ }\textbf {\bibinfo {volume} {14}},\ \bibinfo
  {pages} {595} (\bibinfo {year} {2018})}\BibitemShut {NoStop}%
\bibitem [{\citenamefont {Emerson}\ \emph {et~al.}(2003)\citenamefont
  {Emerson}, \citenamefont {Weinstein}, \citenamefont {Saraceno}, \citenamefont
  {Lloyd},\ and\ \citenamefont {Cory}}]{emerson2003pseudo}%
  \BibitemOpen
  \bibfield  {author} {\bibinfo {author} {\bibfnamefont {J.}~\bibnamefont
  {Emerson}}, \bibinfo {author} {\bibfnamefont {Y.~S.}\ \bibnamefont
  {Weinstein}}, \bibinfo {author} {\bibfnamefont {M.}~\bibnamefont {Saraceno}},
  \bibinfo {author} {\bibfnamefont {S.}~\bibnamefont {Lloyd}},\ and\ \bibinfo
  {author} {\bibfnamefont {D.~G.}\ \bibnamefont {Cory}},\ }\bibfield  {title}
  {\bibinfo {title} {Pseudo-random unitary operators for quantum information
  processing},\ }\href {https://doi.org/10.1126/science.1090790} {\bibfield
  {journal} {\bibinfo  {journal} {Science}\ }\textbf {\bibinfo {volume}
  {302}},\ \bibinfo {pages} {2098} (\bibinfo {year} {2003})}\BibitemShut
  {NoStop}%
\bibitem [{\citenamefont {Harrow}\ and\ \citenamefont
  {Low}(2009)}]{harrow2009random}%
  \BibitemOpen
  \bibfield  {author} {\bibinfo {author} {\bibfnamefont {A.~W.}\ \bibnamefont
  {Harrow}}\ and\ \bibinfo {author} {\bibfnamefont {R.~A.}\ \bibnamefont
  {Low}},\ }\bibfield  {title} {\bibinfo {title} {Random quantum circuits are
  approximate 2-designs},\ }\href {https://doi.org/10.1007/s00220-009-0873-6}
  {\bibfield  {journal} {\bibinfo  {journal} {Communications in Mathematical
  Physics}\ }\textbf {\bibinfo {volume} {291}},\ \bibinfo {pages} {257}
  (\bibinfo {year} {2009})}\BibitemShut {NoStop}%
\bibitem [{\citenamefont {Brown}\ and\ \citenamefont
  {Fawzi}(2012)}]{brown2012scrambling}%
  \BibitemOpen
  \bibfield  {author} {\bibinfo {author} {\bibfnamefont {W.}~\bibnamefont
  {Brown}}\ and\ \bibinfo {author} {\bibfnamefont {O.}~\bibnamefont {Fawzi}},\
  }\href@noop {} {\bibinfo {title} {Scrambling speed of random quantum
  circuits}} (\bibinfo {year} {2012}),\ \Eprint
  {https://arxiv.org/abs/1210.6644} {arXiv:1210.6644} \BibitemShut {NoStop}%
\bibitem [{\citenamefont {Brandao}\ \emph {et~al.}(2016)\citenamefont
  {Brandao}, \citenamefont {Harrow},\ and\ \citenamefont
  {Horodecki}}]{brandao2016local}%
  \BibitemOpen
  \bibfield  {author} {\bibinfo {author} {\bibfnamefont {F.~G.}\ \bibnamefont
  {Brandao}}, \bibinfo {author} {\bibfnamefont {A.~W.}\ \bibnamefont
  {Harrow}},\ and\ \bibinfo {author} {\bibfnamefont {M.}~\bibnamefont
  {Horodecki}},\ }\bibfield  {title} {\bibinfo {title} {Local random quantum
  circuits are approximate polynomial-designs},\ }\href
  {https://doi.org/10.1007/s00220-016-2706-8} {\bibfield  {journal} {\bibinfo
  {journal} {Communications in Mathematical Physics}\ }\textbf {\bibinfo
  {volume} {346}},\ \bibinfo {pages} {397} (\bibinfo {year}
  {2016})}\BibitemShut {NoStop}%
\bibitem [{\citenamefont {Haferkamp}(2022)}]{haferkamp2022random}%
  \BibitemOpen
  \bibfield  {author} {\bibinfo {author} {\bibfnamefont {J.}~\bibnamefont
  {Haferkamp}},\ }\bibfield  {title} {\bibinfo {title} {Random quantum circuits
  are approximate unitary $t$-designs in depth ${O}(t^{5+o(1)})$},\ }\href
  {https://doi.org/10.22331/q-2022-09-08-795} {\bibfield  {journal} {\bibinfo
  {journal} {Quantum}\ }\textbf {\bibinfo {volume} {6}},\ \bibinfo {pages}
  {795} (\bibinfo {year} {2022})}\BibitemShut {NoStop}%
\bibitem [{\citenamefont {Chen}\ \emph {et~al.}(2024)\citenamefont {Chen},
  \citenamefont {Haah}, \citenamefont {Haferkamp}, \citenamefont {Liu},
  \citenamefont {Metger},\ and\ \citenamefont
  {Tan}}]{chen2024incompressibility}%
  \BibitemOpen
  \bibfield  {author} {\bibinfo {author} {\bibfnamefont {C.-F.}\ \bibnamefont
  {Chen}}, \bibinfo {author} {\bibfnamefont {J.}~\bibnamefont {Haah}}, \bibinfo
  {author} {\bibfnamefont {J.}~\bibnamefont {Haferkamp}}, \bibinfo {author}
  {\bibfnamefont {Y.}~\bibnamefont {Liu}}, \bibinfo {author} {\bibfnamefont
  {T.}~\bibnamefont {Metger}},\ and\ \bibinfo {author} {\bibfnamefont
  {X.}~\bibnamefont {Tan}},\ }\href@noop {} {\bibinfo {title}
  {Incompressibility and spectral gaps of random circuits}} (\bibinfo {year}
  {2024}),\ \Eprint {https://arxiv.org/abs/2406.07478} {arXiv:2406.07478}
  \BibitemShut {NoStop}%
\bibitem [{\citenamefont {LaRacuente}\ and\ \citenamefont
  {Leditzky}(2024)}]{laracuente2024approximate}%
  \BibitemOpen
  \bibfield  {author} {\bibinfo {author} {\bibfnamefont {N.}~\bibnamefont
  {LaRacuente}}\ and\ \bibinfo {author} {\bibfnamefont {F.}~\bibnamefont
  {Leditzky}},\ }\href@noop {} {\bibinfo {title} {Approximate unitary $ k
  $-designs from shallow, low-communication circuits}} (\bibinfo {year}
  {2024}),\ \Eprint {https://arxiv.org/abs/2407.07876} {arXiv:2407.07876}
  \BibitemShut {NoStop}%
\bibitem [{\citenamefont {Schuster}\ \emph
  {et~al.}(2025{\natexlab{a}})\citenamefont {Schuster}, \citenamefont
  {Haferkamp},\ and\ \citenamefont {Huang}}]{schuster2024random}%
  \BibitemOpen
  \bibfield  {author} {\bibinfo {author} {\bibfnamefont {T.}~\bibnamefont
  {Schuster}}, \bibinfo {author} {\bibfnamefont {J.}~\bibnamefont
  {Haferkamp}},\ and\ \bibinfo {author} {\bibfnamefont {H.-Y.}\ \bibnamefont
  {Huang}},\ }\bibfield  {title} {\bibinfo {title} {Random unitaries in
  extremely low depth},\ }\href {https://doi.org/10.1126/science.adv8590}
  {\bibfield  {journal} {\bibinfo  {journal} {Science}\ }\textbf {\bibinfo
  {volume} {389}},\ \bibinfo {pages} {92} (\bibinfo {year}
  {2025}{\natexlab{a}})}\BibitemShut {NoStop}%
\bibitem [{\citenamefont {Laracuente}(2025)}]{laracuente2025quantum}%
  \BibitemOpen
  \bibfield  {author} {\bibinfo {author} {\bibfnamefont {N.}~\bibnamefont
  {Laracuente}},\ }\bibfield  {title} {\bibinfo {title} {Quantum relative
  entropy decay composition yields shallow, unstructured k-designs},\
  }\href@noop {} {\bibfield  {journal} {\bibinfo  {journal} {arXiv preprint
  arXiv:2510.08537}\ } (\bibinfo {year} {2025})}\BibitemShut {NoStop}%
\bibitem [{\citenamefont {Nahum}\ \emph {et~al.}(2017)\citenamefont {Nahum},
  \citenamefont {Ruhman}, \citenamefont {Vijay},\ and\ \citenamefont
  {Haah}}]{nahum2017entgrowth}%
  \BibitemOpen
  \bibfield  {author} {\bibinfo {author} {\bibfnamefont {A.}~\bibnamefont
  {Nahum}}, \bibinfo {author} {\bibfnamefont {J.}~\bibnamefont {Ruhman}},
  \bibinfo {author} {\bibfnamefont {S.}~\bibnamefont {Vijay}},\ and\ \bibinfo
  {author} {\bibfnamefont {J.}~\bibnamefont {Haah}},\ }\bibfield  {title}
  {\bibinfo {title} {{Quantum Entanglement Growth Under Random Unitary
  Dynamics}},\ }\href {https://doi.org/10.1103/PhysRevX.7.031016} {\bibfield
  {journal} {\bibinfo  {journal} {Phys. Rev. X}\ }\textbf {\bibinfo {volume}
  {7}},\ \bibinfo {pages} {031016} (\bibinfo {year} {2017})}\BibitemShut
  {NoStop}%
\bibitem [{\citenamefont {Nahum}\ \emph {et~al.}(2018)\citenamefont {Nahum},
  \citenamefont {Vijay},\ and\ \citenamefont {Haah}}]{nahum2018operator}%
  \BibitemOpen
  \bibfield  {author} {\bibinfo {author} {\bibfnamefont {A.}~\bibnamefont
  {Nahum}}, \bibinfo {author} {\bibfnamefont {S.}~\bibnamefont {Vijay}},\ and\
  \bibinfo {author} {\bibfnamefont {J.}~\bibnamefont {Haah}},\ }\bibfield
  {title} {\bibinfo {title} {Operator spreading in random unitary circuits},\
  }\href {https://doi.org/10.1103/PhysRevX.8.021014} {\bibfield  {journal}
  {\bibinfo  {journal} {Physical Review X}\ }\textbf {\bibinfo {volume} {8}},\
  \bibinfo {pages} {021014} (\bibinfo {year} {2018})}\BibitemShut {NoStop}%
\bibitem [{\citenamefont {von Keyserlingk}\ \emph {et~al.}(2018)\citenamefont
  {von Keyserlingk}, \citenamefont {Rakovszky}, \citenamefont {Pollmann},\ and\
  \citenamefont {Sondhi}}]{von2018operator}%
  \BibitemOpen
  \bibfield  {author} {\bibinfo {author} {\bibfnamefont {C.~W.}\ \bibnamefont
  {von Keyserlingk}}, \bibinfo {author} {\bibfnamefont {T.}~\bibnamefont
  {Rakovszky}}, \bibinfo {author} {\bibfnamefont {F.}~\bibnamefont
  {Pollmann}},\ and\ \bibinfo {author} {\bibfnamefont {S.~L.}\ \bibnamefont
  {Sondhi}},\ }\bibfield  {title} {\bibinfo {title} {Operator hydrodynamics,
  otocs, and entanglement growth in systems without conservation laws},\ }\href
  {https://doi.org/10.1103/PhysRevX.8.021013} {\bibfield  {journal} {\bibinfo
  {journal} {Physical Review X}\ }\textbf {\bibinfo {volume} {8}},\ \bibinfo
  {pages} {021013} (\bibinfo {year} {2018})}\BibitemShut {NoStop}%
\bibitem [{\citenamefont {Emerson}\ \emph {et~al.}(2005)\citenamefont
  {Emerson}, \citenamefont {Alicki},\ and\ \citenamefont
  {{\.Z}yczkowski}}]{emerson2005scalable}%
  \BibitemOpen
  \bibfield  {author} {\bibinfo {author} {\bibfnamefont {J.}~\bibnamefont
  {Emerson}}, \bibinfo {author} {\bibfnamefont {R.}~\bibnamefont {Alicki}},\
  and\ \bibinfo {author} {\bibfnamefont {K.}~\bibnamefont {{\.Z}yczkowski}},\
  }\bibfield  {title} {\bibinfo {title} {Scalable noise estimation with random
  unitary operators},\ }\href {https://doi.org/10.1088/1464-4266/7/10/021}
  {\bibfield  {journal} {\bibinfo  {journal} {Journal of Optics B: Quantum and
  Semiclassical Optics}\ }\textbf {\bibinfo {volume} {7}},\ \bibinfo {pages}
  {S347} (\bibinfo {year} {2005})}\BibitemShut {NoStop}%
\bibitem [{\citenamefont {Huang}\ \emph {et~al.}(2020)\citenamefont {Huang},
  \citenamefont {Kueng},\ and\ \citenamefont {Preskill}}]{huang2020predicting}%
  \BibitemOpen
  \bibfield  {author} {\bibinfo {author} {\bibfnamefont {H.-Y.}\ \bibnamefont
  {Huang}}, \bibinfo {author} {\bibfnamefont {R.}~\bibnamefont {Kueng}},\ and\
  \bibinfo {author} {\bibfnamefont {J.}~\bibnamefont {Preskill}},\ }\bibfield
  {title} {\bibinfo {title} {Predicting many properties of a quantum system
  from very few measurements},\ }\href
  {https://doi.org/10.1038/s41567-020-0932-7} {\bibfield  {journal} {\bibinfo
  {journal} {Nature Physics}\ }\textbf {\bibinfo {volume} {16}},\ \bibinfo
  {pages} {1050} (\bibinfo {year} {2020})}\BibitemShut {NoStop}%
\bibitem [{\citenamefont {Aharonov}\ \emph {et~al.}(2022)\citenamefont
  {Aharonov}, \citenamefont {Cotler},\ and\ \citenamefont
  {Qi}}]{aharonov2022quantum}%
  \BibitemOpen
  \bibfield  {author} {\bibinfo {author} {\bibfnamefont {D.}~\bibnamefont
  {Aharonov}}, \bibinfo {author} {\bibfnamefont {J.}~\bibnamefont {Cotler}},\
  and\ \bibinfo {author} {\bibfnamefont {X.-L.}\ \bibnamefont {Qi}},\
  }\bibfield  {title} {\bibinfo {title} {Quantum algorithmic measurement},\
  }\href {https://doi.org/10.1038/s41467-021-27922-0} {\bibfield  {journal}
  {\bibinfo  {journal} {Nature {C}ommunications}\ }\textbf {\bibinfo {volume}
  {13}},\ \bibinfo {pages} {1} (\bibinfo {year} {2022})}\BibitemShut {NoStop}%
\bibitem [{\citenamefont {Huang}\ \emph {et~al.}(2022)\citenamefont {Huang},
  \citenamefont {Broughton}, \citenamefont {Cotler}, \citenamefont {Chen},
  \citenamefont {Li}, \citenamefont {Mohseni}, \citenamefont {Neven},
  \citenamefont {Babbush}, \citenamefont {Kueng}, \citenamefont {Preskill}
  \emph {et~al.}}]{huang2021quantum}%
  \BibitemOpen
  \bibfield  {author} {\bibinfo {author} {\bibfnamefont {H.-Y.}\ \bibnamefont
  {Huang}}, \bibinfo {author} {\bibfnamefont {M.}~\bibnamefont {Broughton}},
  \bibinfo {author} {\bibfnamefont {J.}~\bibnamefont {Cotler}}, \bibinfo
  {author} {\bibfnamefont {S.}~\bibnamefont {Chen}}, \bibinfo {author}
  {\bibfnamefont {J.}~\bibnamefont {Li}}, \bibinfo {author} {\bibfnamefont
  {M.}~\bibnamefont {Mohseni}}, \bibinfo {author} {\bibfnamefont
  {H.}~\bibnamefont {Neven}}, \bibinfo {author} {\bibfnamefont
  {R.}~\bibnamefont {Babbush}}, \bibinfo {author} {\bibfnamefont
  {R.}~\bibnamefont {Kueng}}, \bibinfo {author} {\bibfnamefont
  {J.}~\bibnamefont {Preskill}}, \emph {et~al.},\ }\bibfield  {title} {\bibinfo
  {title} {Quantum advantage in learning from experiments},\ }\href
  {https://doi.org/10.1126/science.abn7293} {\bibfield  {journal} {\bibinfo
  {journal} {Science}\ }\textbf {\bibinfo {volume} {376}},\ \bibinfo {pages}
  {1182} (\bibinfo {year} {2022})}\BibitemShut {NoStop}%
\bibitem [{\citenamefont {Chen}\ \emph {et~al.}(2022)\citenamefont {Chen},
  \citenamefont {Cotler}, \citenamefont {Huang},\ and\ \citenamefont
  {Li}}]{chen2021exponential}%
  \BibitemOpen
  \bibfield  {author} {\bibinfo {author} {\bibfnamefont {S.}~\bibnamefont
  {Chen}}, \bibinfo {author} {\bibfnamefont {J.}~\bibnamefont {Cotler}},
  \bibinfo {author} {\bibfnamefont {H.-Y.}\ \bibnamefont {Huang}},\ and\
  \bibinfo {author} {\bibfnamefont {J.}~\bibnamefont {Li}},\ }\bibfield
  {title} {\bibinfo {title} {Exponential separations between learning with and
  without quantum memory},\ }in\ \href
  {https://doi.org/10.1109/FOCS52979.2021.00063} {\emph {\bibinfo {booktitle}
  {2021 IEEE 62nd Annual Symposium on Foundations of Computer Science
  (FOCS)}}}\ (\bibinfo {year} {2022})\ pp.\ \bibinfo {pages}
  {574--585}\BibitemShut {NoStop}%
\bibitem [{\citenamefont {Ji}\ \emph {et~al.}(2018)\citenamefont {Ji},
  \citenamefont {Liu},\ and\ \citenamefont {Song}}]{ji2018pseudorandom}%
  \BibitemOpen
  \bibfield  {author} {\bibinfo {author} {\bibfnamefont {Z.}~\bibnamefont
  {Ji}}, \bibinfo {author} {\bibfnamefont {Y.-K.}\ \bibnamefont {Liu}},\ and\
  \bibinfo {author} {\bibfnamefont {F.}~\bibnamefont {Song}},\ }\bibfield
  {title} {\bibinfo {title} {Pseudorandom quantum states},\ }in\ \href
  {https://doi.org/10.1007/978-3-319-96878-0_5} {\emph {\bibinfo {booktitle}
  {Advances in Cryptology--CRYPTO 2018: 38th Annual International Cryptology
  Conference}}}\ (\bibinfo {organization} {Springer},\ \bibinfo {year} {2018})\
  pp.\ \bibinfo {pages} {126--152}\BibitemShut {NoStop}%
\bibitem [{\citenamefont {Ma}\ and\ \citenamefont
  {Huang}(2025)}]{ma2024construct}%
  \BibitemOpen
  \bibfield  {author} {\bibinfo {author} {\bibfnamefont {F.}~\bibnamefont
  {Ma}}\ and\ \bibinfo {author} {\bibfnamefont {H.-Y.}\ \bibnamefont {Huang}},\
  }\bibfield  {title} {\bibinfo {title} {How to construct random unitaries},\
  }in\ \href {https://doi.org/10.1145/3717823.3718254} {\emph {\bibinfo
  {booktitle} {Proceedings of the 57th Annual ACM Symposium on Theory of
  Computing}}}\ (\bibinfo {year} {2025})\ pp.\ \bibinfo {pages}
  {806--809}\BibitemShut {NoStop}%
\bibitem [{\citenamefont {Schuster}\ \emph
  {et~al.}(2025{\natexlab{b}})\citenamefont {Schuster}, \citenamefont {Ma},
  \citenamefont {Lombardi}, \citenamefont {Brandao},\ and\ \citenamefont
  {Huang}}]{schuster2025strong}%
  \BibitemOpen
  \bibfield  {author} {\bibinfo {author} {\bibfnamefont {T.}~\bibnamefont
  {Schuster}}, \bibinfo {author} {\bibfnamefont {F.}~\bibnamefont {Ma}},
  \bibinfo {author} {\bibfnamefont {A.}~\bibnamefont {Lombardi}}, \bibinfo
  {author} {\bibfnamefont {F.}~\bibnamefont {Brandao}},\ and\ \bibinfo {author}
  {\bibfnamefont {H.-Y.}\ \bibnamefont {Huang}},\ }\href@noop {} {\bibinfo
  {title} {Strong random unitaries and fast scrambling}} (\bibinfo {year}
  {2025}{\natexlab{b}}),\ \Eprint {https://arxiv.org/abs/2509.26310}
  {arXiv:2509.26310} \BibitemShut {NoStop}%
\bibitem [{\citenamefont {Arute}\ \emph {et~al.}(2019)\citenamefont {Arute},
  \citenamefont {Arya}, \citenamefont {Babbush}, \citenamefont {Bacon},
  \citenamefont {Bardin}, \citenamefont {Barends}, \citenamefont {Biswas},
  \citenamefont {Boixo}, \citenamefont {Brandao}, \citenamefont {Buell} \emph
  {et~al.}}]{arute2019quantum}%
  \BibitemOpen
  \bibfield  {author} {\bibinfo {author} {\bibfnamefont {F.}~\bibnamefont
  {Arute}}, \bibinfo {author} {\bibfnamefont {K.}~\bibnamefont {Arya}},
  \bibinfo {author} {\bibfnamefont {R.}~\bibnamefont {Babbush}}, \bibinfo
  {author} {\bibfnamefont {D.}~\bibnamefont {Bacon}}, \bibinfo {author}
  {\bibfnamefont {J.~C.}\ \bibnamefont {Bardin}}, \bibinfo {author}
  {\bibfnamefont {R.}~\bibnamefont {Barends}}, \bibinfo {author} {\bibfnamefont
  {R.}~\bibnamefont {Biswas}}, \bibinfo {author} {\bibfnamefont
  {S.}~\bibnamefont {Boixo}}, \bibinfo {author} {\bibfnamefont {F.~G.}\
  \bibnamefont {Brandao}}, \bibinfo {author} {\bibfnamefont {D.~A.}\
  \bibnamefont {Buell}}, \emph {et~al.},\ }\bibfield  {title} {\bibinfo {title}
  {Quantum supremacy using a programmable superconducting processor},\ }\href
  {https://doi.org/10.1038/s41586-019-1666-5} {\bibfield  {journal} {\bibinfo
  {journal} {Nature}\ }\textbf {\bibinfo {volume} {574}},\ \bibinfo {pages}
  {505} (\bibinfo {year} {2019})}\BibitemShut {NoStop}%
\bibitem [{\citenamefont {Movassagh}(2023)}]{movassagh2023hardness}%
  \BibitemOpen
  \bibfield  {author} {\bibinfo {author} {\bibfnamefont {R.}~\bibnamefont
  {Movassagh}},\ }\bibfield  {title} {\bibinfo {title} {The hardness of random
  quantum circuits},\ }\href {https://doi.org/10.1038/s41567-023-02131-2}
  {\bibfield  {journal} {\bibinfo  {journal} {Nature Physics}\ }\textbf
  {\bibinfo {volume} {19}},\ \bibinfo {pages} {1719} (\bibinfo {year}
  {2023})}\BibitemShut {NoStop}%
\bibitem [{\citenamefont {Bouland}\ \emph {et~al.}(2019)\citenamefont
  {Bouland}, \citenamefont {Fefferman}, \citenamefont {Nirkhe},\ and\
  \citenamefont {Vazirani}}]{bouland2019complexity}%
  \BibitemOpen
  \bibfield  {author} {\bibinfo {author} {\bibfnamefont {A.}~\bibnamefont
  {Bouland}}, \bibinfo {author} {\bibfnamefont {B.}~\bibnamefont {Fefferman}},
  \bibinfo {author} {\bibfnamefont {C.}~\bibnamefont {Nirkhe}},\ and\ \bibinfo
  {author} {\bibfnamefont {U.}~\bibnamefont {Vazirani}},\ }\bibfield  {title}
  {\bibinfo {title} {On the complexity and verification of quantum random
  circuit sampling},\ }\href {https://doi.org/10.1038/s41567-018-0318-2}
  {\bibfield  {journal} {\bibinfo  {journal} {Nature Physics}\ }\textbf
  {\bibinfo {volume} {15}},\ \bibinfo {pages} {159} (\bibinfo {year}
  {2019})}\BibitemShut {NoStop}%
\bibitem [{\citenamefont {Srednicki}(1994)}]{srednicki1994chaos}%
  \BibitemOpen
  \bibfield  {author} {\bibinfo {author} {\bibfnamefont {M.}~\bibnamefont
  {Srednicki}},\ }\bibfield  {title} {\bibinfo {title} {Chaos and quantum
  thermalization},\ }\href {https://doi.org/10.1103/PhysRevE.50.888} {\bibfield
   {journal} {\bibinfo  {journal} {Phys. Rev. E}\ }\textbf {\bibinfo {volume}
  {50}},\ \bibinfo {pages} {888} (\bibinfo {year} {1994})}\BibitemShut
  {NoStop}%
\bibitem [{\citenamefont {Rigol}\ \emph {et~al.}(2008)\citenamefont {Rigol},
  \citenamefont {Dunjko},\ and\ \citenamefont
  {Olshanii}}]{rigol2008thermalization}%
  \BibitemOpen
  \bibfield  {author} {\bibinfo {author} {\bibfnamefont {M.}~\bibnamefont
  {Rigol}}, \bibinfo {author} {\bibfnamefont {V.}~\bibnamefont {Dunjko}},\ and\
  \bibinfo {author} {\bibfnamefont {M.}~\bibnamefont {Olshanii}},\ }\bibfield
  {title} {\bibinfo {title} {Thermalization and its mechanism for generic
  isolated quantum systems},\ }\href {https://doi.org/10.1038/nature06838}
  {\bibfield  {journal} {\bibinfo  {journal} {Nature}\ }\textbf {\bibinfo
  {volume} {452}},\ \bibinfo {pages} {854} (\bibinfo {year}
  {2008})}\BibitemShut {NoStop}%
\bibitem [{\citenamefont {Rakovszky}\ \emph {et~al.}(2018)\citenamefont
  {Rakovszky}, \citenamefont {Pollmann},\ and\ \citenamefont {von
  Keyserlingk}}]{rakovszky2018diffusive}%
  \BibitemOpen
  \bibfield  {author} {\bibinfo {author} {\bibfnamefont {T.}~\bibnamefont
  {Rakovszky}}, \bibinfo {author} {\bibfnamefont {F.}~\bibnamefont
  {Pollmann}},\ and\ \bibinfo {author} {\bibfnamefont {C.~W.}\ \bibnamefont
  {von Keyserlingk}},\ }\bibfield  {title} {\bibinfo {title} {{Diffusive
  hydrodynamics of out-of-time-ordered correlators with charge conservation}},\
  }\href {https://doi.org/10.1103/PhysRevX.8.031058} {\bibfield  {journal}
  {\bibinfo  {journal} {Phys. Rev. X}\ }\textbf {\bibinfo {volume} {8}},\
  \bibinfo {pages} {031058} (\bibinfo {year} {2018})}\BibitemShut {NoStop}%
\bibitem [{\citenamefont {Khemani}\ \emph {et~al.}(2018)\citenamefont
  {Khemani}, \citenamefont {Vishwanath},\ and\ \citenamefont
  {Huse}}]{khemani2018operator}%
  \BibitemOpen
  \bibfield  {author} {\bibinfo {author} {\bibfnamefont {V.}~\bibnamefont
  {Khemani}}, \bibinfo {author} {\bibfnamefont {A.}~\bibnamefont
  {Vishwanath}},\ and\ \bibinfo {author} {\bibfnamefont {D.~A.}\ \bibnamefont
  {Huse}},\ }\bibfield  {title} {\bibinfo {title} {{Operator spreading and the
  emergence of dissipation in unitary dynamics with conservation laws}},\
  }\href {https://doi.org/10.1103/PhysRevX.8.031057} {\bibfield  {journal}
  {\bibinfo  {journal} {Phys. Rev. X}\ }\textbf {\bibinfo {volume} {8}},\
  \bibinfo {pages} {031057} (\bibinfo {year} {2018})}\BibitemShut {NoStop}%
\bibitem [{\citenamefont {Cui}\ \emph {et~al.}(2025)\citenamefont {Cui},
  \citenamefont {Schuster}, \citenamefont {Mao}, \citenamefont {Huang},\ and\
  \citenamefont {Brandao}}]{cui2025random}%
  \BibitemOpen
  \bibfield  {author} {\bibinfo {author} {\bibfnamefont {L.}~\bibnamefont
  {Cui}}, \bibinfo {author} {\bibfnamefont {T.}~\bibnamefont {Schuster}},
  \bibinfo {author} {\bibfnamefont {L.}~\bibnamefont {Mao}}, \bibinfo {author}
  {\bibfnamefont {H.-Y.}\ \bibnamefont {Huang}},\ and\ \bibinfo {author}
  {\bibfnamefont {F.}~\bibnamefont {Brandao}},\ }\href@noop {} {\bibinfo
  {title} {Random unitaries from hamiltonian dynamics}} (\bibinfo {year}
  {2025}),\ \Eprint {https://arxiv.org/abs/2510.08434} {arXiv:2510.08434}
  \BibitemShut {NoStop}%
\bibitem [{\citenamefont {Goldstein}\ \emph {et~al.}(2006)\citenamefont
  {Goldstein}, \citenamefont {Lebowitz}, \citenamefont {Tumulka},\ and\
  \citenamefont {Zanghi}}]{goldstein2006distribution}%
  \BibitemOpen
  \bibfield  {author} {\bibinfo {author} {\bibfnamefont {S.}~\bibnamefont
  {Goldstein}}, \bibinfo {author} {\bibfnamefont {J.~L.}\ \bibnamefont
  {Lebowitz}}, \bibinfo {author} {\bibfnamefont {R.}~\bibnamefont {Tumulka}},\
  and\ \bibinfo {author} {\bibfnamefont {N.}~\bibnamefont {Zanghi}},\
  }\bibfield  {title} {\bibinfo {title} {On the distribution of the wave
  function for systems in thermal equilibrium},\ }\href
  {https://doi.org/10.1007/s10955-006-9210-z} {\bibfield  {journal} {\bibinfo
  {journal} {Journal of statistical physics}\ }\textbf {\bibinfo {volume}
  {125}},\ \bibinfo {pages} {1193} (\bibinfo {year} {2006})}\BibitemShut
  {NoStop}%
\bibitem [{\citenamefont {Goldstein}\ \emph {et~al.}(2016)\citenamefont
  {Goldstein}, \citenamefont {Lebowitz}, \citenamefont {Mastrodonato},
  \citenamefont {Tumulka},\ and\ \citenamefont
  {Zangh{\`\i}}}]{goldstein2016universal}%
  \BibitemOpen
  \bibfield  {author} {\bibinfo {author} {\bibfnamefont {S.}~\bibnamefont
  {Goldstein}}, \bibinfo {author} {\bibfnamefont {J.~L.}\ \bibnamefont
  {Lebowitz}}, \bibinfo {author} {\bibfnamefont {C.}~\bibnamefont
  {Mastrodonato}}, \bibinfo {author} {\bibfnamefont {R.}~\bibnamefont
  {Tumulka}},\ and\ \bibinfo {author} {\bibfnamefont {N.}~\bibnamefont
  {Zangh{\`\i}}},\ }\bibfield  {title} {\bibinfo {title} {Universal probability
  distribution for the wave function of a quantum system entangled with its
  environment},\ }\href {https://doi.org/10.1007/s00220-015-2536-0} {\bibfield
  {journal} {\bibinfo  {journal} {Communications in Mathematical Physics}\
  }\textbf {\bibinfo {volume} {342}},\ \bibinfo {pages} {965} (\bibinfo {year}
  {2016})}\BibitemShut {NoStop}%
\bibitem [{\citenamefont {Cotler}\ \emph {et~al.}()\citenamefont {Cotler},
  \citenamefont {Mark}, \citenamefont {Huang}, \citenamefont {Hernández},
  \citenamefont {Choi}, \citenamefont {Shaw}, \citenamefont {Endres},\ and\
  \citenamefont {Choi}}]{cotler2021emergent}%
  \BibitemOpen
  \bibfield  {author} {\bibinfo {author} {\bibfnamefont {J.~S.}\ \bibnamefont
  {Cotler}}, \bibinfo {author} {\bibfnamefont {D.~K.}\ \bibnamefont {Mark}},
  \bibinfo {author} {\bibfnamefont {H.-Y.}\ \bibnamefont {Huang}}, \bibinfo
  {author} {\bibfnamefont {F.}~\bibnamefont {Hernández}}, \bibinfo {author}
  {\bibfnamefont {J.}~\bibnamefont {Choi}}, \bibinfo {author} {\bibfnamefont
  {A.~L.}\ \bibnamefont {Shaw}}, \bibinfo {author} {\bibfnamefont
  {M.}~\bibnamefont {Endres}},\ and\ \bibinfo {author} {\bibfnamefont
  {S.}~\bibnamefont {Choi}},\ }\bibfield  {title} {\bibinfo {title} {Emergent
  quantum state designs from individual many-body wave functions},\ }\href
  {https://doi.org/10.1103/prxquantum.4.010311} {\ \textbf {\bibinfo {volume}
  {4}},\ \bibinfo {pages} {010311}}\BibitemShut {NoStop}%
\bibitem [{\citenamefont {Mark}\ \emph {et~al.}(2024)\citenamefont {Mark},
  \citenamefont {Surace}, \citenamefont {Elben}, \citenamefont {Shaw},
  \citenamefont {Choi}, \citenamefont {Refael}, \citenamefont {Endres},\ and\
  \citenamefont {Choi}}]{mark2024maximum}%
  \BibitemOpen
  \bibfield  {author} {\bibinfo {author} {\bibfnamefont {D.~K.}\ \bibnamefont
  {Mark}}, \bibinfo {author} {\bibfnamefont {F.}~\bibnamefont {Surace}},
  \bibinfo {author} {\bibfnamefont {A.}~\bibnamefont {Elben}}, \bibinfo
  {author} {\bibfnamefont {A.~L.}\ \bibnamefont {Shaw}}, \bibinfo {author}
  {\bibfnamefont {J.}~\bibnamefont {Choi}}, \bibinfo {author} {\bibfnamefont
  {G.}~\bibnamefont {Refael}}, \bibinfo {author} {\bibfnamefont
  {M.}~\bibnamefont {Endres}},\ and\ \bibinfo {author} {\bibfnamefont
  {S.}~\bibnamefont {Choi}},\ }\bibfield  {title} {\bibinfo {title} {Maximum
  entropy principle in deep thermalization and in hilbert-space ergodicity},\
  }\href {https://doi.org/10.1103/physrevx.14.041051} {\bibfield  {journal}
  {\bibinfo  {journal} {Physical Review X}\ }\textbf {\bibinfo {volume} {14}},\
  \bibinfo {pages} {041051} (\bibinfo {year} {2024})}\BibitemShut {NoStop}%
\bibitem [{\citenamefont {Jozsa}\ \emph {et~al.}(1994)\citenamefont {Jozsa},
  \citenamefont {Robb},\ and\ \citenamefont {Wootters}}]{jozsa1994lower}%
  \BibitemOpen
  \bibfield  {author} {\bibinfo {author} {\bibfnamefont {R.}~\bibnamefont
  {Jozsa}}, \bibinfo {author} {\bibfnamefont {D.}~\bibnamefont {Robb}},\ and\
  \bibinfo {author} {\bibfnamefont {W.~K.}\ \bibnamefont {Wootters}},\
  }\bibfield  {title} {\bibinfo {title} {Lower bound for accessible information
  in quantum mechanics},\ }\href {https://doi.org/10.1103/physreva.49.668}
  {\bibfield  {journal} {\bibinfo  {journal} {Physical Review A}\ }\textbf
  {\bibinfo {volume} {49}},\ \bibinfo {pages} {668} (\bibinfo {year}
  {1994})}\BibitemShut {NoStop}%
\bibitem [{\citenamefont {Hayden}\ \emph {et~al.}(2006)\citenamefont {Hayden},
  \citenamefont {Leung},\ and\ \citenamefont {Winter}}]{HaydenLeungWinter2006}%
  \BibitemOpen
  \bibfield  {author} {\bibinfo {author} {\bibfnamefont {P.}~\bibnamefont
  {Hayden}}, \bibinfo {author} {\bibfnamefont {D.~W.}\ \bibnamefont {Leung}},\
  and\ \bibinfo {author} {\bibfnamefont {A.}~\bibnamefont {Winter}},\
  }\bibfield  {title} {\bibinfo {title} {Aspects of generic entanglement},\
  }\href {https://doi.org/10.1007/s00220-006-1535-6} {\bibfield  {journal}
  {\bibinfo  {journal} {Communications in Mathematical Physics}\ }\textbf
  {\bibinfo {volume} {265}},\ \bibinfo {pages} {95} (\bibinfo {year}
  {2006})}\BibitemShut {NoStop}%
\bibitem [{\citenamefont {Cotler}\ \emph {et~al.}(2022)\citenamefont {Cotler},
  \citenamefont {Hunter-Jones},\ and\ \citenamefont
  {Ranard}}]{cotler2022fluctuations}%
  \BibitemOpen
  \bibfield  {author} {\bibinfo {author} {\bibfnamefont {J.}~\bibnamefont
  {Cotler}}, \bibinfo {author} {\bibfnamefont {N.}~\bibnamefont
  {Hunter-Jones}},\ and\ \bibinfo {author} {\bibfnamefont {D.}~\bibnamefont
  {Ranard}},\ }\bibfield  {title} {\bibinfo {title} {Fluctuations of subsystem
  entropies at late times},\ }\href
  {https://doi.org/10.1103/PhysRevA.105.022416} {\bibfield  {journal} {\bibinfo
   {journal} {Physical Review A}\ }\textbf {\bibinfo {volume} {105}},\ \bibinfo
  {pages} {022416} (\bibinfo {year} {2022})}\BibitemShut {NoStop}%
\bibitem [{\citenamefont {Huang}\ and\ \citenamefont
  {Harrow}(2019)}]{huang2019instability}%
  \BibitemOpen
  \bibfield  {author} {\bibinfo {author} {\bibfnamefont {Y.}~\bibnamefont
  {Huang}}\ and\ \bibinfo {author} {\bibfnamefont {A.~W.}\ \bibnamefont
  {Harrow}},\ }\href@noop {} {\bibinfo {title} {Instability of localization in
  translation-invariant systems}} (\bibinfo {year} {2019}),\ \Eprint
  {https://arxiv.org/abs/1907.13392} {arXiv:1907.13392} \BibitemShut {NoStop}%
\bibitem [{\citenamefont {Shaw}\ \emph {et~al.}(2025)\citenamefont {Shaw},
  \citenamefont {Mark}, \citenamefont {Choi}, \citenamefont {Finkelstein},
  \citenamefont {Scholl}, \citenamefont {Choi},\ and\ \citenamefont
  {Endres}}]{shaw2025experimental}%
  \BibitemOpen
  \bibfield  {author} {\bibinfo {author} {\bibfnamefont {A.~L.}\ \bibnamefont
  {Shaw}}, \bibinfo {author} {\bibfnamefont {D.~K.}\ \bibnamefont {Mark}},
  \bibinfo {author} {\bibfnamefont {J.}~\bibnamefont {Choi}}, \bibinfo {author}
  {\bibfnamefont {R.}~\bibnamefont {Finkelstein}}, \bibinfo {author}
  {\bibfnamefont {P.}~\bibnamefont {Scholl}}, \bibinfo {author} {\bibfnamefont
  {S.}~\bibnamefont {Choi}},\ and\ \bibinfo {author} {\bibfnamefont
  {M.}~\bibnamefont {Endres}},\ }\bibfield  {title} {\bibinfo {title}
  {Experimental signatures of hilbert-space ergodicity: Universal bitstring
  distributions and applications in noise learning},\ }\href
  {https://doi.org/10.1103/h6xy-zpx4} {\bibfield  {journal} {\bibinfo
  {journal} {Phys. Rev. X}\ }\textbf {\bibinfo {volume} {15}},\ \bibinfo
  {pages} {031001} (\bibinfo {year} {2025})}\BibitemShut {NoStop}%
\bibitem [{\citenamefont {Tang}\ \emph {et~al.}(2025)\citenamefont {Tang},
  \citenamefont {Vardhan},\ and\ \citenamefont {Wang}}]{tang2025estimating}%
  \BibitemOpen
  \bibfield  {author} {\bibinfo {author} {\bibfnamefont {H.}~\bibnamefont
  {Tang}}, \bibinfo {author} {\bibfnamefont {S.}~\bibnamefont {Vardhan}},\ and\
  \bibinfo {author} {\bibfnamefont {J.}~\bibnamefont {Wang}},\ }\bibfield
  {title} {\bibinfo {title} {Estimating time in quantum chaotic systems and
  black holes},\ }\href {https://doi.org/10.21468/SciPostPhys.19.4.095}
  {\bibfield  {journal} {\bibinfo  {journal} {SciPost Physics}\ }\textbf
  {\bibinfo {volume} {19}},\ \bibinfo {pages} {095} (\bibinfo {year}
  {2025})}\BibitemShut {NoStop}%
\bibitem [{\citenamefont {Abanin}\ \emph {et~al.}(2019)\citenamefont {Abanin},
  \citenamefont {Altman}, \citenamefont {Bloch},\ and\ \citenamefont
  {Serbyn}}]{abanin2019colloquium}%
  \BibitemOpen
  \bibfield  {author} {\bibinfo {author} {\bibfnamefont {D.~A.}\ \bibnamefont
  {Abanin}}, \bibinfo {author} {\bibfnamefont {E.}~\bibnamefont {Altman}},
  \bibinfo {author} {\bibfnamefont {I.}~\bibnamefont {Bloch}},\ and\ \bibinfo
  {author} {\bibfnamefont {M.}~\bibnamefont {Serbyn}},\ }\bibfield  {title}
  {\bibinfo {title} {Colloquium: Many-body localization, thermalization, and
  entanglement},\ }\href {https://doi.org/10.1103/RevModPhys.91.021001}
  {\bibfield  {journal} {\bibinfo  {journal} {Reviews of Modern Physics}\
  }\textbf {\bibinfo {volume} {91}},\ \bibinfo {pages} {021001} (\bibinfo
  {year} {2019})}\BibitemShut {NoStop}%
\bibitem [{sub()}]{subsystemFootnote}%
  \BibitemOpen
  \href@noop {} {}\bibinfo {note} {The difference between these perspectives
  stems from whether one considers the reduced density matrix of a single fixed
  quantum state, or the ensemble of reduced density matrices from an ensemble
  of quantum states. For a fixed pure state, the reduced density matrix on any
  subsystem greater than half the system size will always contain extensive
  information and never approach its thermal value. However, in many ensembles,
  the observables that this information is stored in wildly fluctuate from
  instance-to-instance drawn from the ensemble. This leads \emph{fixed}
  state-independent measurements or observables to concentrate tightly. This
  ensemble perspective is especially valuable in settings where an observer
  does not know, or cannot compute, all properties of a state
  beforehand.}\BibitemShut {Stop}%
\bibitem [{\citenamefont {Mark}\ \emph {et~al.}(2023)\citenamefont {Mark},
  \citenamefont {Choi}, \citenamefont {Shaw}, \citenamefont {Endres},\ and\
  \citenamefont {Choi}}]{mark2023benchmarking}%
  \BibitemOpen
  \bibfield  {author} {\bibinfo {author} {\bibfnamefont {D.~K.}\ \bibnamefont
  {Mark}}, \bibinfo {author} {\bibfnamefont {J.}~\bibnamefont {Choi}}, \bibinfo
  {author} {\bibfnamefont {A.~L.}\ \bibnamefont {Shaw}}, \bibinfo {author}
  {\bibfnamefont {M.}~\bibnamefont {Endres}},\ and\ \bibinfo {author}
  {\bibfnamefont {S.}~\bibnamefont {Choi}},\ }\bibfield  {title} {\bibinfo
  {title} {Benchmarking quantum simulators using ergodic quantum dynamics},\
  }\href {https://doi.org/10.1103/PhysRevLett.131.110601} {\bibfield  {journal}
  {\bibinfo  {journal} {Phys. Rev. Lett.}\ }\textbf {\bibinfo {volume} {131}},\
  \bibinfo {pages} {110601} (\bibinfo {year} {2023})}\BibitemShut {NoStop}%
\bibitem [{\citenamefont {Shaw}\ \emph {et~al.}(2024)\citenamefont {Shaw},
  \citenamefont {Chen}, \citenamefont {Choi}, \citenamefont {Mark},
  \citenamefont {Scholl}, \citenamefont {Finkelstein}, \citenamefont {Elben},
  \citenamefont {Choi},\ and\ \citenamefont {Endres}}]{shaw2024benchmarking}%
  \BibitemOpen
  \bibfield  {author} {\bibinfo {author} {\bibfnamefont {A.~L.}\ \bibnamefont
  {Shaw}}, \bibinfo {author} {\bibfnamefont {Z.}~\bibnamefont {Chen}}, \bibinfo
  {author} {\bibfnamefont {J.}~\bibnamefont {Choi}}, \bibinfo {author}
  {\bibfnamefont {D.~K.}\ \bibnamefont {Mark}}, \bibinfo {author}
  {\bibfnamefont {P.}~\bibnamefont {Scholl}}, \bibinfo {author} {\bibfnamefont
  {R.}~\bibnamefont {Finkelstein}}, \bibinfo {author} {\bibfnamefont
  {A.}~\bibnamefont {Elben}}, \bibinfo {author} {\bibfnamefont
  {S.}~\bibnamefont {Choi}},\ and\ \bibinfo {author} {\bibfnamefont
  {M.}~\bibnamefont {Endres}},\ }\bibfield  {title} {\bibinfo {title}
  {Benchmarking highly entangled states on a 60-atom analogue quantum
  simulator},\ }\href {https://doi.org/10.1038/s41586-024-07173-x} {\bibfield
  {journal} {\bibinfo  {journal} {Nature}\ }\textbf {\bibinfo {volume} {628}},\
  \bibinfo {pages} {71} (\bibinfo {year} {2024})}\BibitemShut {NoStop}%
\bibitem [{\citenamefont {Wishart}(1928)}]{wishart1928generalised}%
  \BibitemOpen
  \bibfield  {author} {\bibinfo {author} {\bibfnamefont {J.}~\bibnamefont
  {Wishart}},\ }\bibfield  {title} {\bibinfo {title} {The generalised product
  moment distribution in samples from a normal multivariate population},\
  }\href {https://doi.org/10.2307/2331939} {\bibfield  {journal} {\bibinfo
  {journal} {Biometrika}\ }\textbf {\bibinfo {volume} {20A}},\ \bibinfo {pages}
  {32} (\bibinfo {year} {1928})}\BibitemShut {NoStop}%
\bibitem [{Note1()}]{Note1}%
  \BibitemOpen
  \bibinfo {note} {Specifically, if $\rho $ is supported on a system $S$, its
  purification is any pure state $\mathinner {|{\Psi ^{QQ'}_\rho }\rangle }$
  for which $\Tr _{Q'} \dyad *{\Psi ^{QQ'}_\rho } = \rho ^Q$.}\BibitemShut
  {Stop}%
\bibitem [{\citenamefont {Teufel}\ \emph {et~al.}(2025)\citenamefont {Teufel},
  \citenamefont {Tumulka},\ and\ \citenamefont {Vogel}}]{teufel2025canonical}%
  \BibitemOpen
  \bibfield  {author} {\bibinfo {author} {\bibfnamefont {S.}~\bibnamefont
  {Teufel}}, \bibinfo {author} {\bibfnamefont {R.}~\bibnamefont {Tumulka}},\
  and\ \bibinfo {author} {\bibfnamefont {C.}~\bibnamefont {Vogel}},\ }\bibfield
   {title} {\bibinfo {title} {Canonical typicality for other ensembles than
  micro-canonical},\ }\bibfield  {booktitle} {\emph {\bibinfo {booktitle}
  {Annales Henri Poincar{\'e}}},\ }\href
  {https://doi.org/10.1007/s00023-024-01466-7} {\ \textbf {\bibinfo {volume}
  {26}},\ \bibinfo {pages} {1477} (\bibinfo {year} {2025})}\BibitemShut
  {NoStop}%
\bibitem [{\citenamefont {Liu}\ \emph {et~al.}(2024)\citenamefont {Liu},
  \citenamefont {Huang},\ and\ \citenamefont {Ho}}]{liu2024deep}%
  \BibitemOpen
  \bibfield  {author} {\bibinfo {author} {\bibfnamefont {C.}~\bibnamefont
  {Liu}}, \bibinfo {author} {\bibfnamefont {Q.~C.}\ \bibnamefont {Huang}},\
  and\ \bibinfo {author} {\bibfnamefont {W.~W.}\ \bibnamefont {Ho}},\
  }\bibfield  {title} {\bibinfo {title} {Deep thermalization in gaussian
  continuous-variable quantum systems},\ }\href
  {https://doi.org/10.1103/PhysRevLett.133.260401} {\bibfield  {journal}
  {\bibinfo  {journal} {Physical Review Letters}\ }\textbf {\bibinfo {volume}
  {133}},\ \bibinfo {pages} {260401} (\bibinfo {year} {2024})}\BibitemShut
  {NoStop}%
\bibitem [{\citenamefont {Manna}\ \emph {et~al.}(2025)\citenamefont {Manna},
  \citenamefont {Roy},\ and\ \citenamefont {Sreejith}}]{manna2025projected}%
  \BibitemOpen
  \bibfield  {author} {\bibinfo {author} {\bibfnamefont {S.}~\bibnamefont
  {Manna}}, \bibinfo {author} {\bibfnamefont {S.}~\bibnamefont {Roy}},\ and\
  \bibinfo {author} {\bibfnamefont {G.}~\bibnamefont {Sreejith}},\ }\bibfield
  {title} {\bibinfo {title} {Projected ensemble in a system with locally
  supported conserved charges},\ }\href
  {https://doi.org/10.1103/PhysRevB.111.144302} {\bibfield  {journal} {\bibinfo
   {journal} {Physical Review B}\ }\textbf {\bibinfo {volume} {111}},\ \bibinfo
  {pages} {144302} (\bibinfo {year} {2025})}\BibitemShut {NoStop}%
\bibitem [{\citenamefont {Sherry}\ \emph {et~al.}(2025)\citenamefont {Sherry},
  \citenamefont {Mandal},\ and\ \citenamefont {Roy}}]{sherry2025information}%
  \BibitemOpen
  \bibfield  {author} {\bibinfo {author} {\bibfnamefont {A.}~\bibnamefont
  {Sherry}}, \bibinfo {author} {\bibfnamefont {S.}~\bibnamefont {Mandal}},\
  and\ \bibinfo {author} {\bibfnamefont {S.}~\bibnamefont {Roy}},\ }\href@noop
  {} {\bibinfo {title} {Information phases of partial projected ensembles
  generated from random quantum states}} (\bibinfo {year} {2025}),\ \Eprint
  {https://arxiv.org/abs/2511.10595} {arXiv:2511.10595} \BibitemShut {NoStop}%
\bibitem [{\citenamefont {Chang}\ \emph {et~al.}(2025)\citenamefont {Chang},
  \citenamefont {Shrotriya}, \citenamefont {Ho},\ and\ \citenamefont
  {Ippoliti}}]{chang2025deep}%
  \BibitemOpen
  \bibfield  {author} {\bibinfo {author} {\bibfnamefont {R.-A.}\ \bibnamefont
  {Chang}}, \bibinfo {author} {\bibfnamefont {H.}~\bibnamefont {Shrotriya}},
  \bibinfo {author} {\bibfnamefont {W.~W.}\ \bibnamefont {Ho}},\ and\ \bibinfo
  {author} {\bibfnamefont {M.}~\bibnamefont {Ippoliti}},\ }\bibfield  {title}
  {\bibinfo {title} {Deep thermalization under charge-conserving quantum
  dynamics},\ }\href {https://doi.org/10.1103/PRXQuantum.6.020343} {\bibfield
  {journal} {\bibinfo  {journal} {PRX Quantum}\ }\textbf {\bibinfo {volume}
  {6}},\ \bibinfo {pages} {020343} (\bibinfo {year} {2025})}\BibitemShut
  {NoStop}%
\bibitem [{\citenamefont {Mok}\ \emph {et~al.}(2025)\citenamefont {Mok},
  \citenamefont {Haug}, \citenamefont {Ho},\ and\ \citenamefont
  {Preskill}}]{mok2025nature}%
  \BibitemOpen
  \bibfield  {author} {\bibinfo {author} {\bibfnamefont {W.-K.}\ \bibnamefont
  {Mok}}, \bibinfo {author} {\bibfnamefont {T.}~\bibnamefont {Haug}}, \bibinfo
  {author} {\bibfnamefont {W.~W.}\ \bibnamefont {Ho}},\ and\ \bibinfo {author}
  {\bibfnamefont {J.}~\bibnamefont {Preskill}},\ }\bibfield  {title} {\bibinfo
  {title} {Nature is stingy: Universality of scrooge ensembles in quantum
  many-body systems},\ }\href@noop {} {\bibfield  {journal} {\bibinfo
  {journal} {Forthcoming}\ } (\bibinfo {year} {2025})}\BibitemShut {NoStop}%
\bibitem [{\citenamefont {Schuster}\ \emph
  {et~al.}(2025{\natexlab{c}})\citenamefont {Schuster}, \citenamefont {Suzuki},
  \citenamefont {Mitsuhashi},\ and\ \citenamefont
  {Preskill}}]{schuster2025fast}%
  \BibitemOpen
  \bibfield  {author} {\bibinfo {author} {\bibfnamefont {T.}~\bibnamefont
  {Schuster}}, \bibinfo {author} {\bibfnamefont {R.}~\bibnamefont {Suzuki}},
  \bibinfo {author} {\bibfnamefont {Y.}~\bibnamefont {Mitsuhashi}},\ and\
  \bibinfo {author} {\bibfnamefont {J.}~\bibnamefont {Preskill}},\ }\bibfield
  {title} {\bibinfo {title} {Fast chaos and emergent classicality in
  charge-conserving random circuits},\ }\href@noop {} {\bibfield  {journal}
  {\bibinfo  {journal} {Forthcoming}\ } (\bibinfo {year}
  {2025}{\natexlab{c}})}\BibitemShut {NoStop}%
\bibitem [{Note2()}]{Note2}%
  \BibitemOpen
  \bibinfo {note} {To be specific, Ref.~\cite {mark2024maximum} shows that the
  temporal ensemble for asymptotically large $T$ is equivalent to the so-called
  \protect \emph {random phase} ensemble, assuming the no-resonance condition
  for the Hamiltonian $H$. The moments of the random phase ensemble are
  computed up to small additive error in Ref.~\cite {mark2024maximum}, and
  produce an identical formula to our approximate expression for the moments of
  the Scrooge ensemble. Hence, the two ensembles are equivalent up to small
  additive error. We suspect that this equivalence also holds up to small
  relative error for observables that are sufficiently spread out in the energy
  eigenbasis.}\BibitemShut {Stop}%
\bibitem [{sup()}]{supp}%
  \BibitemOpen
  \href@noop {} {\bibinfo  {journal} {See the Supplemental Material for
  additional discussions and proofs of the results stated in the main text.}\
  }\BibitemShut {NoStop}%
\bibitem [{\citenamefont {Napp}\ \emph {et~al.}(2022)\citenamefont {Napp},
  \citenamefont {La~Placa}, \citenamefont {Dalzell}, \citenamefont {Brandao},\
  and\ \citenamefont {Harrow}}]{napp2022efficient}%
  \BibitemOpen
\bibfield  {journal} {  }\bibfield  {author} {\bibinfo {author} {\bibfnamefont
  {J.~C.}\ \bibnamefont {Napp}}, \bibinfo {author} {\bibfnamefont {R.~L.}\
  \bibnamefont {La~Placa}}, \bibinfo {author} {\bibfnamefont {A.~M.}\
  \bibnamefont {Dalzell}}, \bibinfo {author} {\bibfnamefont {F.~G.}\
  \bibnamefont {Brandao}},\ and\ \bibinfo {author} {\bibfnamefont {A.~W.}\
  \bibnamefont {Harrow}},\ }\bibfield  {title} {\bibinfo {title} {Efficient
  classical simulation of random shallow 2d quantum circuits},\ }\href
  {https://doi.org/10.1103/PhysRevX.12.021021} {\bibfield  {journal} {\bibinfo
  {journal} {Physical Review X}\ }\textbf {\bibinfo {volume} {12}},\ \bibinfo
  {pages} {021021} (\bibinfo {year} {2022})}\BibitemShut {NoStop}%
\bibitem [{\citenamefont {un~Lee}\ \emph {et~al.}(2025)\citenamefont {un~Lee},
  \citenamefont {Ghosh}, \citenamefont {Oh}, \citenamefont {Noh}, \citenamefont
  {Fefferman},\ and\ \citenamefont {Jiang}}]{lee2025classical}%
  \BibitemOpen
  \bibfield  {author} {\bibinfo {author} {\bibfnamefont {S.}~\bibnamefont
  {un~Lee}}, \bibinfo {author} {\bibfnamefont {S.}~\bibnamefont {Ghosh}},
  \bibinfo {author} {\bibfnamefont {C.}~\bibnamefont {Oh}}, \bibinfo {author}
  {\bibfnamefont {K.}~\bibnamefont {Noh}}, \bibinfo {author} {\bibfnamefont
  {B.}~\bibnamefont {Fefferman}},\ and\ \bibinfo {author} {\bibfnamefont
  {L.}~\bibnamefont {Jiang}},\ }\href@noop {} {\bibinfo {title} {Classical
  simulation of noisy random circuits from exponential decay of correlation}}
  (\bibinfo {year} {2025}),\ \Eprint {https://arxiv.org/abs/2510.06328}
  {arXiv:2510.06328} \BibitemShut {NoStop}%
\bibitem [{\citenamefont {Zhang}\ \emph {et~al.}(2025)\citenamefont {Zhang},
  \citenamefont {un~Lee}, \citenamefont {Jiang},\ and\ \citenamefont
  {Gopalakrishnan}}]{zhang2025classically}%
  \BibitemOpen
  \bibfield  {author} {\bibinfo {author} {\bibfnamefont {Y.~F.}\ \bibnamefont
  {Zhang}}, \bibinfo {author} {\bibfnamefont {S.}~\bibnamefont {un~Lee}},
  \bibinfo {author} {\bibfnamefont {L.}~\bibnamefont {Jiang}},\ and\ \bibinfo
  {author} {\bibfnamefont {S.}~\bibnamefont {Gopalakrishnan}},\ }\href@noop {}
  {\bibinfo {title} {Classically sampling noisy quantum circuits in
  quasi-polynomial time under approximate markovianity}} (\bibinfo {year}
  {2025}),\ \Eprint {https://arxiv.org/abs/2510.06324} {arXiv:2510.06324}
  \BibitemShut {NoStop}%
\bibitem [{\citenamefont {Wei}\ \emph {et~al.}(2025)\citenamefont {Wei},
  \citenamefont {Nelson}, \citenamefont {Rajakumar}, \citenamefont {Cruz},
  \citenamefont {Gorshkov}, \citenamefont {Gullans},\ and\ \citenamefont
  {Malz}}]{wei2025measurement}%
  \BibitemOpen
  \bibfield  {author} {\bibinfo {author} {\bibfnamefont {Z.-Y.}\ \bibnamefont
  {Wei}}, \bibinfo {author} {\bibfnamefont {J.}~\bibnamefont {Nelson}},
  \bibinfo {author} {\bibfnamefont {J.}~\bibnamefont {Rajakumar}}, \bibinfo
  {author} {\bibfnamefont {E.}~\bibnamefont {Cruz}}, \bibinfo {author}
  {\bibfnamefont {A.~V.}\ \bibnamefont {Gorshkov}}, \bibinfo {author}
  {\bibfnamefont {M.~J.}\ \bibnamefont {Gullans}},\ and\ \bibinfo {author}
  {\bibfnamefont {D.}~\bibnamefont {Malz}},\ }\href@noop {} {\bibinfo {title}
  {Measurement-induced entanglement in noisy {2D} random {C}lifford circuits}}
  (\bibinfo {year} {2025}),\ \Eprint {https://arxiv.org/abs/2510.12743}
  {arXiv:2510.12743} \BibitemShut {NoStop}%
\bibitem [{\citenamefont {Deshpande}\ \emph {et~al.}(2022)\citenamefont
  {Deshpande}, \citenamefont {Niroula}, \citenamefont {Shtanko}, \citenamefont
  {Gorshkov}, \citenamefont {Fefferman},\ and\ \citenamefont
  {Gullans}}]{deshpande2022tight}%
  \BibitemOpen
  \bibfield  {author} {\bibinfo {author} {\bibfnamefont {A.}~\bibnamefont
  {Deshpande}}, \bibinfo {author} {\bibfnamefont {P.}~\bibnamefont {Niroula}},
  \bibinfo {author} {\bibfnamefont {O.}~\bibnamefont {Shtanko}}, \bibinfo
  {author} {\bibfnamefont {A.~V.}\ \bibnamefont {Gorshkov}}, \bibinfo {author}
  {\bibfnamefont {B.}~\bibnamefont {Fefferman}},\ and\ \bibinfo {author}
  {\bibfnamefont {M.~J.}\ \bibnamefont {Gullans}},\ }\bibfield  {title}
  {\bibinfo {title} {Tight bounds on the convergence of noisy random circuits
  to the uniform distribution},\ }\href
  {https://doi.org/10.1103/PRXQuantum.3.040329} {\bibfield  {journal} {\bibinfo
   {journal} {PRX Quantum}\ }\textbf {\bibinfo {volume} {3}},\ \bibinfo {pages}
  {040329} (\bibinfo {year} {2022})}\BibitemShut {NoStop}%
\bibitem [{\citenamefont {Dalzell}\ \emph {et~al.}(2024)\citenamefont
  {Dalzell}, \citenamefont {Hunter-Jones},\ and\ \citenamefont
  {Brand{\~a}o}}]{dalzell2021random}%
  \BibitemOpen
  \bibfield  {author} {\bibinfo {author} {\bibfnamefont {A.~M.}\ \bibnamefont
  {Dalzell}}, \bibinfo {author} {\bibfnamefont {N.}~\bibnamefont
  {Hunter-Jones}},\ and\ \bibinfo {author} {\bibfnamefont {F.~G. S.~L.}\
  \bibnamefont {Brand{\~a}o}},\ }\bibfield  {title} {\bibinfo {title} {Random
  quantum circuits transform local noise into global white noise},\ }\href
  {https://doi.org/10.1007/s00220-024-04958-z} {\bibfield  {journal} {\bibinfo
  {journal} {Communications in Mathematical Physics}\ }\textbf {\bibinfo
  {volume} {405}},\ \bibinfo {pages} {78} (\bibinfo {year} {2024})}\BibitemShut
  {NoStop}%
\bibitem [{\citenamefont {Gross}\ \emph {et~al.}(2007)\citenamefont {Gross},
  \citenamefont {Audenaert},\ and\ \citenamefont {Eisert}}]{gross2007evenly}%
  \BibitemOpen
  \bibfield  {author} {\bibinfo {author} {\bibfnamefont {D.}~\bibnamefont
  {Gross}}, \bibinfo {author} {\bibfnamefont {K.}~\bibnamefont {Audenaert}},\
  and\ \bibinfo {author} {\bibfnamefont {J.}~\bibnamefont {Eisert}},\
  }\bibfield  {title} {\bibinfo {title} {Evenly distributed unitaries: On the
  structure of unitary designs},\ }\href {https://doi.org/10.1063/1.2716992}
  {\bibfield  {journal} {\bibinfo  {journal} {Journal of mathematical physics}\
  }\textbf {\bibinfo {volume} {48}},\ \bibinfo {pages} {052104} (\bibinfo
  {year} {2007})}\BibitemShut {NoStop}%
\bibitem [{\citenamefont {Susskind}(2016)}]{susskind2016computational}%
  \BibitemOpen
  \bibfield  {author} {\bibinfo {author} {\bibfnamefont {L.}~\bibnamefont
  {Susskind}},\ }\bibfield  {title} {\bibinfo {title} {Computational complexity
  and black hole horizons},\ }\href
  {https://doi.org/https://doi.org/10.1002/prop.201500092} {\bibfield
  {journal} {\bibinfo  {journal} {Fortschr. Phys.}\ }\textbf {\bibinfo {volume}
  {64}},\ \bibinfo {pages} {24} (\bibinfo {year} {2016})}\BibitemShut {NoStop}%
\bibitem [{\citenamefont {Brown}\ \emph {et~al.}(2016)\citenamefont {Brown},
  \citenamefont {Roberts}, \citenamefont {Susskind}, \citenamefont {Swingle},\
  and\ \citenamefont {Zhao}}]{brown2016complexity}%
  \BibitemOpen
  \bibfield  {author} {\bibinfo {author} {\bibfnamefont {A.~R.}\ \bibnamefont
  {Brown}}, \bibinfo {author} {\bibfnamefont {D.~A.}\ \bibnamefont {Roberts}},
  \bibinfo {author} {\bibfnamefont {L.}~\bibnamefont {Susskind}}, \bibinfo
  {author} {\bibfnamefont {B.}~\bibnamefont {Swingle}},\ and\ \bibinfo {author}
  {\bibfnamefont {Y.}~\bibnamefont {Zhao}},\ }\bibfield  {title} {\bibinfo
  {title} {Complexity, action, and black holes},\ }\href
  {https://doi.org/10.1103/PhysRevD.93.086006} {\bibfield  {journal} {\bibinfo
  {journal} {Phys. Rev. D}\ }\textbf {\bibinfo {volume} {93}},\ \bibinfo
  {pages} {086006} (\bibinfo {year} {2016})}\BibitemShut {NoStop}%
\bibitem [{\citenamefont {{Brand{\~a}o}}\ \emph {et~al.}(2021)\citenamefont
  {{Brand{\~a}o}}, \citenamefont {Chemissany}, \citenamefont {Hunter-Jones},
  \citenamefont {Kueng},\ and\ \citenamefont {Preskill}}]{brandao2019models}%
  \BibitemOpen
  \bibfield  {author} {\bibinfo {author} {\bibfnamefont {F.~G.}\ \bibnamefont
  {{Brand{\~a}o}}}, \bibinfo {author} {\bibfnamefont {W.}~\bibnamefont
  {Chemissany}}, \bibinfo {author} {\bibfnamefont {N.}~\bibnamefont
  {Hunter-Jones}}, \bibinfo {author} {\bibfnamefont {R.}~\bibnamefont
  {Kueng}},\ and\ \bibinfo {author} {\bibfnamefont {J.}~\bibnamefont
  {Preskill}},\ }\bibfield  {title} {\bibinfo {title} {Models of quantum
  complexity growth},\ }\href {https://doi.org/10.1103/prxquantum.2.030316}
  {\bibfield  {journal} {\bibinfo  {journal} {PRX Quantum}\ }\textbf {\bibinfo
  {volume} {2}},\ \bibinfo {pages} {030316} (\bibinfo {year}
  {2021})}\BibitemShut {NoStop}%
\bibitem [{\citenamefont {Bao}\ \emph {et~al.}(2020)\citenamefont {Bao},
  \citenamefont {Choi},\ and\ \citenamefont {Altman}}]{bao2020theory}%
  \BibitemOpen
  \bibfield  {author} {\bibinfo {author} {\bibfnamefont {Y.}~\bibnamefont
  {Bao}}, \bibinfo {author} {\bibfnamefont {S.}~\bibnamefont {Choi}},\ and\
  \bibinfo {author} {\bibfnamefont {E.}~\bibnamefont {Altman}},\ }\bibfield
  {title} {\bibinfo {title} {Theory of the phase transition in random unitary
  circuits with measurements},\ }\href
  {https://doi.org/10.1103/PhysRevB.101.104301} {\bibfield  {journal} {\bibinfo
   {journal} {Physical Review B}\ }\textbf {\bibinfo {volume} {101}},\ \bibinfo
  {pages} {104301} (\bibinfo {year} {2020})}\BibitemShut {NoStop}%
\bibitem [{\citenamefont {Jian}\ \emph {et~al.}(2020)\citenamefont {Jian},
  \citenamefont {You}, \citenamefont {Vasseur},\ and\ \citenamefont
  {Ludwig}}]{jian2020measurement}%
  \BibitemOpen
  \bibfield  {author} {\bibinfo {author} {\bibfnamefont {C.-M.}\ \bibnamefont
  {Jian}}, \bibinfo {author} {\bibfnamefont {Y.-Z.}\ \bibnamefont {You}},
  \bibinfo {author} {\bibfnamefont {R.}~\bibnamefont {Vasseur}},\ and\ \bibinfo
  {author} {\bibfnamefont {A.~W.~W.}\ \bibnamefont {Ludwig}},\ }\bibfield
  {title} {\bibinfo {title} {Measurement-induced criticality in random quantum
  circuits},\ }\href {https://doi.org/10.1103/PhysRevB.101.104302} {\bibfield
  {journal} {\bibinfo  {journal} {Phys. Rev. B}\ }\textbf {\bibinfo {volume}
  {101}},\ \bibinfo {pages} {104302} (\bibinfo {year} {2020})}\BibitemShut
  {NoStop}%
\bibitem [{\citenamefont {Curry}(1966)}]{curry1966polya}%
  \BibitemOpen
  \bibfield  {author} {\bibinfo {author} {\bibfnamefont {H.}~\bibnamefont
  {Curry}},\ }\bibfield  {title} {\bibinfo {title} {On polya frequency
  functions {IV}: Fundamental spline functions and their limits},\ }\href
  {https://doi.org/10.1007/BF02788653} {\bibfield  {journal} {\bibinfo
  {journal} {J. d'Analyse Math.}\ }\textbf {\bibinfo {volume} {17}},\ \bibinfo
  {pages} {71} (\bibinfo {year} {1966})}\BibitemShut {NoStop}%
\bibitem [{\citenamefont {S{\`y}kora}(1974)}]{sykora1974quantum}%
  \BibitemOpen
  \bibfield  {author} {\bibinfo {author} {\bibfnamefont {S.}~\bibnamefont
  {S{\`y}kora}},\ }\bibfield  {title} {\bibinfo {title} {Quantum theory and the
  bayesian inference problems},\ }\href {https://doi.org/10.1007/BF01019475}
  {\bibfield  {journal} {\bibinfo  {journal} {Journal of Statistical Physics}\
  }\textbf {\bibinfo {volume} {11}},\ \bibinfo {pages} {17} (\bibinfo {year}
  {1974})}\BibitemShut {NoStop}%
\bibitem [{\citenamefont {Farwig}\ and\ \citenamefont
  {Zwick}(1985)}]{Farwig1985}%
  \BibitemOpen
  \bibfield  {author} {\bibinfo {author} {\bibfnamefont {R.}~\bibnamefont
  {Farwig}}\ and\ \bibinfo {author} {\bibfnamefont {D.}~\bibnamefont {Zwick}},\
  }\bibfield  {title} {\bibinfo {title} {Some divided difference inequalities
  for n-convex functions},\ }\href
  {https://doi.org/https://doi.org/10.1016/0022-247X(85)90036-8} {\bibfield
  {journal} {\bibinfo  {journal} {Journal of Mathematical Analysis and
  Applications}\ }\textbf {\bibinfo {volume} {108}},\ \bibinfo {pages} {430}
  (\bibinfo {year} {1985})}\BibitemShut {NoStop}%
\bibitem [{\citenamefont {Skorski}(2023)}]{skorski2023}%
  \BibitemOpen
  \bibfield  {author} {\bibinfo {author} {\bibfnamefont {M.}~\bibnamefont
  {Skorski}},\ }\bibfield  {title} {\bibinfo {title} {Bernstein-type bounds for
  beta distribution},\ }\href {https://doi.org/10.15559/23-VMSTA223} {\bibfield
   {journal} {\bibinfo  {journal} {Modern Stochastics: Theory and
  Applications}\ }\textbf {\bibinfo {volume} {10}},\ \bibinfo {pages} {211}
  (\bibinfo {year} {2023})}\BibitemShut {NoStop}%
\bibitem [{\citenamefont {Boucheron}\ \emph {et~al.}(2013)\citenamefont
  {Boucheron}, \citenamefont {Lugosi},\ and\ \citenamefont
  {Massart}}]{boucheron2013}%
  \BibitemOpen
  \bibfield  {author} {\bibinfo {author} {\bibfnamefont {S.}~\bibnamefont
  {Boucheron}}, \bibinfo {author} {\bibfnamefont {G.}~\bibnamefont {Lugosi}},\
  and\ \bibinfo {author} {\bibfnamefont {P.}~\bibnamefont {Massart}},\ }\href
  {https://doi.org/10.1093/acprof:oso/9780199535255.001.0001} {\emph {\bibinfo
  {title} {Concentration Inequalities: A Nonasymptotic Theory of
  Independence}}}\ (\bibinfo  {publisher} {Oxford University Press},\ \bibinfo
  {year} {2013})\BibitemShut {NoStop}%
\bibitem [{\citenamefont {Tomamichel}(2016)}]{tomamichel2016}%
  \BibitemOpen
  \bibfield  {author} {\bibinfo {author} {\bibfnamefont {M.}~\bibnamefont
  {Tomamichel}},\ }\href {https://doi.org/10.1007/978-3-319-21891-5} {\emph
  {\bibinfo {title} {Quantum Information Processing with Finite Resources}}}\
  (\bibinfo  {publisher} {Springer International Publishing},\ \bibinfo {year}
  {2016})\BibitemShut {NoStop}%
\bibitem [{\citenamefont {Harrow}(2023)}]{harrow2021approximate}%
  \BibitemOpen
  \bibfield  {author} {\bibinfo {author} {\bibfnamefont {A.~W.}\ \bibnamefont
  {Harrow}},\ }\bibfield  {title} {\bibinfo {title} {Approximate orthogonality
  of permutation operators, with application to quantum information},\ }\href
  {https://doi.org/10.1007/s11005-023-01744-1} {\bibfield  {journal} {\bibinfo
  {journal} {Letters in Mathematical Physics}\ }\textbf {\bibinfo {volume}
  {114}},\ \bibinfo {pages} {1} (\bibinfo {year} {2023})}\BibitemShut {NoStop}%
\bibitem [{\citenamefont {Kibble}(1941)}]{kibble1941two}%
  \BibitemOpen
  \bibfield  {author} {\bibinfo {author} {\bibfnamefont {W.~F.}\ \bibnamefont
  {Kibble}},\ }\bibfield  {title} {\bibinfo {title} {A two-variate gamma type
  distribution},\ }\href {https://www.jstor.org/stable/25047664} {\bibfield
  {journal} {\bibinfo  {journal} {Sankhy{\=a}: The Indian Journal of Statistics
  (1933-1960)}\ }\textbf {\bibinfo {volume} {5}},\ \bibinfo {pages} {137}
  (\bibinfo {year} {1941})}\BibitemShut {NoStop}%
\bibitem [{\citenamefont {Kuriki}\ and\ \citenamefont
  {Numata}(2010)}]{kuriki2010graph}%
  \BibitemOpen
  \bibfield  {author} {\bibinfo {author} {\bibfnamefont {S.}~\bibnamefont
  {Kuriki}}\ and\ \bibinfo {author} {\bibfnamefont {Y.}~\bibnamefont
  {Numata}},\ }\bibfield  {title} {\bibinfo {title} {Graph presentations for
  moments of noncentral wishart distributions and their applications},\ }\href
  {https://doi.org/10.1007/s10463-010-0279-4} {\bibfield  {journal} {\bibinfo
  {journal} {Annals of the Institute of Statistical Mathematics}\ }\textbf
  {\bibinfo {volume} {62}},\ \bibinfo {pages} {645} (\bibinfo {year}
  {2010})}\BibitemShut {NoStop}%
\bibitem [{\citenamefont {Goodman}(1963)}]{goodman1963statistical}%
  \BibitemOpen
  \bibfield  {author} {\bibinfo {author} {\bibfnamefont {N.~R.}\ \bibnamefont
  {Goodman}},\ }\bibfield  {title} {\bibinfo {title} {Statistical analysis
  based on a certain multivariate complex {G}aussian distribution (an
  introduction)},\ }\href
  {https://doi.org/https://www.jstor.org/stable/2991290} {\bibfield  {journal}
  {\bibinfo  {journal} {The Annals of mathematical statistics}\ }\textbf
  {\bibinfo {volume} {34}},\ \bibinfo {pages} {152} (\bibinfo {year}
  {1963})}\BibitemShut {NoStop}%
\bibitem [{\citenamefont {Fefferman}\ \emph {et~al.}(2024)\citenamefont
  {Fefferman}, \citenamefont {Ghosh}, \citenamefont {Gullans}, \citenamefont
  {Kuroiwa},\ and\ \citenamefont {Sharma}}]{fefferman2023effect}%
  \BibitemOpen
  \bibfield  {author} {\bibinfo {author} {\bibfnamefont {B.}~\bibnamefont
  {Fefferman}}, \bibinfo {author} {\bibfnamefont {S.}~\bibnamefont {Ghosh}},
  \bibinfo {author} {\bibfnamefont {M.}~\bibnamefont {Gullans}}, \bibinfo
  {author} {\bibfnamefont {K.}~\bibnamefont {Kuroiwa}},\ and\ \bibinfo {author}
  {\bibfnamefont {K.}~\bibnamefont {Sharma}},\ }\bibfield  {title} {\bibinfo
  {title} {Effect of nonunital noise on random-circuit sampling},\ }\href
  {https://doi.org/10.1103/PRXQuantum.5.030317} {\bibfield  {journal} {\bibinfo
   {journal} {PRX Quantum}\ }\textbf {\bibinfo {volume} {5}},\ \bibinfo {pages}
  {030317} (\bibinfo {year} {2024})}\BibitemShut {NoStop}%
\bibitem [{\citenamefont {Mele}\ \emph {et~al.}(2024)\citenamefont {Mele},
  \citenamefont {Angrisani}, \citenamefont {Ghosh}, \citenamefont {Khatri},
  \citenamefont {Eisert}, \citenamefont {França},\ and\ \citenamefont
  {Quek}}]{mele2024noise}%
  \BibitemOpen
  \bibfield  {author} {\bibinfo {author} {\bibfnamefont {A.~A.}\ \bibnamefont
  {Mele}}, \bibinfo {author} {\bibfnamefont {A.}~\bibnamefont {Angrisani}},
  \bibinfo {author} {\bibfnamefont {S.}~\bibnamefont {Ghosh}}, \bibinfo
  {author} {\bibfnamefont {S.}~\bibnamefont {Khatri}}, \bibinfo {author}
  {\bibfnamefont {J.}~\bibnamefont {Eisert}}, \bibinfo {author} {\bibfnamefont
  {D.~S.}\ \bibnamefont {França}},\ and\ \bibinfo {author} {\bibfnamefont
  {Y.}~\bibnamefont {Quek}},\ }\href@noop {} {\bibinfo {title} {Noise-induced
  shallow circuits and absence of barren plateaus}} (\bibinfo {year} {2024}),\
  \Eprint {https://arxiv.org/abs/2403.13927} {arXiv:2403.13927} \BibitemShut
  {NoStop}%
\bibitem [{\citenamefont {Bao}\ \emph {et~al.}(2024)\citenamefont {Bao},
  \citenamefont {Block},\ and\ \citenamefont {Altman}}]{bao2024finite}%
  \BibitemOpen
  \bibfield  {author} {\bibinfo {author} {\bibfnamefont {Y.}~\bibnamefont
  {Bao}}, \bibinfo {author} {\bibfnamefont {M.}~\bibnamefont {Block}},\ and\
  \bibinfo {author} {\bibfnamefont {E.}~\bibnamefont {Altman}},\ }\bibfield
  {title} {\bibinfo {title} {Finite-time teleportation phase transition in
  random quantum circuits},\ }\href
  {https://doi.org/10.1103/physrevlett.132.030401} {\bibfield  {journal}
  {\bibinfo  {journal} {Physical Review Letters}\ }\textbf {\bibinfo {volume}
  {132}},\ \bibinfo {pages} {030401} (\bibinfo {year} {2024})}\BibitemShut
  {NoStop}%
\bibitem [{\citenamefont {McGinley}\ \emph {et~al.}(2025)\citenamefont
  {McGinley}, \citenamefont {Ho},\ and\ \citenamefont
  {Malz}}]{mcginley2025measurement}%
  \BibitemOpen
  \bibfield  {author} {\bibinfo {author} {\bibfnamefont {M.}~\bibnamefont
  {McGinley}}, \bibinfo {author} {\bibfnamefont {W.~W.}\ \bibnamefont {Ho}},\
  and\ \bibinfo {author} {\bibfnamefont {D.}~\bibnamefont {Malz}},\ }\bibfield
  {title} {\bibinfo {title} {Measurement-induced entanglement and complexity in
  random constant-depth {2D} quantum circuits},\ }\href
  {https://doi.org/10.1103/PhysRevX.15.021059} {\bibfield  {journal} {\bibinfo
  {journal} {Physical Review X}\ }\textbf {\bibinfo {volume} {15}},\ \bibinfo
  {pages} {021059} (\bibinfo {year} {2025})}\BibitemShut {NoStop}%
\bibitem [{\citenamefont {Bene~Watts}\ \emph {et~al.}(2025)\citenamefont
  {Bene~Watts}, \citenamefont {Gosset}, \citenamefont {Liu},\ and\
  \citenamefont {Soleimanifar}}]{bene2025quantum}%
  \BibitemOpen
  \bibfield  {author} {\bibinfo {author} {\bibfnamefont {A.}~\bibnamefont
  {Bene~Watts}}, \bibinfo {author} {\bibfnamefont {D.}~\bibnamefont {Gosset}},
  \bibinfo {author} {\bibfnamefont {Y.}~\bibnamefont {Liu}},\ and\ \bibinfo
  {author} {\bibfnamefont {M.}~\bibnamefont {Soleimanifar}},\ }\bibfield
  {title} {\bibinfo {title} {Quantum advantage from measurement-induced
  entanglement in random shallow circuits},\ }\href
  {https://doi.org/10.1103/PRXQuantum.6.010356} {\bibfield  {journal} {\bibinfo
   {journal} {PRX Quantum}\ }\textbf {\bibinfo {volume} {6}},\ \bibinfo {pages}
  {010356} (\bibinfo {year} {2025})}\BibitemShut {NoStop}%
\bibitem [{\citenamefont {Datta}\ \emph {et~al.}(2014)\citenamefont {Datta},
  \citenamefont {Dorlas}, \citenamefont {Jozsa},\ and\ \citenamefont
  {Benatti}}]{datta2014}%
  \BibitemOpen
  \bibfield  {author} {\bibinfo {author} {\bibfnamefont {N.}~\bibnamefont
  {Datta}}, \bibinfo {author} {\bibfnamefont {T.}~\bibnamefont {Dorlas}},
  \bibinfo {author} {\bibfnamefont {R.}~\bibnamefont {Jozsa}},\ and\ \bibinfo
  {author} {\bibfnamefont {F.}~\bibnamefont {Benatti}},\ }\bibfield  {title}
  {\bibinfo {title} {Properties of subentropy},\ }\href
  {https://doi.org/10.1063/1.4882935} {\bibfield  {journal} {\bibinfo
  {journal} {Journal of Mathematical Physics}\ }\textbf {\bibinfo {volume}
  {55}},\ \bibinfo {pages} {062203} (\bibinfo {year} {2014})}\BibitemShut
  {NoStop}%
\bibitem [{\citenamefont {Nietner}\ \emph {et~al.}(2025)\citenamefont
  {Nietner}, \citenamefont {Ioannou}, \citenamefont {Sweke}, \citenamefont
  {Kueng}, \citenamefont {Eisert}, \citenamefont {Hinsche},\ and\ \citenamefont
  {Haferkamp}}]{nietner2023average}%
  \BibitemOpen
  \bibfield  {author} {\bibinfo {author} {\bibfnamefont {A.}~\bibnamefont
  {Nietner}}, \bibinfo {author} {\bibfnamefont {M.}~\bibnamefont {Ioannou}},
  \bibinfo {author} {\bibfnamefont {R.}~\bibnamefont {Sweke}}, \bibinfo
  {author} {\bibfnamefont {R.}~\bibnamefont {Kueng}}, \bibinfo {author}
  {\bibfnamefont {J.}~\bibnamefont {Eisert}}, \bibinfo {author} {\bibfnamefont
  {M.}~\bibnamefont {Hinsche}},\ and\ \bibinfo {author} {\bibfnamefont
  {J.}~\bibnamefont {Haferkamp}},\ }\bibfield  {title} {\bibinfo {title} {On
  the average-case complexity of learning output distributions of quantum
  circuits},\ }\href {https://doi.org/10.22331/q-2025-10-13-1883} {\bibfield
  {journal} {\bibinfo  {journal} {{Quantum}}\ }\textbf {\bibinfo {volume}
  {9}},\ \bibinfo {pages} {1883} (\bibinfo {year} {2025})}\BibitemShut
  {NoStop}%
\bibitem [{\citenamefont {Berger}(1997)}]{berger1997fourth}%
  \BibitemOpen
  \bibfield  {author} {\bibinfo {author} {\bibfnamefont {B.}~\bibnamefont
  {Berger}},\ }\bibfield  {title} {\bibinfo {title} {The fourth moment
  method},\ }\href {https://doi.org/10.1137/S0097539792240005} {\bibfield
  {journal} {\bibinfo  {journal} {SIAM Journal on Computing}\ }\textbf
  {\bibinfo {volume} {26}},\ \bibinfo {pages} {1188} (\bibinfo {year}
  {1997})}\BibitemShut {NoStop}%
\bibitem [{\citenamefont {Oberhettinger}\ and\ \citenamefont
  {Badii}(1973)}]{oberhettinger2012tables}%
  \BibitemOpen
  \bibfield  {author} {\bibinfo {author} {\bibfnamefont {F.}~\bibnamefont
  {Oberhettinger}}\ and\ \bibinfo {author} {\bibfnamefont {L.}~\bibnamefont
  {Badii}},\ }\href {https://doi.org/10.1007/978-3-642-65645-3} {\emph
  {\bibinfo {title} {Tables of Laplace transforms}}}\ (\bibinfo  {publisher}
  {Springer},\ \bibinfo {year} {1973})\BibitemShut {NoStop}%
\bibitem [{\citenamefont {van Erven}\ and\ \citenamefont
  {Harremos}(2014)}]{vanErven2014renyi}%
  \BibitemOpen
  \bibfield  {author} {\bibinfo {author} {\bibfnamefont {T.}~\bibnamefont {van
  Erven}}\ and\ \bibinfo {author} {\bibfnamefont {P.}~\bibnamefont
  {Harremos}},\ }\bibfield  {title} {\bibinfo {title} {Rényi divergence and
  kullback-leibler divergence},\ }\href
  {https://doi.org/10.1109/TIT.2014.2320500} {\bibfield  {journal} {\bibinfo
  {journal} {IEEE Transactions on Information Theory}\ }\textbf {\bibinfo
  {volume} {60}},\ \bibinfo {pages} {3797} (\bibinfo {year}
  {2014})}\BibitemShut {NoStop}%
\end{thebibliography}%
	\let\addcontentsline\oldaddcontentsline

	
	\onecolumngrid
	\newpage
	
	{\centering
		\large\bfseries
		Supplementary Material: The Scrooge ensemble in many-body quantum systems
		\par}

	\tableofcontents

	\section{Simple approximate formula for the moments of the Scrooge ensemble} \label{app:approximate}
	
	In this section, we present the full details and intuition behind our simple approximate formula for the moments of the Scrooge ensemble, which was stated as Theorem \ref{thm:approx} in the main text. We begin by introducing representations for the exact and approximate Scrooge moments as integrals over many-body Hilbert space. The two expressions are are related to one another by a scalar weighting factor $r_\phi$, and we prove that this factor concentrates very strongly around unity. 
	
	\subsection{Integral representations of Scrooge moments}
	
	The first given definition of the Scrooge ensemble was as a `$\rho$-distortion' of the Haar ensemble \cite{jozsa1994lower} (also referred to as an adjusted projected ensemble in Refs.~\cite{goldstein2006distribution, goldstein2016universal}), which has the following explicit construction. Starting from the Haar measure $\dif \mu_{\rm Haar}(\phi)$ over the space of normalized $d$-dimensional wavefunctions, the $\rho$-distortion involves modifying the states as $\ket{\phi} \mapsto \ket{\psi(\phi)} \coloneqq \frac{\sqrt{\rho}\ket{\phi}}{\|\sqrt{\rho}\ket{\phi}\|}$, and reweighting the probability measure by a scalar factor
	\begin{align}
		r_\phi \coloneqq d \braket{\phi|\rho|\phi}.
	\end{align}
	Since the average density matrix for the Haar ensemble is maximally mixed, we have $\int \dif \mu_{\rm Haar}(\phi) r_\phi = 1$. Thus the resulting measure for the Scrooge ensemble $\dif \nu_\rho(\psi)$ is defined over wavefunctions of unit norm, and is itself normalized, $\int \dif \nu_\rho(\psi) = 1$. Given a function $f(\ket{\psi})$, its average over the Scrooge ensemble can be written
	\begin{align}
		\mathbb{E}_{\psi \sim \mathcal{S}_\rho}\big[f(\ket{\psi})\big] \equiv \int \dif \nu_\rho(\psi) f(\ket{\psi}) = \int \dif \mu_{\rm Haar}(\phi) r_\phi  f\left( \frac{\sqrt{d \rho}\ket{\phi}}{\sqrt{r_\phi}} \right)
		\label{eq:Scrooge expectation}
	\end{align}
	In particular, the the matrix of $k$th moments becomes
	\begin{align}
		\chi^{(k)}_{\mathcal{S}_\rho} = \int \dif \mu_{\rm Haar}(\phi) r_\phi^{1-k} \Big( \sqrt{d\rho}\ket{\phi}\bra{\phi}\sqrt{d\rho}\Big)^{\otimes k}
		\label{eq:scrooge moment exact}
	\end{align}
	The above expression is hard to evaluate in general due to the factor $r_\phi^{1-k}$, which precludes any direct usage of the well-known positive integer moments of the Haar ensemble. To obtain the approximate moments, we can instead replace $r_\phi$ with its mean value of unity, yielding the unnormalized moments
	\begin{align}
		\chi^{(k)}_{\mathcal{S}_\rho, {\rm Un}} = \int \dif \mu_{\rm Haar}(\phi)  \Big( \sqrt{d\rho}\ket{\phi}\bra{\phi}\sqrt{d\rho}\Big)^{\otimes k} = (\sqrt{d \rho})^{\otimes k}\chi^{(k)}_{\rm Haar} (\sqrt{d \rho})^{\otimes k}.
		\label{eq:chi approx def 1}
	\end{align}
	Using well-known expressions for the Haar moments, this can be evaluated in closed form
	\begin{align}
		\chi^{(k)}_{\mathcal{S}_\rho, {\rm Un}} = (d\rho)^{\otimes k} \frac{k!}{d^{(k)}}\Pi_{\rm sym}^{(k)} = \frac{d^k}{d^{(k)}}\rho^{\otimes k}\sum_{\pi \in S_k}\pi \approx \rho^{\otimes k}\sum_{\pi \in S_k}\pi \eqqcolon \chi^{(k)}_{\mathcal{S}_\rho, {\rm Appr}}
		\label{eq:chi approx def 2}
	\end{align}
	where $d^{(k)} = d(d+1)\cdots(d+k-1)$ is the rising factorial, and the relative error of the approximation is $1 - \frac{d^k}{d^{(k)}} = 1 - \prod_{l=1}^{k-1}(1 + l/d) \leq 1 - e^{-k(k-1)/2d} = \mathcal{O}(k^2/d)$. Here, $\Pi_{\rm sym}^{(k)}$ is the projector onto the $k$-copy symmetric subspace $\mathcal{H}^{(k)}_{\rm sym} = \text{span}(\{\ket{\phi^{\otimes k}}: \phi \in \mathcal{H}\})$, and in going from Eq.~\eqref{eq:chi approx def 1} to Eq.~\eqref{eq:chi approx def 2},
	we have used the fact that $\rho^{\otimes k}$ commutes with $\Pi_{\rm sym}^{(k)}$. Thus, we can view the approximate moments, defined on the right hand side of \eqref{eq:chi approx def 2} and appearing in Eq.~\eqref{eq:approx} of the main text, as being close to those induced by a distribution over \textit{unnormalized} wavefunctions that is obtained by multiplying Haar-random states $\ket{\phi}$ by the operator $\sqrt{d \rho}$ (see also Ref.~\cite{mark2024maximum}).\\
	
	We already have that $\mathcal{E}^{(k)}_{\rm Appr}$ and $\mathcal{E}^{(k)}_{\rm Un}$ are approximately equal to small \textit{relative} error $\mathcal{O}(k^2/d)$. Theorem \ref{thm:approx} will then be proved if we can show that $\mathcal{E}^{(k)}_{\rm Un}$ and $\mathcal{E}^{(k)}_{\rm Scr}$ are also close in relative error. To establish this, we will show that $r_\phi$ concentrates very strongly around its mean over the Haar ensemble, such that the integrals in \eqref{eq:scrooge moment exact} and \eqref{eq:chi approx def 1} become approximately equal. By the rotational invariance of the Haar ensemble, the induced distribution of $r_\phi$ depends only on the eigenvalues of $\rho$, which we write as $\lambda_i$. 
	We fully characterize the distribution as follows.
	\begin{lemma}
		\label{lem:RDist}
		For a $d$-dimensional density matrix $\rho$, the distribution of $r = d \braket{\phi|\rho|\phi}$ induced by the Haar distribution over states $\ket{\phi}$ has a probability density function
		\begin{align}
			p(r) = S_{d-1}(r; \mu_1, \ldots, \mu_d)
			\label{eq:RPDF}
		\end{align}
		where $\mu_i = d \lambda_i$, with $\lambda_i$ the eigenvalues of $\rho$ in ascending order. $S_{d-1}(x;x_1, \ldots, x_n)$ is the fundamental spline function with $d$ knots, defined Ref.~\cite{curry1966polya}.
	\end{lemma}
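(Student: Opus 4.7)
The plan is to reduce the computation of the distribution of $r$ to a classical identity in approximation theory, namely the representation of a uniform-simplex convex combination as a B-spline density. First, I would diagonalize $\rho = \sum_{i=1}^d \lambda_i \dyad{i}$ and expand $\ket{\phi} = \sum_i c_i \ket{i}$. By invariance of the Haar measure under the unitary mapping any orthonormal basis to the eigenbasis of $\rho$, the vector of squared amplitudes $p_i = |c_i|^2$ is Dirichlet$(1,\ldots,1)$-distributed, i.e.\ uniformly distributed on the standard $(d-1)$-simplex $\Delta^{d-1}$ with constant density $(d-1)!$. Substituting, $r = d\sum_i \lambda_i p_i = \sum_i \mu_i p_i$, so the task reduces to finding the density of a linear functional of a uniform point on the simplex.

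Second, assuming the $\mu_i$ are distinct, I would compute the Laplace transform
\begin{align}
\E[e^{-tr}] = (d-1)!\int_{\Delta^{d-1}} e^{-t\sum_i \mu_i p_i}\, \dif p = (d-1)!\sum_{i=1}^{d}\frac{e^{-t\mu_i}}{(-t)^{d-1}\prod_{j\neq i}(\mu_i - \mu_j)},
\end{align}
where the second equality follows by iterated partial fractions after eliminating the constraint $\sum_i p_i = 1$ (equivalently, by Laplace-transforming the joint density of $d-1$ free simplex coordinates). This sum equals $(d-1)!$ times the divided difference $[\mu_1,\ldots,\mu_d]$ of the function $x\mapsto e^{-tx}/(-t)^{d-1}$. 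By the classical Curry--Schoenberg representation~\cite{curry1966polya}, the fundamental spline $S_{d-1}(\cdot;\mu_1,\ldots,\mu_d)$ is precisely this divided-difference operator applied to the appropriately truncated power function, and it carries exactly the same Laplace transform. Uniqueness of the Laplace transform then identifies $p(r) = S_{d-1}(r;\mu_1,\ldots,\mu_d)$.

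Third, the case of a degenerate spectrum (coinciding eigenvalues of $\rho$) would be handled by continuity: divided differences extend smoothly to coinciding nodes, becoming derivatives of the underlying function, the density of $r$ depends continuously on the spectrum (e.g.\ in the weak sense), and the fundamental spline $S_{d-1}$ is itself continuous in its knot positions, so the identity extends to all $\rho$ as a limit of the nondegenerate case. The main obstacle is essentially bookkeeping --- aligning the precise normalization conventions of $S_{d-1}$ from Ref.~\cite{curry1966polya} with the combinatorial factor $(d-1)!$ arising from the Dirichlet density and the $(-t)^{d-1}$ denominator in the Laplace transform. All remaining steps reduce to standard manipulations of already well-known objects.
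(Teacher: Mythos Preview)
Your proposal is correct, but it takes a more circuitous route than the paper. Both you and the paper begin identically: diagonalize $\rho$, use Haar invariance to conclude that the vector $(\xi_1,\ldots,\xi_d)$ of squared overlaps is uniform on the standard simplex (the paper cites this as Sykora's theorem; you identify it as the Dirichlet$(1,\ldots,1)$ distribution), and thereby reduce to finding the density of $r=\sum_i \mu_i \xi_i$. From there the approaches diverge. The paper simply invokes Theorem~2 of Curry--Schoenberg~\cite{curry1966polya}, which \emph{is} the statement that the fundamental spline $S_{d-1}(\cdot;\mu_1,\ldots,\mu_d)$ equals the density of a linear functional of a uniform point on the simplex (the ``shadow of a simplex'' interpretation). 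Your Laplace-transform computation, the divided-difference identification, and the continuity argument for degenerate knots are therefore effectively re-deriving that theorem rather than citing it. What your approach buys is self-containment and an explicit analytic handle (the Laplace transform), at the cost of the normalization bookkeeping you flag; what the paper's approach buys is brevity, since the geometric representation already packages everything---including the degenerate-knot case---into a single cited result.
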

	For a set of (possibly degenerate) `knots' $\{x_i\}_{i=1}^d$, sorted in ascending order $x_i \leq x_{i+1}$, the fundamental spline function $S_{d-1}(x;x_1, \ldots, x_d)$ is a non-negative piecewise polynomial of $x$, which is non-analytic only at the locations of the knots, and is zero outside the region $[x_1, x_d]$. It plays an important role in polynomial interpolation, and is studied extensively in Ref.~\cite{curry1966polya}. (We note that a relationship between the Scrooge ensemble and polynomial interpolation was also noted in Appendix A of Ref.~\cite{jozsa1994lower}.) Writing $r_\phi = \sum_i \mu_i \xi_i$, where $\xi_i = |\braket{i|\phi}|^2$ are squared overlaps of $\ket{\phi}$ with eigenvectors $\ket{i}$ of $\rho$, Lemma \ref{lem:RDist} follows quickly from a particular geometric representation of the spline function (Theorem 2 of Ref.~\cite{curry1966polya}), along with Sykora's theorem (Ref.~\cite{sykora1974quantum}, Sec.~3), which states that when $\ket{\phi}$ is Haar-random, the joint distribution of $\xi_i$ is uniform over the probability simplex $\{(\xi_1, \ldots, \xi_d):\xi_i \geq 0, \sum_i \xi_i = 1\}$.
	
	The key implication of Lemma \ref{lem:RDist} is the following characterization of expectation values of functions of $r$, which follows from Eq.~(1.5) of Ref.~\cite{curry1966polya}. For a $(d-1)$-differentiable function $f(r)$, we have
	\begin{align}
		\mathbb{E}_{\phi \sim \text{Haar}}[f^{(d-1)}(r_\phi)] = (d-1)! f[\mu_1, \ldots, \mu_d]
		\label{eq:MomentDivDiff}
	\end{align}
	Here, $f^{(d-1)}$ is the $(d-1)$th derivative of $f$, while $f[x_1, \ldots, x_d]$ is the \textit{divided difference}, defined recursively as
	\begin{align}
		f[x_1, \ldots, x_{n+1}] = \frac{f[x_2, \ldots, x_{n+1}] - f[x_1, \ldots, x_n]}{x_{n+1} - x_0}.
	\end{align}
	with $f[x] = f(x)$. As can be verified from the above definition, divided differences are invariant under permutations of the arguments.
	
	\subsection{Concentration properties of $r_\phi$}
	
	Using the above characterization of the distribution of $r_\phi$, we now prove several bounds on the moments of $r_\phi$, which imply very strong concentration of $r_\phi$ around unity.
	\begin{lemma}
		\label{lem:RConc}
		Let $\rho$ be a $d$-dimensional density matrix $\rho$, and let $m$ be any integer such that $m \times \max \textup{eig } \rho \leq 1$. Then for any $q \in [0,m)$, the raw and reciprocal moments of $r_\phi = d\braket{\phi|\rho|\phi}$ for Haar-distributed states $\ket{\phi}$ satisfy
		\begin{align}
			\mathbb{E}_{\phi \sim {\rm Haar}}[r^s_\phi] &\leq \exp\left( \frac{s(s+1)}{2m}\right), & s &\geq 1, \label{eq:RRawMomentBound} \\
			\mathbb{E}_{\phi \sim \textup{Haar}}\big[ r_\phi^{-q}\big] &\leq \exp\left( \frac{q^2}{2(m-q)}\right),& 0 &\leq q < m,
			\label{eq:RReciprocalMomentBound}
		\end{align}
		while the central moments of order $t \geq 1$ satisfy
		\begin{align}
			\mathbb{E}_{\phi \sim \textup{Haar}}\big[|r_\phi - 1|^{t}\big] &\leq 4\frac{\Gamma(t+1)}{(m/4)^{t/2}},\quad\quad t \geq 1.
			\label{eq:RCentralMomentBound}
		\end{align}
	\end{lemma}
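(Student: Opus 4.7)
The plan is to prove all three bounds by the strategy already foreshadowed in the main text: reduce to a flat-spectrum reference state by Schur convexity, then compute the moments explicitly there. Write $r_\phi = d\sum_i \lambda_i \xi_i$ with $\xi_i = |\braket{i|\phi}|^2$ the overlaps of $\ket\phi$ with the eigenbasis of $\rho$. By Sykora's theorem (already invoked after Lemma \ref{lem:RDist}), the vector $(\xi_1,\ldots,\xi_d)$ is uniformly distributed on the probability simplex, and in particular permutation-symmetric. The upper bounds asked for are convex functions of $r_\phi$ (namely $r^s$ for $s\geq 1$, $r^{-q}$ for $q\in[0,m)$, and $|r-1|^t$ for $t\geq 1$), so I will first show that $\lambda \mapsto \E_\xi[\phi(d\sum_i \lambda_i \xi_i)]$ is Schur-convex in $\lambda$ for every convex $\phi$, and then observe that the flat rank-$m$ spectrum $\lambda^\star=(\tfrac1m,\ldots,\tfrac1m,0,\ldots,0)$ majorizes every $\lambda$ with $\max\lambda_i\leq 1/m$.

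For the Schur-convexity step, suppose $\lambda\prec\lambda'$. By the Hardy--Littlewood--Polya/Birkhoff representation, $\lambda=\sum_\sigma p_\sigma \sigma(\lambda')$ for probabilities $p_\sigma$ over permutations $\sigma\in S_d$. Then $d\sum_i \lambda_i \xi_i = \sum_\sigma p_\sigma (d\sum_i \lambda'_i \xi_{\sigma^{-1}(i)})$, so Jensen's inequality gives
\begin{equation}
\phi\bigl(d\textstyle\sum_i \lambda_i \xi_i\bigr)\leq \sum_\sigma p_\sigma \, \phi\bigl(d\textstyle\sum_i \lambda'_i \xi_{\sigma^{-1}(i)}\bigr).
\end{equation}
Taking $\E_\xi$ and using symmetry of the uniform simplex distribution to relabel $\xi_{\sigma^{-1}(i)}\mapsto \xi_i$ inside each term, the right-hand side collapses to $\E_\xi[\phi(d\sum_i\lambda'_i\xi_i)]$. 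Hence the three quantities in \eqref{eq:RRawMomentBound}--\eqref{eq:RCentralMomentBound} are upper bounded by their values at $\lambda^\star$, where $r_\phi = (d/m)\,Y$ with $Y=\sum_{i\in\text{supp}}\xi_i \sim \mathrm{Beta}(m,d-m)$ (a standard marginal of the uniform simplex distribution).

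In this Beta reduction, the moments are explicit: $\E[Y^s]=\Gamma(m+s)\Gamma(d)/[\Gamma(m)\Gamma(d+s)]$ and $\E[Y^{-q}]=\Gamma(m-q)\Gamma(d)/[\Gamma(m)\Gamma(d-q)]$ for $q<m$. Multiplying by $(d/m)^{\pm s}$ yields a product of factors $(1+j/m)/(1+j/d)$, $(1-j/d)/(1-j/m)$ that can be bounded termwise by $\exp(j/m)$, $\exp(j/(m-q))$ respectively, giving the exponentials $\exp(s(s+1)/(2m))$ and $\exp(q^2/(2(m-q)))$ after the elementary summations. For non-integer $s,q$ the same bound follows from $\log\Gamma(x+u)-\log\Gamma(x)=\int_0^u \psi(x+t)\,dt$ together with $\psi(x)\leq \log x$ and $\psi(x)\geq\log(x-1)$, absorbing the slack into the $+1$ in the exponent.

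The main obstacle is the central moment bound \eqref{eq:RCentralMomentBound}, because one cannot just subtract the mean inside a Schur-convexity bound without losing the flat-spectrum reduction. I will handle this by converting \eqref{eq:RRawMomentBound} and \eqref{eq:RReciprocalMomentBound} into two-sided tail bounds via Markov, namely $P(r_\phi \geq 1+u)\leq \inf_s (1+u)^{-s}\exp(s(s+1)/(2m))$ and $P(r_\phi\leq 1-u)\leq \inf_q (1-u)^{q}\exp(q^2/(2(m-q)))$. Optimizing $s$ and $q$ yields sub-Gaussian behaviour $P(|r_\phi-1|\geq u)\lesssim \exp(-m u^2/c)$ in a regime $u\lesssim 1$ with a sub-exponential tail $\exp(-c m u)$ beyond it. Integrating $\E[|r_\phi-1|^t] = \int_0^\infty t u^{t-1} P(|r_\phi-1|\geq u)\,du$ against this two-regime tail and using $\int_0^\infty u^{t-1} e^{-au^2}\,du = \tfrac12 a^{-t/2}\Gamma(t/2)$ together with $\Gamma(t/2)\leq \Gamma(t+1)/(t/2)^{t/2}$ produces the claimed $\mathcal{O}(\Gamma(t+1)/(m/4)^{t/2})$, with the prefactor $4$ absorbing the contribution from the sub-exponential tail and the crossover point. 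Tracking constants carefully through this tail-integration step is the only delicate part; the Schur-convexity reduction and the Beta computation are otherwise elementary.
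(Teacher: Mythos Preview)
Your overall architecture matches the paper's: reduce by Schur convexity to the flat rank-$m$ spectrum, identify $r_\phi$ there with a rescaled $\mathrm{Beta}(m,d-m)$ variable, and compute. Your Schur-convexity step is actually more elementary than the paper's: you use Birkhoff plus Jensen plus permutation symmetry of the simplex distribution directly, whereas the paper goes through the spline/divided-difference representation of Lemma~\ref{lem:RDist} and then invokes a result of Farwig (1985) that convexity of $f^{(d-1)}$ implies convexity (hence Schur convexity) of the divided difference $f[\mu_1,\ldots,\mu_d]$ in the knots. Both arrive at the same conclusion; yours avoids the extra machinery.

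Your perceived obstacle for \eqref{eq:RCentralMomentBound} is not real. The function $r\mapsto|r-1|^t$ is a \emph{fixed} convex function for $t\geq 1$, and the ``$1$'' does not depend on $\lambda$: indeed $\E_\phi[r_\phi]=d\sum_i\lambda_i\E[\xi_i]=1$ for every spectrum. So your own Birkhoff--Jensen argument applies verbatim to $\E[|r_\phi-1|^t]$, reducing it directly to the flat-spectrum case. The paper does exactly this (listing $|r-1|^t$ alongside $r^s$ and $r^{-q}$ as convex integrands), and then in the Beta case quotes a sub-gamma tail bound for the Beta distribution due to Skorski (2023), followed by the same tail-integration $\E[|r-1|^t]=t\int_0^\infty u^{t-1}\,\mathbb{P}(|r-1|>u)\,du$ that you propose. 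Your detour---bootstrapping tail bounds from \eqref{eq:RRawMomentBound} and \eqref{eq:RReciprocalMomentBound} via Markov and then integrating---would also work, but it is unnecessary extra bookkeeping once you realize the Schur reduction goes through for the central moment directly.
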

	Eq.~\eqref{eq:RReciprocalMomentBound} will be used to control the probability that $r_\phi$ becomes dangerously close to zero, while Eq.~\eqref{eq:RCentralMomentBound} shows that deviations away from $r = 1$ are have a sub-exponential distribution. We note that certain one-sided concentration bounds for $r_\phi$ were recently shown in Ref.~\cite{teufel2025canonical} (Lemma 2 therein), but given our additional need to bound the reciprocal moments \eqref{eq:RReciprocalMomentBound}, we will use a different method in the following.\\
	
	\textit{Proof of Lemma \ref{lem:RConc}.---}As noted above and made manifest by Eq.~\eqref{eq:MomentDivDiff}, the moments appearing on the left hand sides of (\ref{eq:RReciprocalMomentBound}, \ref{eq:RCentralMomentBound}) depend on $\rho$ only through its eigenvalues $\lambda_i$. We will use the following result from Ref.~\cite{Farwig1985}. If $f^{(d-1)}(r)$ is a convex function, then the divided difference $f[\mu_1, \ldots, \mu_d]$ is a convex function of the vector $\vec{\mu} = (\mu_1, \ldots, \mu_d)$. Moreover, by the invariance of divided differences under permuting arguments, we can also infer that $f[\mu_1, \ldots, \mu_d]$ is Schur convex in $\vec{\mu}$. Since $r^s$, $r^{-q}$, and $|r-1|^t$ are all convex in the specified range of $s,q$ and $t$, Eq.~\eqref{eq:MomentDivDiff} implies that the moments in question [the left hand sides of Eqs.~(\ref{eq:RRawMomentBound}, \ref{eq:RReciprocalMomentBound}, \ref{eq:RCentralMomentBound})] are Schur convex functions of the eigenvalues of $\rho$.
	
	Recall that Schur convex functions $g(\vec{x})$ satisfy the property that for two vectors $\vec{x}$, $\vec{y}$ such that $\vec{x}$ majorizes $\vec{y}$, we have $g(\vec{x}) \geq g(\vec{y})$. Majorization of $\vec{y}$ by $\vec{x}$, denoted $\vec{y} \prec \vec{x}$, means that the sum of the $k$th largest values of $\vec{y}$ are no greater than the sum of the $k$th largest values of $\vec{x}$, and also that $\sum_{i=1}^d x_i = \sum_{i=1}^d y_i$. One can verify that the spectrum of $\rho$ is majorized by the following spectrum
	\begin{align}
		\vec{\lambda} &\preceq \left( \underbrace{\frac{1}{m}, \ldots, \frac{1}{m}}_{m\text{ times}}, 0, \ldots, 0 \right) & \text{where }m = \lfloor 2^{S_\infty(\rho)}\rfloor = \max \text{eig }\rho 
	\end{align}
	Thus, for a given density matrix $\rho$, the moments we are considering are no larger than those when one sets $\rho = \Pi_m/m$, where $\Pi_m$ is a rank-$m$ projector. In that case, one can straightforwardly show that the rescaled random variable $u \coloneqq (m/d)\times r_\phi$ has the beta distribution $u \sim \mathcal{B}(m,d-m)$ with probability distribution function $p(u) = \frac{u^{m-1}(1-u)^{d-m-1}}{B(m,d-m)}$, where $B(\alpha,\beta) = \frac{\Gamma(\alpha)\Gamma(\beta)}{\Gamma(\alpha+\beta)}$ is the beta function. The probability density function can accordingly be written as
	\begin{align}
		p(r)|_{\rho = \Pi_m/m} &= \frac{m^m}{d^m B(m,d-m)} r^{m-1}(1-mr/d)^{d-m-1} & r \in [0,d/m]
	\end{align}
	where $B(\alpha,\beta) = \frac{\Gamma(\alpha)\Gamma(\beta)}{\Gamma(\alpha+\beta)}$ is the beta function. The raw moments can be computed exactly
	\begin{align}
		\mathbb{E}[r_\phi^{s}]|_{\rho = \Pi_m/m}   &= \frac{\Gamma(d+1)d^s}{\Gamma(d+1+s)} \cdot \frac{\Gamma(m+1+s)}{\Gamma(m+1)m^s}.  
	\end{align}
	For integer $s$, the identity $\Gamma(z+1) = z\Gamma(z)$ can be used to show that the first fraction is always less than unity. As for the second fraction, for $s \geq 0$ this equals $\prod_{i=1}^s(1+\frac{i}{m}) \leq e^{s(s+1)/2m}$, using the inequality $(1+x) \leq e^x$.  Similarly, for $s = -q <0$, the fraction becomes $\prod_{i=1}^{q-1} (1 + \frac{i}{m-i}) \leq \exp(\sum_{i=1}^{q-1}\ln(1 + \frac{i}{m-q})) \leq \exp(q^2/2(m-q))$. Both these bounds can also be shown to hold for non-integer $s$ using Gautschi's inequalities. Using the Schur convexity argument above, this implies Eq.~\eqref{eq:RReciprocalMomentBound}.

	To get an upper bound on the central moments, we use the following sub-gamma concentration inequality for the beta distribution, proved in Ref.~\cite{skorski2023}.
	\begin{align}
		\mathbb{P}_{u \sim \mathcal{B}(m,d-m)}(|u - \bar{u}| > \epsilon) \leq 2\exp\left(- \frac{\epsilon^2}{2(\sigma^2_u+ \frac{c \epsilon}{3})}\right),
	\end{align}
	where $\sigma^2_u = \frac{m(d-m)}{d^2(d+1)}$ is the variance of $u$ and $c = \frac{d-2m}{2d(d+2)}$ is the `scale parameter', which governs the crossover from Gaussian- to exponential-type concentration \cite{boucheron2013}. Applied to the rescaled variable $r_\phi = du/m$, we get
	\begin{align}
		\mathbb{E}[|r_\phi-1|^{t}] &= t \int_0^\infty\dif x\,x^{t-1} \mathbb{P}\big(|r_\phi-1| > x\big) \nonumber\\
		&\leq 2t \int_0^\infty \dif x\, x^{t-1} \exp\left(-\frac{mx^2}{2(1+x/6)} \right)\nonumber\\
		&= 2t m^{-\ell}\int_0^\infty \dif y\, y^{t -1}\exp\left(-\frac{1}{2}\frac{y^2}{1+\frac{y}{6\sqrt{m}}} \right) \nonumber\\
		&\leq 2t m^{-\ell}\int_0^\infty \dif y\, y^{t -1}(e^{-y^2/4}+e^{-3\sqrt{m}y/2}) = \frac{2\Gamma(t/2 + 1)}{(m/4)^{t/2}} + \frac{2 \Gamma(t+1)}{(3m/2)^{t}} \leq 4 \frac{\Gamma(t+1)}{(m/4)^{t/2}},
	\end{align}
	which gives us Eq.~\eqref{eq:RCentralMomentBound}. In going from the second to the third line, we have used the inequality $\exp(-\frac{1}{a+b}) \leq \exp(-\frac{1}{2\max(a,b)}) \leq e^{-1/2a} + e^{-1/2b}$, valid for $a,b > 0$. 
	
	\subsection{Relating the exact and approximate moments}
	
	Our strategy now will be to show that the conjugated moments
	\begin{align}
		\omega^{(k)}_{\mathcal{S}_\rho} &\coloneqq \big((d\rho)^{-1/2}\big)^{\otimes k} \chi^{(k)}_{\mathcal{S}_\rho}\big((d\rho)^{-1/2}\big)^{\otimes k} = \int \dif \mu_{\rm Haar}(\phi)\, r_\phi^{1-k} \big(\ket{\phi}\bra{\phi}\big)^{\otimes k} \\
		\omega^{(k)}_{\mathcal{S}_\rho, {\rm Un}} &\coloneqq \big((d\rho)^{-1/2}\big)^{\otimes k} \chi^{(k)}_{\mathcal{S}_\rho, {\rm Un}}\big((d\rho)^{-1/2}\big)^{\otimes k} = \int \dif \mu_{\rm Haar}(\phi)\,  \big(\ket{\phi}\bra{\phi}\big)^{\otimes k} = \chi^{(k)}_{\rm Haar}
		\label{eq:conj mom def}
	\end{align}
	are close in relative error, i.e.~that
	\begin{align}
		(1-\delta)\omega^{(k)}_{\mathcal{S}_\rho, {\rm Un}} \preceq \omega^{(k)}_{\mathcal{S}_\rho} \preceq (1+\delta)\omega^{(k)}_{\mathcal{S}_\rho, {\rm Un}}
		\label{eq:RelErrorConjugated}
	\end{align}
	for some $\delta \geq 0$. Since conjugation by Hermitian operators preserves semidefinite ordering, this will in turn prove that $\chi^{(k)}_{\mathcal{S}_\rho}$ and $\chi^{(k)}_{\mathcal{S}_\rho, {\rm Un}}$ are close in relative error. When combined with the small relative-error approximation \eqref{eq:chi approx def 2}, this will be sufficient to prove Theorem \ref{thm:approx}.
	
	We take an arbitrary state $\ket{x} \in \mathcal{H}_{\rm sym}^{(k)}$ supported in the $k$-copy symmetric subspace, and define $\Pi_\phi \coloneqq \ket{\phi}\bra{\phi}^{\otimes k}$ for convenience. Then,
	\begin{align}
		\braket{x|\omega^{(k)}_{\mathcal{S}_\rho, {\rm Un}}-\omega^{(k)}_{\mathcal{S}_\rho}|x} &= \mathbb{E}_{\phi \sim \text{Haar}}\Big[r^{1-k}_\phi(r^{k-1}_\phi-1)\braket{x|\Pi_\phi|x}\Big]
	\end{align}
	From the identity $x^{n} - 1 = (x-1)(1 + x + \cdots + x^{n-1})$, we can break up this expression into separate terms and apply the triangle inequality
	\begin{align}
		\label{eq:ScrDiff}
		|\braket{x|\omega^{(k)}_{\mathcal{S}_\rho, {\rm Un}}-\omega^{(k)}_{\mathcal{S}_\rho}|x}| &=  \left|\sum_{a=0}^{k-2}\mathbb{E}_{\phi \sim \text{Haar}}\Big[(r_\phi-1)r_\phi^{1-k+a}\braket{x|\Pi_\phi|x} \Big]\right| \nonumber\\
		&\leq \sum_{a=0}^{k-2}\left|\mathbb{E}_{\phi \sim \text{Haar}}\Big[|r_\phi-1|r_\phi^{1-k+a}\braket{x|\Pi_\phi|x} \Big]\right| \nonumber\\
		&\leq \sum_{a=0}^{k-2} \|r_\phi-1\|_{L^p} \|r_\phi^{1-k+a}\|_{L^q} \|\braket{x|\Pi_\phi|x}\|_{L^s} &\text{for any }p,q,s \geq 1\text{ s.t. }\frac{1}{p} + \frac{1}{q} + \frac{1}{s} \leq 1,
	\end{align}
	where the $L^p$ norms over the Haar measure are defined as $\|f(\phi)\|_{L^p} \coloneqq (\int \dif\mu_{\rm Haar}(\phi) |f(\phi)|^p)^{1/p}$, and H{\"o}lder's inequality has been used in the last step. We now demand that $s \in [1,2]$. For such fractional values of $s$, a second usage of H{\"o}lder's inequality gives
	\begin{align}
		\|\braket{x|\Pi_\phi|x}\|_{L^s} &\leq \|\braket{x|\Pi_\phi|x}\|_{L^1}^{2/s - 1} \|\braket{x|\Pi_\phi|x}\|_{L^2}^{2 - 2/s}  & \forall s \in [1,2].
	\end{align}
	Since $\braket{x|\Pi_\phi|x}$ is a non-negative variable, the integer moments appearing in the above expression can be computed.
	\begin{align}
		\|\braket{x|\Pi_\phi|x}\|_{L^1} &= \mathbbm{E}_{\phi \sim \text{Haar}}\Big[ \braket{x|\Pi_\phi|x}\Big] = \frac{(k)!}{d^{(k)}} \braket{x|\Pi_{\rm sym}^{(k)}| x} = \frac{(k)!}{d^{(k)}} \\
		\|\braket{x|\Pi_\phi|x}\|_{L^2}^{2} &= \mathbbm{E}_{\phi \sim \text{Haar}}\Big[ \braket{x|\Pi_\phi|x}^2\Big] = \frac{(2k)!}{d^{(2k)}} \braket{x\otimes x|\Pi_{\rm sym}^{(2k)}|x \otimes x}  \leq 4^k\left(\frac{k!}{d^{(k)}}\right)^2,
	\end{align}
	This yields
	\begin{align}
		\|\braket{x|\Pi_\phi|x}\|_{L^s} &\leq 2^{k(1 - 1/s)}\frac{k!}{d^{(k)}}  & \forall s \in [1,2] 
	\end{align}
	The other factors in the summand of Eq.~\eqref{eq:ScrDiff} can be bounded using Lemma \ref{lem:RConc}.
	\begin{align}
		\|r_{\phi}^{1-k+a}\|_{L^q} &\leq \left[\exp\left(\frac{q^2(k-a-1)^2}{2(m-q(k-a-1))} \right)\right]^{1/q} \leq e^{q k^2/m} & \text{assuming }qk &\leq m/2\\
		\|r_\phi-1\|_{L^p} &\leq \left[4 \frac{\Gamma(p+1)}{(m/4)^{p/2}}\right]^{1/p} \leq \frac{2p}{\sqrt{m}} & p &\geq 4
	\end{align}
	where we have use the bound $4\Gamma(p+1) \leq p^p$ which holds for $p \geq 4$. Since we can assume $k \geq 2$, we choose $s = k/(k-1)$, $q = p = 2k$, then noticing that  $\braket{x|\omega^{(k)}_{\mathcal{S}_\rho, {\rm Un}}|x} = \braket{x|\chi^{(k)}_{\rm Haar}|x} = \frac{k!}{d^{(k)}}$, we can bring all these inequalities together to obtain
	\begin{align}
		|\braket{x|\omega^{(k)}_{\mathcal{S}_\rho, {\rm Un}}-\omega^{(k)}_{\mathcal{S}_\rho}|x}| &\leq \braket{x|\omega^{(k)}_{\mathcal{S}_\rho, {\rm Un}}|x} \times \frac{11k^2}{\sqrt{m}} & \text{for }k \leq m^{1/3}/2.
	\end{align}
	Thus, for the stated range of $k$, we have proved Eq.~\eqref{eq:RelErrorConjugated} for $\delta = \frac{11k^2}{\sqrt{m}}$. Recalling that $m$ can be any integer less than $2^{S_{\infty}(\rho)} = (\max \text{eig }\rho)^{-1}$, and also using the approximation \eqref{eq:chi approx def 2}, which has relative error $\mathcal{O}(k^2/d) \leq \mathcal{O}(k^2 2^{-S_\infty(\rho)})$, this implies the relative error bound stated in Theorem \ref{thm:approx}. \qed

	\section{Concentration of all local properties} \label{app:concentration}
	
	In this section, we consider the moments of the Scrooge ensemble after tracing out small subregions.
	We first present and prove our main result, which shows that the moments of the Scrooge ensemble rapidly collapse to the moments, $\rho^{\otimes k}$, of the background state $\rho$ upon tracing out small subregions.
	This holds up to an approximation error that, for typical $\rho$, decays exponentially in the size of the subregion that is traced out.
	We then apply this result to show that Scrooge-random states cannot be distinguished from the background state $\rho$ after a small amount of local depolarizing noise has been applied. More generally, since depolarizing noise can be thought of as equivalent to tracing out random small subregions, we expect to see a similar correspondence between the behavior of Scrooge-random states under noise vs.~after tracing out qubits.
	
	\subsection{Approximation of subsystem moments}
	
	Here we substantiate the claims made in the main text with rigorous bounds on the relative error of the approximation in Eq.~\eqref{eq:subsystem approx}. This will be used extenstively in subsequent sections. 
	\begin{theorem} \label{thm:subsystem full}
		For $k \geq 0$, let $\{A_r\}_{r=1}^k$ be any collection of subregions, with complements $\{B_r\}_{r=1}^k$. The $k$-th subsystem moments of the Scrooge ensemble $\chi^{(k)}_{A_1, \cdots, A_k}$, obtained by tracing out regions $B_r$ from the full moments $\chi_{\mathcal{S}_\rho}^{(k)}$, can be approximated by
		\begin{align}
			\chi^{(k)}_{A_1, \cdots, A_k} \coloneqq \tr_{B_1 \cdots B_k}[\chi_{\mathcal{S}_\rho}^{(k)}] \approx  \tr_{B_1 \cdots B_k}\big[\rho^{\otimes k}\big] 
			\label{eq:reduced approx full}
		\end{align}
		up to relative error
		\begin{align}
			\delta[A_1, \ldots, A_k] = O(k^2 2^{-\min_{r} \hat{S}_\infty(B_r|A_r)}).
			\label{eq:subsystem error general}
		\end{align}
		where $\hat{S}_\infty(B_r|A_r)_\rho \geq 0$ [defined below in Eq.~\eqref{eq:entropy hat def}] is an entropy-like quantity evaluated on a single copy of the state $\rho \equiv \rho^{A_rB_r}$. For moments that have a product structure across different copies, the relative error
		\begin{align}
			\delta_{\rm prod}[A_1, \ldots, A_k] \coloneqq \max_{\ket{x} = \ket{x_1 \otimes \cdots \otimes x_r}}\frac{\left|\braket{x|\chi_{A_1, \ldots A_k}^{(k)}|x} - \braket{x|\tr_{B_1 \cdots B_k}\big[\rho^{\otimes k}\big]|x}\right|}{\braket{x|\tr_{B_1 \cdots B_k}\big[\rho^{\otimes k}\big]|x}}
			\label{eq:epsilonProd}
		\end{align}
		improves to $\epsilon_{\rm prod} = O(k^2 2^{-\min_{r} S_\infty^*(B_r)})$, where $S_\infty^*(B_r) $ is the minimum possible post-measurement min-entropy on $B_r$ after performing a measurement on $A_r$, starting from the state $\rho^{A_rB_r}$ [Eq.~\eqref{eq:post meas def}]. 
	\end{theorem}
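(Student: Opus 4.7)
The plan is to derive Theorem~\ref{thm:subsystem full} directly from Theorem~\ref{thm:approx} by partial-tracing the approximate permutation formula and bounding the non-identity contributions.

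First, I would pass the relative-error sandwich of Theorem~\ref{thm:approx} through the partial trace. Since $\tr_{B_1\cdots B_k}$ is completely positive and hence order-preserving, the inequalities $(1\mp\delta_{\rho,k})\,\rho^{\otimes k}\sum_{\pi\in S_k}\pi \,\preceq\, \chi^{(k)}_{\mathcal{S}_\rho} \,\preceq\, (1\pm\delta_{\rho,k})\,\rho^{\otimes k}\sum_{\pi\in S_k}\pi$ descend to
\begin{align}
(1-\delta_{\rho,k})\sum_{\pi\in S_k}T_\pi \,\preceq\, \chi^{(k)}_{A_1,\ldots,A_k} \,\preceq\, (1+\delta_{\rho,k})\sum_{\pi\in S_k}T_\pi, \qquad T_\pi \coloneqq \tr_{B_1\cdots B_k}\!\bigl[\rho^{\otimes k}\,\pi\bigr].
\end{align}
The identity term $T_e=\bigotimes_r \rho_{A_r}$ is precisely the target~\eqref{eq:reduced approx full}, so the whole task reduces to showing $\sum_{\pi\neq e}T_\pi\preceq \delta' T_e$ with $\delta'$ of the claimed form.

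Next, I would decompose the non-identity contributions cycle-by-cycle. The standard permutation--trace identity $\tr[(O_1\otimes\cdots\otimes O_k)\pi]=\prod_c \tr[O_{r_1}O_{r_2}\cdots O_{r_\ell}]$ (one factor per cycle $c=(r_1,\ldots,r_\ell)$ of $\pi$) implies that, for any product test vector $\ket{\vec\varphi}=\bigotimes_r\ket{\varphi_r}$, the expectation $\bra{\vec\varphi}T_\pi\ket{\vec\varphi}$ factorises over the cycles of $\pi$ into products of traces of the form $\tr\!\bigl[\prod_i(\ket{\varphi_{r_i}}\!\bra{\varphi_{r_i}}\otimes I_{B_{r_i}})\rho\bigr]$. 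Fixed points ($\ell=1$) give the identity factor $p_{A_r}(\varphi_r)=\bra{\varphi_r}\rho_{A_r}\ket{\varphi_r}$. For longer cycles, the rank-one projectors collapse adjacent $\rho$'s via $(\ket{\varphi}\!\bra{\varphi}\otimes I_B)\rho(\ket{\varphi}\!\bra{\varphi}\otimes I_B)=\ket{\varphi}\!\bra{\varphi}\otimes K_\varphi^B$, where $K_\varphi^B\coloneqq\bra{\varphi}\rho\ket{\varphi}_A$ is the unnormalised post-measurement state on $B$; a repeated Hölder/Cauchy--Schwarz argument then bounds each $\ell$-cycle contribution by $\prod_i p_{A_{r_i}}(\varphi_{r_i})\cdot 2^{-(\ell-1)\min_i S_\infty^*(B_{r_i})}$, i.e.~a suppression of $2^{-(\ell-1)\min_i S_\infty^*(B_{r_i})}$ relative to the identity factor.

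Summing over $\pi\neq e$, the leading contributions come from the $\binom{k}{2}$ transpositions, each suppressed by at least $2^{-\min_r S_\infty^*(B_r)}$; all other non-identity $\pi$ contain an additional or longer non-trivial cycle and are strictly more suppressed. A simple counting argument then gives $\sum_{\pi\neq e}\bra{\vec\varphi}T_\pi\ket{\vec\varphi}\le O\!\bigl(k^2\,2^{-\min_r S_\infty^*(B_r)}\bigr)\bra{\vec\varphi}T_e\ket{\vec\varphi}$, establishing the product-case bound~\eqref{eq:epsilonProd}.

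The main obstacle is promoting this to the operator-level inequality~\eqref{eq:subsystem error general}: there the rank-one projector must be replaced by an arbitrary positive operator on $A_r$, and the neat collapse used above no longer holds. My strategy would be to define $\hat S_\infty(B_r|A_r)_\rho$ precisely as the worst-case suppression appearing in the analogous Cauchy--Schwarz step, optimised over all positive test operators on $A_r$, verify that it is nonnegative, and show that this single quantity simultaneously controls every cycle length (likely via monotonicity of $\tr[\hat K^\ell]$ in $\ell$ together with Hölder applied along the cycle). Once this per-cycle PSD bound is in hand, the counting argument from the product case transfers unchanged to give the operator bound~\eqref{eq:subsystem error general} with the claimed error $O(k^2\,2^{-\min_r \hat S_\infty(B_r|A_r)})$.
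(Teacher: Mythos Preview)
Your overall architecture---pass Theorem~\ref{thm:approx} through the partial trace, decompose $\sum_{\pi\neq e}T_\pi$ cycle-by-cycle, bound each cycle by a suppression factor, and sum---matches the paper exactly, and your treatment of the product case is essentially the paper's argument: the rank-one projectors collapse adjacent $\rho$'s to post-measurement states on $B_r$, H\"older along the cycle gives a factor $\prod_{r=2}^\ell\|\rho^{B_r}_{y_r}\|_\infty\le 2^{-(\ell-1)\min_r S_\infty^*(B_r)}$, and the permutation sum yields the $O(k^2)$ prefactor.

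The genuine gap is in the general (non-product) case. You frame the obstruction as ``the rank-one projector must be replaced by an arbitrary positive operator on $A_r$,'' but that is not the issue: the test state $\ket{x}$ on $A_1\cdots A_k$ is still rank-one; what fails is that $\ket{x}$ is \emph{entangled across copies}, so the cycle no longer factorises into single-copy pieces and your collapse step does not apply. Your proposal to simply \emph{define} $\hat S_\infty(B_r|A_r)$ as ``the worst-case suppression optimised over all positive test operators on $A_r$'' is circular as stated and does not explain why a single-copy quantity can control an arbitrarily entangled $k$-copy test vector.

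The paper resolves this with a specific trick you have not identified: one writes $\ket{x}_{A_1\cdots A_\ell}=(\bra{\bar x}_{A'}\otimes I^A)\bigotimes_r\ket{\Omega}_{A_rA'_r}$ using unnormalised EPR pairs to auxiliary spaces $A'_r$, thereby restoring a product structure across copies at the cost of the extra legs. The permutation-trace identity then collapses the cycle to a single $A_1B_1$ copy, and H\"older bounds the result by a product of $\|K^\dagger K\|_\infty$ with $K^{B_r\to A_rB_rA'_r}=(\rho^{A_rB_r})^{1/2}(\rho^{A_r})^{-1/2}\ket{\Omega}_{A_rA'_r}$. It is this EPR unfolding that forces the \emph{partial transpose} to appear and produces the concrete definition $\hat S_\infty(B|A)=-\log_2\|(\rho^A)^{-1/2}(\rho^{AB})^{T_B}(\rho^A)^{-1/2}\|_\infty$; nonnegativity and the other properties in Section~\ref{app:entropy} are then proved separately. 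Without this step (or an equivalent device for disentangling the test vector), your plan for the operator-level bound does not go through.
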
  \noindent The case where all $\{A_r\}_{r=1}^k$ are equal to a single subregion $A$ corresponds to Eq.~\eqref{eq:subsystem approx} in the main text.\\

	We argue below that for typical background states $\rho$, the entropic quantities $\hat{S}_\infty(B_r|A_r)$ scale linearly in the number of qubits in $B_r$. Thus, we expect that an approximation accuracy $\epsilon$ will be achieved when the number of qubits in the regions $B_r$ scales as $\Omega(\log(1/\epsilon))$, which is independent of the system size for a constant target accuracy $\epsilon$. Surprisingly, after tracing out a vanishing fraction of the total number of qubits, the low-order moments become close to those that would be obtained by replacing the random states $\dyad{\psi}$ with the fixed bakcground density matrix $\rho$. Thus, instance-to-instance fluctuations of Scrooge-random states can only be seen in truly global quantities, such as measurement outcome distributions (see Section \ref{app:output distributions}).
	
	Our characterisation of finite moments of reduced states should be distinguished from standard formulations of subsystem thermalization in closed quantum systems, where $A$ is typically smaller than the traced out region $B$. In that spirit, a recent work has shown that Scrooge-random states exhibit \textit{canonical typicality}: the reduced state $\tr_B \psi $ is close to $\tr_B(\rho)$ in trace distance with high probability, provided $B$ scales linearly with system size \cite{teufel2025canonical}. In contrast, Eq.~\eqref{eq:reduced approx full} continues to hold for much smaller regions $B$, as small as $|B| =\Theta( \log 1/\epsilon)$. Indeed, relative-error closeness of the moments of two ensembles $\{(p_i, \rho_i)\}$, $\{(p_i, \sigma_i)\}$ is an incomparable condition to closeness of the constituent states in trace distance, $\|\rho_i - \sigma_i\|_1$. Moments capture the statistics of expectation values $\tr[O \rho_i]$ for observables $O$ that are not themselves dependent on $\rho_i$. In contrast, the trace distance between the states in two ensemble $\| \rho_i - \sigma_i\|_1 = \sup_{O_i: \|O_i\|_\infty = 1}\tr[O_i(\rho_i - \sigma_i)]$ captures the distinguishability of the ensembles using statistical tests that leverage information about specific instances of states in the ensembles $\rho_i$, $\sigma_i$. This explains why the approximation \eqref{eq:reduced approx full} can be accurate in regimes where we do not expect canonical typicality to hold. See also Footnote~\cite{subsystemFootnote} in the main text.

	\subsection{Properties of entropic quantities $S^*_\infty(B)$ and $\hat{S}_\infty(B|A)$}\label{app:entropy}
	
	In Theorem \ref{thm:subsystem full}, the relative error of the approximation \eqref{eq:reduced approx full} for the $k$th moments is controlled by a single-copy entropic quantity $\hat{S}_\infty(B_r|A_r)_\rho$ evaluated on the background state $\rho$. We provide the full definition of this quantity and some of its properties here. For a given copy of the system $r$, we have
	\begin{align}
		\hat{S}_\infty(B_r|A_r)_\rho \coloneqq -\log_2 \left\| (\rho^{A_r})^{-1/2} (\rho^{A_rB_r})^{T_{B_r}} (\rho^{A_r})^{-1/2} \right\|_\infty,
		\label{eq:entropy hat def}
	\end{align}
	where $T_{B_r}$ denotes a partial transpose on $B_r$.
	For product moments, the relative error $\delta_{\rm prod}$ [Eq.~\eqref{eq:epsilonProd}] can be improved by replacing the above quantity with the \textit{minimum post-measurement min-entropy}
	\begin{align}
		S_\infty^*(B_r)_\rho &\coloneqq \min_{y_r \in \text{sup}[\rho^{A_r}]} \left(-\log_2 \left\| \rho^{B_r}_{y_r} \right\|_\infty\right)
		\label{eq:post meas def}
	\end{align}
	Here, $\rho^{B_r}_{y_r}$ is the conditional state on $B_r$ after performing a measurement of $A_r$ on the background state $\rho$, and obtaining an outcome that corresponds to a POVM element $\dyad{y_r}$ [see Eq.~\eqref{eq:cond state def}]. 
	For convenience, we will drop the label $r$ for the rest of this section; thus $A$ and $B$ will refer to single-copy Hilbert spaces.
	
	Here we list some basic facts about $\hat{S}_\infty(B|A)$ and $S^*_\infty(B)$. Overall, these suggest an interpretation of each as a measure of the entropy in subsystem $B$ that cannot be attributed to correlations with qubits in $A$.
	
	\newcommand{\itemeqnumber}[1][]{\hfill\refstepcounter{equation}\textup{(\theequation)}{\label{#1}}}
	\begin{itemize}
		\item $0 \leq \hat{S}_\infty(B|A)_{\rho} \leq S^*_\infty(B)_{\rho} \leq S_\infty(\rho^B)$ \itemeqnumber[eq:entropy nonneg]
		\item $\hat{S}_\infty(B|A) = S^*_\infty(B) = S_\infty(\rho^B)$ if $\rho = \rho^A \otimes \rho^B$ \itemeqnumber[eq:entropy product]
		\item $\hat{S}_\infty(B_1B_2|A_1A_2)_{\rho^{A_1B_1} \otimes \sigma^{A_2B_2}} = \hat{S}_\infty(B_1|A_1)_{\rho^{A_1B_1}} + \hat{S}_\infty(B_2|A_2)_{\sigma^{A_2B_2}}$. \itemeqnumber[eq:entropy additive]
		\item $\hat{S}_\infty(B|A)_\rho$ is quasi-concave in $\rho$, i.e.~for $\rho = \sum_i p_i \rho_i$ with $p_i \geq 0$, $\sum_i p_i = 1$, and $\rho_i$ a set of density operators, we have $\hat{S}_\infty(B|A)_\rho \geq \min_i  \hat{S}_\infty(B|A)_{\rho_i}$ \itemeqnumber[eq:entropy convex]
		\item For classical states, where $\rho^{AB} = \sum_{ij} p_{ij} \dyad{i}_A \otimes \dyad{j}_B$ for some pair of orthogonal bases $\{\ket{i}_A\}_i$, $\{\ket{j}_B\}_i$ and probabilities $p_{ij}$, we have $\hat{S}_\infty(B|A) = S^*_\infty(B) = H^\downarrow_\infty(B|A)_{p_{ij}}$, where $H^\downarrow_\infty(B|A)_{p_{ij}} = -\log(\max_{ij} p_{j|i})$ is the classical conditional min-entropy, with $p_{j|i} \coloneqq p_{ij}/(\sum_{j'}p_{ij'})$ the conditional probabilities. \itemeqnumber[eq:entropy classical]
	\end{itemize}
	We also note that the definition of $\hat{S}_\infty(B|A)$ bears a close resemblance to the quantum conditional min-entropy $H^{\downarrow}_\infty(B|A)_\rho \coloneqq -\log_2 \|(\rho^{A})^{-1/2} \rho^{AB} (\rho^A)^{-1/2}\|_\infty$ \cite{tomamichel2016}, which differs only by the presence of a partial transpose in Eq.~\eqref{eq:entropy hat def}; indeed, Property \eqref{eq:entropy classical} can be shown straightforwardly from this observation. Unlike $\hat{S}_\infty(B|A)$, the conditional min-entropy can be negative (but only when $\rho^{AB}$ has bipartite distillable entanglement).
	
	Properties \eqref{eq:entropy product} and \eqref{eq:entropy additive} follow immediately from the definition of $\hat{S}_\infty(B|A)_\rho$. In particular, these imply that if we extend $B$ by tensoring in a mixed state $\sigma^{B_2}$, then $\hat{S}_\infty(B|A)$ will increase by an amount $S_\infty(\sigma^{B_2})$. Together with the non-negativity of $\hat{S}_\infty(B|A)$, this suggests that $\hat{S}_\infty(B|A)_\rho$ will be extensive in the number of qubits of $B$ in typical many-body mixed states, which typically have finite entropy density and weak correlations between distant degrees of freedom. 
	
	To prove Property \eqref{eq:entropy nonneg}, we provide the following alternative characterisation of $\hat{S}_\infty(B|A)_\rho$. For any Kraus operator $K^{A \rightarrow B'}$, which maps states on $A$ to states on a Hilbert space $B'$ of the same dimension as $B$, we define
	\begin{align}
		\rho_K^{BB'} = \frac{K^{A \rightarrow B'}\rho^{AB}(K^{A \rightarrow B'})^\dagger}{\tr[K^{A \rightarrow B'}\rho^{A}(K^{A \rightarrow B'})^\dagger]}
	\end{align}
	which is evidently a properly normalised state. Then we have
	\begin{align}
		\hat{S}_\infty(B|A)_\rho = -\log_2 \max_{K^{A \rightarrow B'}} \left|\tr[F^{BB'}\rho^{BB'}_K]\right|
		\label{eq:entropy hat swap}
	\end{align}
	where $F^{BB'} = \sum_{i,j=1}^{d_B} \ket{i}\bra{j}_{B}\otimes \ket{j}\bra{i}_{B'}$ is the swap operator between Hilbert spaces $B$ and $B'$. Eq.~\eqref{eq:entropy hat swap} can be shown using the fact that any state $\ket{x}_{AB}$ can be written as $(K^{A \rightarrow B'})^\dagger \ket{\Omega}_{BB'}$, where $\ket{\Omega}_{BB'} = \sum_{j=1}^{d_B}\ket{j}_B \otimes \ket{j}_{B'}$, along with the identity $(\ket{\Omega}\bra{\Omega}_{BB'})^{T_B} = F_{BB'}$. If we were to restrict the maximization over $K^{A \rightarrow B'}$ to rank-1 operators $K^{A \rightarrow B'} = \ket{x_{B'}}\bra{y_A}$, we would obtain $S_\infty^*(B)$, and so $\hat{S}_\infty(B|A) \leq S^*_\infty(B)$. Moreover, since $F^{BB'}$ has unit spectral norm, H{\"o}lder's inequality applied to the above trace gives us non-negativity of $\hat{S}_\infty(B|A)$. Finally, for any $\ket{y}$ there exists a rank-1 POVM $\{\ket{y_i}\bra{y_i}\}_i$ which contains $\ket{y_0}\bra{y_0} = \ket{y}\bra{y}$ as the $i = 0$-th element. Then, $\rho^{B} = \sum_i p_i \rho^B_{y_i}$, where $p_i$ are the outcome probabilities for this POVM, and by quasi-convexity of the spectral norm we have $\|\rho^B\|_\infty \leq \|\rho^{B}_{y}\|_\infty$. This proves property \eqref{eq:entropy nonneg}.
	
	Finally, property \eqref{eq:entropy convex} follows from a second alternative characterization of the norm in \eqref{eq:entropy hat def}.
	\begin{align}
		\|(\rho^{A})^{-1/2}(\rho^{AB})^{T_B}(\rho^{A})^{-1/2}\|_\infty = \inf\Big\{ \lambda \,\Big|\, -\lambda \rho^A \preceq (\rho^{AB})^{T_B} \preceq \lambda \rho^A   \Big\}
		\label{eq:entropy had inf}
	\end{align}
	If $\rho^{AB} = \sum_i p_i \sigma_i^{AB}$ for a set of probabilities $p_i \geq 0$, then similarly $\rho^A =\sum_i p_i \sigma_i^A$ where $\sigma^A_i = \tr_B[\sigma^{AB}_i]$. We then $\min_i \hat{S}_\infty(B|A)_{\sigma_i} = \inf\{\lambda\, |\, -\lambda \sigma_i^A \preceq \sigma^{AB}_i \preceq \lambda \sigma_i^A\, \forall i\}$. The condition in this set implies $\sum_i p_i (-\lambda \sigma^{A}_i) \preceq \sum_i p_i \sigma^{AB}_i \preceq \sum_i p_i \lambda \sigma^A_i$, and this proves \eqref{eq:entropy convex}.


	\subsection{Collapse of the Scrooge ensemble under small amounts of noise}
	
	We can now establish the following collapse of Scrooge-random output distributions under small amounts of noise.
	\begin{corollary}
		[Scrooge-random output distributions collapse under noise] \label{cor: noise}
		Let $\psi \sim \mathcal{E}$ be drawn from an ensemble $\mathcal{E}$ that forms an $\epsilon$-approximate Scrooge $k$-design with relative error $\epsilon$, and let $\tilde{\psi} = \mathcal{D}_\gamma(\chi)$ denote the  same state acted on by depolarizing noise of strength $\gamma$ on each qubit.
		Let $\tilde{\rho} = \mathcal{D}_\gamma(\rho)$ denote the density matrix $\rho$ acted on by the same noise channel.
		Finally, let $S^*_k(\ell) = \min_{B : |B| \leq \ell} S^*_\infty(B)$ be the smallest post-measurement entropy over all regions $B$ of size at most $\ell$ [with $S^*_\infty(B)$ defined in \eqref{eq:post meas def}].
		
		Consider any quantum experiment that prepares either $k$ copies of an unknown instance of $\tilde{\psi}$, or $k$ copies of $\tilde{\rho}$, and performs any sequence of measurements on each copy.
		No statistical test can distinguish the measurement outcomes from the states $\tilde{\psi}$ and $\tilde{\rho}$ with advantage greater than $k e^{-\mathcal{O}(\gamma n)} +  k^2 e^{-S^*_k(\gamma n/4)} + \epsilon$.
	\end{corollary}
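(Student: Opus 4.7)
\medskip
\noindent\emph{Proof proposal.}
The plan is to decompose the depolarizing channel as a mixture of erasure channels, control the erasure pattern via Chernoff, and then reduce the problem to the reduced-moment comparison provided by Theorem~\ref{thm:subsystem full}. Concretely, I would start from the representation $\mathcal{D}_\gamma(\sigma) = \mathbb{E}_{B}[\tr_B(\sigma) \otimes (I/2^{|B|})_B]$, where $B \subseteq [n]$ is a random subset with each qubit included independently with probability $\gamma$. Applying this copywise and averaging over $\psi \sim \mathcal{E}$ yields
\begin{align}
\mathbb{E}_{\psi}\big[\tilde{\psi}^{\otimes k}\big] = \mathbb{E}_{B_1,\dots,B_k}\!\Big[\chi^{(k)}_{A_1,\ldots,A_k}[\mathcal{E}] \otimes \bigotimes_{r=1}^k (I/2^{|B_r|})_{B_r}\Big],
\end{align}
with $A_r=B_r^c$, and analogously for $\tilde{\rho}^{\otimes k}$ with $\chi^{(k)}_{A_1,\ldots,A_k}[\mathcal{E}]$ replaced by $\tr_{B_1\cdots B_k}\rho^{\otimes k}$. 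Since optimal distinguishing advantage equals $\tfrac12\|\mathbb{E}[\tilde\psi^{\otimes k}] - \tilde\rho^{\otimes k}\|_1$, the maximally-mixed factors on $B_r$ have unit trace norm, and trace norm is convex, the task reduces to bounding $\mathbb{E}_{B_1,\dots,B_k}\big\|\chi^{(k)}_{A_1,\ldots,A_k}[\mathcal{E}] - \tr_{B_1\cdots B_k}\rho^{\otimes k}\big\|_1$.

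Next, I would split the expectation according to whether every subset is sufficiently large. A standard Chernoff bound gives $\Pr[|B_r| < \gamma n/4] \leq e^{-\Omega(\gamma n)}$, so a union bound over $r=1,\dots,k$ guarantees $|B_r| \geq \gamma n/4$ for all $r$ with failure probability at most $k e^{-\Omega(\gamma n)}$. Bad configurations contribute at most $2$ to the integrand, for a total contribution of $O(k e^{-\Omega(\gamma n)})$. On good configurations I would apply Theorem~\ref{thm:subsystem full} (with the $k$ subregions equal to $A_r = B_r^c$) to the ideal Scrooge moment, obtaining a relative-error approximation with $\delta = O(k^2\, 2^{-\min_r \hat{S}_\infty(B_r|A_r)})$; using Property~\eqref{eq:entropy nonneg} and the fact that all $|B_r|\geq\gamma n/4$ lie in the class over which $S^*_k(\gamma n/4)$ is a uniform lower bound, this is at most $O(k^2\, 2^{-S^*_k(\gamma n/4)})$. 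Since relative-error closeness in PSD ordering implies $\|\chi^{(k)}_{\mathcal{S}_\rho;A_1,\ldots,A_k} - \tr_B \rho^{\otimes k}\|_1 \leq \delta$, and the hypothesis that $\mathcal{E}$ is an $\epsilon$-approximate Scrooge $k$-design lets me replace $\chi^{(k)}_{\mathcal{S}_\rho;A_1,\ldots,A_k}$ by $\chi^{(k)}_{A_1,\ldots,A_k}[\mathcal{E}]$ at the cost of an additional $\epsilon$ (partial traces are contractive in trace norm), the three contributions combine by the triangle inequality to give the claimed bound.

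The main obstacle I anticipate is the bookkeeping around the step where the relative-error guarantee of Theorem~\ref{thm:subsystem full}---which is a statement about PSD ordering on the $A_1\otimes\cdots\otimes A_k$ Hilbert space---is deployed inside a trace-norm bound that must then survive averaging over random erasure patterns \emph{and} an arbitrary joint measurement across the $k$ copies of $\tilde\psi$. The key observation that unlocks this is that tracing out the $(I/2)^{B_r}$ factors in the adversary's POVM produces a valid POVM on $A_1\otimes\cdots\otimes A_k$ with unchanged distinguishing power, so the bound on $\tfrac12\|\mathbb{E}[\tilde\psi^{\otimes k}]-\tilde\rho^{\otimes k}\|_1$ is captured exactly by the pointwise trace-norm bound on $\chi^{(k)}_{A_1,\ldots,A_k}-\tr_B\rho^{\otimes k}$ that Theorem~\ref{thm:subsystem full} furnishes. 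The remainder is routine: a Chernoff/union-bound combination for the erasure pattern, and linearity of expectation.
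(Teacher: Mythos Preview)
Your overall strategy---decompose $\mathcal{D}_\gamma$ as a mixture over random erasure patterns, control the pattern sizes via Chernoff plus a union bound, and invoke Theorem~\ref{thm:subsystem full} on the good events---is exactly what the paper does. The one genuine gap is the step where you pass from $\hat{S}_\infty(B_r|A_r)$ to $S^*_\infty(B_r)$. Property~\eqref{eq:entropy nonneg} reads $\hat{S}_\infty(B|A)\le S^*_\infty(B)$, which is an \emph{upper} bound on $\hat{S}_\infty$; it therefore cannot be used to lower-bound the exponent in $\delta=O(k^2\,2^{-\min_r\hat{S}_\infty(B_r|A_r)})$. Your trace-norm route would correctly prove a version of the corollary with $\hat{S}_\infty$ in the exponent, but that is strictly weaker than the stated bound involving $S^*_k$.

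The paper sidesteps this by not going through the full trace norm at all. Because the corollary restricts the distinguisher to a sequence of measurements \emph{on each copy} (possibly adaptive), every POVM element is a tensor product $\bigotimes_{r=1}^k\dyad{x_r}$ across copies. The paper therefore writes the TVD as a sum over bitstrings and, term by term, applies the \emph{product} bound $\delta_{\rm prod}$ of Theorem~\ref{thm:subsystem full} [Eq.~\eqref{eq:epsilonProd}], which directly yields the $S^*_\infty$ exponent without ever touching $\hat{S}_\infty$. If you want your argument to land on $S^*_k(\gamma n/4)$ as stated, you need to exploit this product structure rather than bound $\tfrac12\lVert\mathbb{E}[\tilde\psi^{\otimes k}]-\tilde\rho^{\otimes k}\rVert_1$ globally; the ``arbitrary joint measurement'' framing in your last paragraph is more general than what the corollary allows, and that extra generality is precisely what costs you the sharper entropy.
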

	\noindent In a typical many-body quantum mixed state, the entropy will grow linearly in the size of the subregion, $S_k^*(\gamma n/4) = \Theta(\gamma n)$.
	In this setting, Corollary~\ref{cor: noise} proves that any experiment to distinguish $\tilde{\chi}$ and $\tilde{\rho}$ requires a number of samples growing exponentially in $\gamma n$.
	This becomes super-polynomial in the number of qubits $n$ already at macroscopically small noise rates, $\gamma = \omega(\log n / n)$.
	This mirrors analogous results (which are much simpler to derive) for noisy Haar-random states.
	
	Corollary \ref{cor: noise} describes a scenario where we do not possess any prior knowledge about the particular instance $\psi$, in that the choice of measurement bases and any subsequent post-processing of the outcomes can only depend on $\rho$. This model is likely to be appropriate in settings where an implicit description of the state is available, but computing its properties is classically intractable---for instance in the temporal ensemble, where one knows the time of evolution $t$, but where simulating the time-evolved state $\ket{\psi_t} = e^{-\iu H t}\ket{\psi_0}$ may be hard. In Section \ref{app:output distributions}, we describe different notions of distinguishability in which one obtains noisy samples from a \textit{known} instance $\psi$, thereby allowing properties of $\psi$ to be invoked during the classical post-processing (as one does when computing cross-entropy benchmarks \cite{boixo2018characterizing}). There we find that noisy Scrooge-random states still remain indistinguishable from the background state $\rho$ (see Corollary \ref{cor:TVDclose}).
	
	\subsection{Proof of Theorem~\ref{thm:subsystem full}: Moments of the Scrooge ensemble after tracing out small subregions}
	
	\textit{Proof of Theorem \ref{thm:subsystem full}.---}The relative error $\epsilon$ of the approximation \eqref{eq:reduced approx full} can be characterised in the following way
	\begin{align}
		\epsilon \coloneqq \max_{\substack{\ket{a} \in \text{sup}[\rho^A]}} \left| \frac{\mathbb{E}_{\psi} \braket{a| \tr_{B}[\psi^{\otimes k}]|a } }{\braket{a|\rho^A |a} } - 1 \right|.
	\end{align}
	For the purposes of this proof only, we are using the shorthand $A$ to denote the union of all regions $\{A_r\}$, and similarly for $B$; accordingly we define $\rho^A \coloneqq \bigotimes_{r=1}^k \tr_{B_r}[\rho]$. 
	Using the approximate formula for the $k$th Scrooge moments [Eq.~\eqref{eq:approx}], and writing $ \rho^{\otimes k} \equiv \rho^{AB}$, we have
	\begin{align}
		\epsilon'\coloneqq\frac{1+\epsilon}{1+\delta_\rho}-1\leq \max_{\substack{\ket{a} \in \text{sup}[\rho^A]}} \left| \frac{\sum_{\pi \in S_{k}} \tr[\big(\ket{a}\bra{a}_A \otimes I^B\big) \pi \rho^{AB}]}{\braket{a|\rho^A |a}} - 1 \right|.
		\label{eq:rel error subsystem approx}
	\end{align}
	We can substitute the argument of the maximum for $\ket{x} = (\rho^A)^{1/2}\ket{a} \in \text{sup}[\rho^A]$, such that the denominator can be written as $\braket{a|\rho^A|a} = \|\ket{x}\|^2$. Then, noting that the term $\pi = I$ exactly cancels with the $-1$ term, we can apply the triangle inequality to obtain
	\begin{align}
		\epsilon' &\leq \sum_{\substack{\pi \in S_k\\\pi \neq I}} \max_{\substack{\ket{x} \in \text{sup}[\rho^A] \\ \|\ket{x}\| = 1}} \tr[\big((\rho^{A})^{-1/2}\ket{x}\bra{x}_A(\rho^{A})^{-1/2} \otimes I^B\big) \pi \rho^{AB}] \nonumber\\
		&= \sum_{\substack{\pi \in S_k\\\pi \neq I}} \left\|\tr_B[ (\rho^A)^{-1/2} (\rho^{AB})^{1/2} \pi (\rho^{AB})^{1/2}(\rho^A)^{-1/2}]  \right\|_\infty,
		\label{eq:subsystem epsilon q channel}
	\end{align}
	where we use the fact that $\pi$ commutes with $\rho^{AB}$ in the last equality. Now, the spectral norm $\|\,\cdot\,\|_\infty$ is multiplicative under tensor products, and the argument of the norm can be written as a tensor product over the cycles of $\pi$, since we can always write $\pi = \bigotimes_{\ell \in \pi} \pi_\ell$, where $\ell$ labels different cyclic permutation operators $\pi_\ell$, each acting on $l \equiv |\ell|$ different copies. Therefore, we have
	\begin{align}
		\epsilon' \leq \sum_{\pi \neq I} \prod_{\ell \in \pi} \max_x \epsilon_{\ell}[x],
		\label{eq:epsilon cycle product}
	\end{align}
	where $\epsilon_{\ell}[x] \coloneqq \tr[(\ket{x}\bra{x}_A\otimes I^B)\cdot (\rho^A)^{-1/2} (\rho^{AB})^{1/2} \pi_\ell (\rho^{AB})^{1/2}(\rho^A)^{-1/2}]$. Here and subsequently, regions $A, B$ are implicitly constructed from $l \leq k$ copies of the system.
	
	To deal with the cyclic permutation operator $\pi_\ell$, we will use the identity
	\begin{align}
		\tr_{Q_1 \cdots Q_l}\left[(\pi_\ell^{Q_1 \cdots Q_l}\otimes I^{Y_1 \cdots Y_l}) \bigotimes_{r=1}^l X_r^{Q_rY_r}\right] = \tr_{Q_1}\left[\prod_{r=1}^lX_r^{Q_1Y_r} \right],
		\label{eq:perm identity}
	\end{align}
	for any set of operators $X_r^{Q_rY_r}$ acting on a collection of independent Hilbert spaces $\{Q_r\}_{r=1}^l$, $\{Y_r\}_{r=1}^l$, with all $Q_r$ having the same dimension. On the right hand side, the notation $X^{Q_1Y_r}_r$ describes a set of transformed operators that act on different spaces $Y_r$ but the same space $Q_1$. The order of the product should also follow the order of the cyclic permutation.
	
	While $\rho^{AB}$ is a tensor product across different copies of the system, $\ket{x}$ is not necessarily so. To help us invoke Eq.~\eqref{eq:perm identity}, we therefore use the trivial identity $\ket{x}_{A_1 \cdots A_l} = (\bra{\bar{x}}_{A'}\otimes I^A)\left(\bigotimes_{r=1}^l \ket{\Omega}_{A_rA_r'}\right)$, where $\ket{\Omega}_{A_rA_r'} = \sum_{j_r=1}^{d_{A_r}}\ket{j_r}_{A_r}\otimes \ket{j_r}_{A'_r}$ is an unnormalized EPR state, and $\ket{\bar{x}}_{A'} \equiv (\bra{x}_{A'})^{T_{A'}}$ is the complex conjugate of $\ket{x}_{A'}$ in the basis $\{\otimes_{r=1}^l\ket{j_r}_{A'_r}\}$ ($T_{A'}$ denotes a partial transpose on $A'$ in the same basis). Defining operators $K^{B_r \rightarrow A_rB_rA'_r} \coloneqq  (\rho^{AB})^{1/2} (\rho^A)^{-1/2}(\ket{\Omega}_{A_rA'_r}\otimes I^{B_r})$ and $K^{B \rightarrow ABA'} = \bigotimes_{r=1}^l K^{B_r \rightarrow A_rB_rA'_r}$, we can then write $\epsilon_\ell[x]$ in a form that allows us to use  \eqref{eq:perm identity} with $Y_r = A_r'$, $Q_r = A_r \cup B_r$.
	\begin{align}
		\epsilon_\ell[x] &= \Braket{\bar{x}| \tr_{AB}\Big[K^{B \rightarrow ABA'}(K^{B \rightarrow ABA'})^\dagger (\pi^{AB}_\ell \otimes I^{A'})\Big] |\bar{x}}_{A'} \nonumber\\
		&=\Braket{\bar{x}|\tr_{A_1B_1}\Bigg[\prod_{r=1}^l \big(K^{B_1 \rightarrow A_1B_1A'_r}(K^{B_1 \rightarrow A_1B_1A'_r})^\dagger  \otimes I^{A' \backslash A_r'}\big)\Bigg]|\bar{x}}_{A'} \label{eq:subsystem prod k}.
	\end{align}
	where each operator $K^{B_1 \rightarrow A_1B_1A'_r}$ acts on the same copy $A_1B_1$, but different copies of $A'_r$. We now rewrite the inner product using $\braket{\bar{x}|X^{A'}|\bar{x}} = \tr[\ket{\bar{x}}\bra{\bar{x}}_{A'} X^{A'}]$, and take the trace over $A_1'$ separately to get
	\begin{align}
		\epsilon_\ell[x] &= \tr_{A_1B_1 A'_2 \cdots A'_l}\left[\sigma^{A_1B_1A_2'\cdots A_l'}_x \prod_{r=2}^l\big(K^{B_1 \rightarrow A_1B_1A'_r}(K^{B_1 \rightarrow A_1B_1A'_r})^\dagger \otimes I^{A' \backslash A_1'A_r'}\big) \right] 	\label{eq:epsilon k prod}  \\  \text{where } 	\sigma^{A_1B_1A_2'\cdots A_l'}_x &\coloneqq \tr_{A'_1}\left[\big(\ket{\bar{x}}\bra{\bar{x}}_{A'_1 \cdots A'_l}\otimes I^{A_1B_1}\big) K^{B_1 \rightarrow A_1B_1A'_1}(K^{B_1 \rightarrow A_1B_1A'_1})^\dagger \right]
		\nonumber\\ &=
		(V_x^{A_1 \rightarrow A'_2 \cdots A'_l})^\dagger K^{B_1 \rightarrow A_1B_1A'_1}(K^{B_1 \rightarrow A_1B_1A'_1})^\dagger V_x^{A_1 \rightarrow A'_2 \cdots A'_l}
		\label{eq:sigma pos}
	\end{align}
	In the last line, we define $V_x^{A_1 \rightarrow A_2'\cdots A_l'} = (\bra{\Omega}_{A_1A_1'} \otimes I^{A_2' \cdots A_l'})\cdot (\ket{\bar{x}}_{A_1'\cdots A_l'} \otimes I^{A_1})$, and use the identity $\tr_{A'_1}[X^{A_1' \cdots A_l'}] = (\bra{\Omega}_{A_1A_1'}\otimes I^{A_2'\cdots A_l'})X^{A_1' \cdots A_l'} (\ket{\Omega}_{A_1A_1'}\otimes I^{A_2'\cdots A_l'})$ for an arbitrary operator $X^{A_1' \cdots A_l'}$.
	
	The final line in Eq.~\eqref{eq:sigma pos} shows that $\sigma^{A_1B_1A_2'\cdots A_l'}_x$ is positive semi-definite.
	Moreover, since $\tau^{A_1}_x \coloneqq V_x^{A_1 \rightarrow A_2'\cdots A_l'} (V_x^{A_1 \rightarrow A_2'\cdots A_l'})^\dagger$ has unit trace (as one can easily verify), we have
	\begin{align}
		\tr[\sigma^{A_1B_1A_2'\cdots A_l'}_x] &= \tr_{A_1B_1A'_1}[\tau^{A_1}_xK^{B_1 \rightarrow A_1B_1A'_1}(K^{B_1 \rightarrow A_1B_1A'_1})^\dagger ]
		\nonumber\\ &=\tr_{A_1B_1A'_1}[\tau^{A_1}_x (\rho^{A_1B_1})^{1/2} (\rho^{A_1})^{-1/2}\big(\dyad{\Omega}_{A_1A_1'}\otimes I^{B_1}\big) (\rho^{A_1})^{-1/2}  (\rho^{A_1B_1})^{1/2} ] \nonumber\\
		&= \tr_{A_1B_1}[\tau^{A_1}_x (\rho^{A_1B_1})^{1/2} (\rho^{A_1})^{-1}  (\rho^{A_1B_1})^{1/2} ] \nonumber\\
		&= \tr_{A_1}[\tau_x^{A_1}] = 1.
	\end{align}
	In the third equality, we have used the fact that $\tr_{A_1'}[\dyad{\Omega}_{A_1A_1'}] = I^{A_1}$. Hence, $\sigma^{Q_1A_2'\cdots A_l'}_x$ is a properly normalised density operator. Applying H{\"o}lder's inequality in the form $\|X \prod_{r} Y_r\|_1 \leq \|X\|_1 \prod_r \|Y_r\|_\infty$ to \eqref{eq:epsilon k prod}, we then find
	\begin{align}
		\epsilon_\ell[x] &\leq \prod_{r=2}^l \|K^{B_r \rightarrow A_rB_rA'_r}(K^{B_r \rightarrow A_rB_rA'_r})^\dagger\|_\infty \nonumber\\
		&= \prod_{r=2}^l \left\|\big((\rho^{A_r})^{-1/2} \rho^{A_rB_r} (\rho^{A_r})^{-1/2}\big)^{T_{A_r}}\right\|_\infty \eqqcolon \prod_{r=2}^l \mu[A_r]
		\label{eq:epsilon PT}
	\end{align}
	The second line is a consequence of the following property of $K^{B_r \rightarrow A_rB_rA'_r}$
	\begin{align}
		(K^{B_r \rightarrow A_rB_rA'_r})^\dagger K^{B_r \rightarrow A_rB_rA'_r} = (\bra{\Omega}_{A_rA_r'}\otimes I^{B_r})(\rho^{A_r})^{-1/2}\rho^{A_rB_r}(\rho^{A_r})^{-1/2}(\ket{\Omega}_{A_rA_r'}\otimes I^{B_r}),
	\end{align}
	which is unitarily equivalent to the partially transposed operator appearing in the argument of the norm in Eq.~\eqref{eq:epsilon PT}. 
	
	Letting $\mu^* \coloneqq \max_{r} \mu[\mathcal{N}_r] = 2^{-\min_r \hat{S}_\infty(B_r|A_r)_\rho}$, where the function $\hat{S}_\infty$ is defined in Theorem \ref{thm:subsystem full}, we can use \eqref{eq:epsilon cycle product} to upper bound the error as
	\begin{align}
		\epsilon' \leq \sum_{\substack{\pi \in S_k \\ \pi \neq I}} \prod_{\ell \in \pi} (\mu^*)^{|\ell| - 1} = \sum_{\substack{\pi \in S_k \\ \pi \neq I}} (\mu^*)^{-|\pi|} \leq \exp(\frac{k(k-1)}{2} \mu^* ) - 1 \leq k(k-1)\mu^* = O(k^2 \mu^*).
		\label{eq:subsystem epsilon sum perm}
	\end{align}
	where $|\pi| = \sum_{\ell \in \pi} 1$ is the number of cycles in $\pi$, and the latter inequality holds provided $k(k-1)\mu^* \leq 1$. In the last step, we use inequality (13) from Ref.~\cite{harrow2021approximate}. As discussed in Section \ref{app:entropy}, we define $\hat{S}_\infty(B_r|A_r) \coloneqq -\log_2 \|(\rho^{A_r})^{-1/2} (\rho^{A_rB_r})^{T_{B_r}} (\rho^{A_r})^{-1/2}\|_\infty$, and since $\|(X^{A_rB_r})^{T_{A_r}}\|_\infty = \|(X^{A_rB_r})^{T_{B_r}}\|_\infty$ for all operators $X^{A_rB_r}$, this proves the first statement of Theorem \ref{thm:subsystem full}.

	In the special case where $\ket{x} = \bigotimes_{r=1}^k \ket{x_r}_{A_r}$ is a product state across different copies,  Eq.~\eqref{eq:subsystem prod k} simplifies to
	\begin{align}
		\epsilon_\ell[x] &= \tr\left[\prod_{r=1}^k L_{x_r}^{A_1B_1}(L_{x_r}^{A_1B_1})^\dagger  \right] &
		\text{where }L_{x_r}^{A_1B_1} &\coloneqq \big(\bra{\bar{x}_r}_{A_r'} \otimes I^{A_1B_1}\big) K^{B_1 \rightarrow A_1B_1 A_r'}
		\label{eq:subsystem product Lx}
	\end{align}
	We note that $(L_{x_r}^{A_rB_r})^\dagger L_{x_r}^{A_rB_r}$, which has the same eigenvalues as  $L_{x_r}^{A_1B_1} (L_{x_r}^{A_1B_1})^\dagger$, can be written as
	\begin{align}
		(L_{x_r}^{A_rB_r})^\dagger L_{x_r}^{A_rB_r} &= \rho^{B_r}_{y_r}  & \text{where }\ket{y_r} \coloneqq \frac{(\rho^{A_r})^{-1/2}\ket{x_r}}{\|(\rho^{A_r})^{-1/2}\ket{x_r}\|}. \label{eq:subsystem epsilon productX}
	\end{align}
	Here $\rho^{B_r}_{y_r}$, is the post-measurement state on $B_r$ after measuring on $A_r$ with an outcome with POVM element $\ket{y_r}\bra{y_r}_{A_r}$, namely
	\begin{align}
		\rho^{B_r}_{y_r} &\coloneqq \frac{\tr_{A_r}[(\ket{y_r}\bra{y_r}_{A_r}\otimes I^{B_r})\rho^{A_rB_r}]}{\braket{y_r|\rho^{A_r}|y_r}} = \tr[(\ket{x_r}\bra{x_r}\otimes I^{B_r})(R^{Q_r|A_r})^\dagger R^{Q_r|A_r}],
		\label{eq:cond state def}
	\end{align}
	which are evidently normalized density operators. Applying H{\"o}lder's inequality to Eq.~\eqref{eq:subsystem product Lx} in a similar fashion to Eq.~\eqref{eq:epsilon PT}, we find
	\begin{align}
		\epsilon_\ell[x] &\leq \prod_{r=2}^l \|\rho^{B_r}_{y_r}\|_\infty \leq 2^{-(l-1)\min_r S^*_\infty(B_r)}
	\end{align}
	where we define $S^*_\infty(B_r) \coloneqq \max_{y_r} S_\infty(\rho^{B_r}_{y_r})$, as elaborated on below. Using the same inequalities as in Eq.~\eqref{eq:subsystem epsilon sum perm} with $2^{-\min_r S^*_\infty(B_r)}$ in place of $\mu^*$, we establish the second part of Theorem \ref{thm:subsystem full}. \hfill $\square$

	\subsection{Proof of Corollary~\ref{cor: noise}: Collapse of the Scrooge ensemble under small amounts of local depolarizing noise}
	
	Based on the stated assumptions, our aim is to determine whether one can use adaptive measurements to distinguish between $k$ copies of an unknown Scrooge-random state $\psi \sim \mathcal{S}_\rho$ from $k$ copies of $\rho$, after local depolarizing noise has been applied. Formally, the $k$-copy state to which the  distinguisher has access is either $\tilde{\chi} = \E_{\psi \sim \mathcal{S}_\rho} \mathcal{D}_\gamma[\psi]^{\otimes k} = \mathcal{D}_\gamma^{\otimes k}[\chi^{(k)}_{\mathcal{S}_\rho}]$ in the former case, or $\tilde{\rho}^{\otimes k} \coloneqq \mathcal{D}_\gamma[\rho]^{\otimes k}$ in the latter. 
	
	Let $x_1, \ldots, x_k$ be a sequence of measurement outcomes obtained by the distinguisher, where $x_r$ labels an element $\ket{x_r}\bra{x_r}$ of a rank-1 projection-valued measure (PVM) on $n$ qubits in the $r$-th measurement round. The extension to more general postive operator-valued measures (POVMs) instead of PVMs follows via an identical computation with a moderate overhead in notation.
	We allow the PVM or POVM in the $r$-th measurement round to be chosen adaptively based on all measurement results received in previous rounds; that is, the basis of $\ket{x_r}$ may depend arbitrarily on $x_{r'}$ for all $r' < r$.
	Again, for brevity, we do not explicitly write this dependence in our notation. The probabilities of obtaining the particular sequence $x_1, \ldots, x_k$ in each setting are
	\begin{align}
		p^{\chi}(x_1, \ldots, x_k) &= \braket{x_1, \ldots, x_k|\tilde{\chi}|x_1, \ldots, x_k}  & p^\rho(x_1, \ldots, x_k) &= \braket{x_1, \ldots, x_k|\tilde{\rho}^{\otimes k}|x_1, \ldots, x_k}  \nonumber\\
		&= \E_{\psi \sim \mathcal{S}_\rho} \prod_{r=1}^k \braket{x_r|\mathcal{D}_\gamma[\psi]|x_r} & &= \prod_{r=1}^k \braket{x_r|\mathcal{D}_\gamma[\rho]|x_r}.
	\end{align}
	
	To show that the two possibilities cannot be distinguished by any statistical test, we must bound the total variation distance (TVD) between these $k$-round outcome distributions
	\begin{equation}
		\|\tilde{p}^\chi-\tilde{p}^\rho\|_{\rm TVD} \coloneqq \sum_{x_1,\ldots,x_k} \big| \tilde{p}^{\chi}(x_1,\ldots,x_k) - \tilde{p}^{\rho}(x_1,\ldots,x_k) \big|.
	\end{equation}.
	\noindent The TVD quantifies the advantage of any distinguisher with access to samples $x_1, \ldots x_k$ over an uninformed random guess: If both scenarios have the same \textit{a priori} probability, then the optimal success is $\frac{1}{2} + \frac{1}{4}\|\tilde{p}^\chi-\tilde{p}^\rho\|_{\rm TVD}$, where $\frac{1}{2}$ is the success probability for a random guess.
	
	Since $\mathcal{D}_\gamma$ is a tensor product of $n$ single-qubit depolarising channels, each of which can be written as $\mathcal{D}_{\gamma, i}[\sigma_i] = (1-\gamma)\sigma_i + \tr[\sigma_i] \mathbb{I}_i/2$, we can decompose the global noise channel into a mixture of complete depolarizing channels on subregions of qubits, namely $\mathcal{D}_\gamma[\sigma] = \sum_A q(A) \tr_A[\sigma]\otimes \mathbbm{1}^A/2^{|A|}$, where the sum is over all subregions $A \subseteq [n]$. Here, $q(A) = \gamma^{|A|} (1-\gamma)^{|\bar A|}$ can be interpreted as the probability that the noise channel $\mathcal{D}_\gamma$ acts solely on a subregion $A$.
	Accordingly,
	\begin{equation}
		\tilde{p}^\chi(x_1, \ldots, x_k) = \E_{\psi \sim \mathcal{S}_\rho} 
		\sum_{A_1,\ldots,A_k} 
		\prod_{i=1}^k  
		\left( \frac{q(A_i)}{2^{|A_i|}} \cdot \tr_{\bar A_i}\big( \tr_{A_i}(\dyad{x_i}) \cdot \tr_{A_i}( \dyad{\psi}) \big) \right).
	\end{equation}
	The summation over $A_1,\ldots,A_k$ can be decomposed as follows. 
	We set a threshold subregion size $\ell_* = \lceil\gamma n / 4\rceil$, and let $q(\ell_*)$ denote the probability that a given $A_i$ has size less than $\ell_*$ in the expansion of $\mathcal{D}_\gamma$.
	We use basic properties of the binomial distributions to show that $q(\ell_*) = e^{-\mathcal{O}(\gamma n)}$ in the final paragraph of the proof.
	By a union bound, the probability that all $k$ of $A_1,\ldots,A_k$ have size at least $\ell^*$ is therefore greater than $1 - k q(\ell_*)$. 
	
	From this decomposition, the total variational distance (TVD) between $\tilde{p}^\psi$ and $\tilde{p}^\rho$ can be bounded as
	\begin{equation}
		\|\tilde{p}^\chi-\tilde{p}^\rho\|_{\rm TVD} \leq 2 k q(\ell_*) + \sum_{A_1,\ldots,A_k} \prod_{i=1}^k \left( q(A_i) \cdot \delta_{|A_i| \geq \ell_*} \right) \cdot \sum_{x_1,\ldots,x_k} \left| \prod_{i=1}^k p^\rho(x_i;A_i)  - \E_{\psi \sim \mathcal{S}_\rho} \prod_{i=1}^k p^\psi(x_i;A_i) \right|,
	\end{equation}
	where $p^\psi(x_i;A_i) = \braket{x_i|\tr_{A_i}[\psi]\otimes \mathbbm{1}^{A_i}/2|x_i}$ is the probability of obtaining an outcome $x_i$ after applying the full depolarizing channel on region $A_i$ to the state $\dyad{\psi}$. The first term in the above upper bounds the TVD when at least one of $A_i$ is less than $\ell_*$, while the second term is equal to the TVD when all $A_i$ are greater than or equal to $\ell_*$. 
	From Theorem~\ref{thm:subsystem full}, the average of the absolute value in the second term over states $\psi \sim \mathcal{E}$ is $\eta$-small compared to $\prod_i p_\rho(x_i;A_i)$ in relative error, where $\eta \leq k(k-1) 2^{-\min_i S^*_\infty(A_i)}$. This yields
	\begin{equation}
		\|\tilde{p}^\chi-\tilde{p}^\rho\|_{\rm TVD} \leq 2 k q(\ell_*) + \sum_{A_1,\ldots,A_k}  \prod_{i=1}^k  \left( q(A_i) \cdot \delta_{|A_i| \geq \ell_*} \right) \cdot k(k-1) e^{-S^*_\infty(\ell_*)}  \cdot \sum_{x_1,\ldots,x_k} \prod_{i=1}^k p_\rho(x_i;A_i),
	\end{equation}
	where $S^*_\infty(\ell_*) = \min_{A:|A|\geq \ell_*} S^*_\infty(A) = \min_{A:|A|\geq \ell_*}\min_{y \in \text{sup}[\rho^A]}[S_\infty(\rho^A_y)]$ is the minimum post-measurement min-entropy after the state $\rho$ is projected onto any state $\ket{y}$ on any subregion $\bar A$ of size at most $n - \ell_*$ (see Section \ref{app:entropy}).
	We can then perform the summation over $x_1,\ldots,x_k$ and $A_1,\ldots,A_k$ to find,
	\begin{equation}
		\|\tilde{p}^\chi-\tilde{p}^\rho\|_{\rm TVD} \leq 2 k q(\ell_*) +  k(k-1) e^{-S^*_\infty(\ell_*)} \leq k e^{-\mathcal{O}(\gamma n)} +  k^2 e^{-S^*_k(\gamma n/4)} ,
	\end{equation}
	as claimed.
	
	The only remaining step is to confirm that $q(\ell_*) = e^{-\mathcal{O}(\gamma n)}$. 
	To show this, note that the size of the subregion $A$ is distributed according to a binomial distribution, $B(\ell;n,\gamma)=\binom{n}{\ell}\gamma^\ell(1-\gamma)^{n-\ell}$. Let us abbreviate $b_\ell \equiv B(\ell;n,\gamma)$ for convenience.
	We also let $\rho_\ell \equiv b_{\ell-1}/b_\ell =\frac{\ell}{\,n-\ell+1\,}\frac{1-\gamma}{\gamma}$ denote the ratio between adjacent values of $b_\ell$; note that $\rho_\ell$ is strictly increasing for $\ell < n/2$ and has value at most $1/2$ for $\ell \leq \gamma n / 4$.
	We can bound our desired probability as,
	\begin{equation}
		q(\ell_*) = \sum_{\ell=0}^{\ell_*-1} b_\ell \leq \sum_{\ell=0}^{\ell_*} b_\ell = \sum_{t=0}^{\ell_*} b_{\ell_*-t} = b_{\ell_*} \sum_{t=0}^{\ell_*} \rho_{\ell_*-t+1}\rho_{\ell_*-t+2}\cdots\rho_{\ell_*} \leq b_{\ell_*} \sum_{t=0}^{\ell_*} \rho_{\ell_*}^t \leq \frac{b_{\ell_*}}{1-\rho_k} \leq 2 b_{\ell_*},
		\label{eq:cutoff prob}
	\end{equation}
	where in the fifth expression we re-express $b_{\ell_*-t}$ using the definition of $\rho_\ell$, in the sixth expression we sum the geometric series, and in the seventh expression we apply $\rho_{\ell_*} < 1/2$ for $\ell_* = \gamma n/4$.
	To bound the remaining quantity $b_{\ell_*}$, we apply the standard Stirling inequality, $b_{\ell_*} \leq \frac{1}{\sqrt{2\pi \ell_*(1-\ell_*/n)}} \exp(-\ell_* \log(\frac{\ell_*}{n\gamma}) - (n-\ell_*)\log(\frac{n-\ell_*}{n-\gamma n}))$.
	Setting $\ell_* = \gamma n/4$ yields
	$b_{\gamma n/4} = \exp(-\mathcal{O}(\gamma n))$ as desired. \qed
	
	\section{Output distributions and universal fluctuations}\label{app:output distributions}
	
	In this section, we apply our approximate formula for the moments of the Scrooge ensemble (Section~\ref{app:approximate}) to characterize the measurement output distributions of states drawn from approximate Scrooge $k$-designs.
	We begin by providing straightforward calculations of the moments of individual output probabilities $p^\psi(x) = |\braket{x|\psi}|^2$. Using our tight error bounds, we confirm prior predictions that these moments should converge to those of re-scaled Porter-Thomas distributions~\cite{mark2024maximum}.
	%
	%
	%
	We then characterize the joint moments of $m$ distinct outcome probabilities $p^\psi(x_1)$,\ldots,  $p^\psi(x_m)$. For $m=2$ we prove that these converge to the (diagonal) moments of the $m \times m = 2\times 2$ Wishart distribution \cite{wishart1928generalised}, and we argue that this continues to hold for higher $m>2$. This distribution generically features non-negligible correlations between different bitstring probabilities, contrasting sharply with the behavior of Haar-random states, where the distribution of distinct bitstring probabilities are statistically independent.
	Finally, we present a simple method to experimentally characterize the emergence of Wishart statistics, by measuring the sensitivity of the output distribution to small amounts of read-out noise.
	Overall, the Scrooge ensemble predicts an intriguing universal relation between the noise sensitivity of computational (i.e.~$Z$-basis) measurements and \emph{off-diagonal} (i.e.~$X$-basis or $Y$-basis) quantum coherences of the background state $\rho$.
	%
	
	
	\subsection{1-bitstring marginals and the Porter-Thomas distribution}
	
	Let us first consider the 1-bitstring marginal of the distribution of outcome probabilities.
	For a Haar-random state, this yields the Porter-Thomas (PT) distribution,
	\begin{equation}
		\text{Pr}_{\text{PT}}\big( p(x) = p \big) = d \cdot e^{- d p }
	\end{equation}
	For a Scrooge-random state, existing works \cite{mark2024maximum} suggest that a \emph{re-scaled} Porter-Thomas (rPT) distribution emerges,
	\begin{equation}
		\text{Pr}_{\text{rPT}}\big( p(x) = p \big) = \frac{1}{p^\rho(x)} \cdot e^{- p / p^\rho(x) },
	\end{equation}
	with $p^\rho(x)$ replacing $1/d$. This being said, the analysis in these works is somewhat loose due to a lack of control over the relative error of the moments, which is particularly important given  that these quantities are exponentially small. Furthermore, the bounds on errors feature an unfavorable factor of $k!$.

	In this subsection, we use our tight relative error approximation for the Scrooge moments to confirm that all sub-exponential moments of the re-scaled Porter-Thomas distribution indeed emerge in the Scrooge ensemble, up to corrections that are exponentially small in multiplicative error.
	The moments of the rPT ensemble are easily computed,
	\begin{equation}
		\E_{p(x) \sim \text{rPT}} [p(x)^k] = \int dp \, \frac{1}{p^\rho(x)} \cdot e^{-p/p^\rho(x)} \cdot p^k = k! \cdot p^\rho(x)^k,
	\end{equation}
	Meanwhile, our relative error approximation formula for the Scrooge ensemble yields,
	\begin{equation}
		\E_{\psi \sim \mathcal{S}_\rho} p^\psi(x)^k = k! \cdot p^\rho(x)^k \cdot \left(1 + \mathcal{O}\big(k^2 2^{-S_\infty(\rho)} \big) \right),
		\label{eq:PTMoments}
	\end{equation}
	for any $k \leq e^{S_\infty(\rho)/3}$ [as stated in the main text as Eq.~\eqref{eq:porterthomas}].
	In the typical case where $S_\infty(\rho)$ is extensive in $n$, the $k$th moments closely agree for any $k$ that is sub-exponential in $n$.

	\subsection{2-bitstring marginals and the $2 \times 2$ Wishart distribution}
	
	Let us now turn to the 2-bitstring marginals of the distribution of outcome probabilities.
	Here, we will find that the behavior of the Scrooge ensemble is more subtle.
	For a Haar-random state, the 2-bitstring marginals are described by a tensor product of two independent Porter-Thomas distributions,
	\begin{equation}
		\text{Pr}_{\text{PT}^2}\left( \begin{pmatrix}
			p(x) \\
			p(x')
		\end{pmatrix}
		=
		\begin{pmatrix}
			p \\
			p'
		\end{pmatrix}
		\right) = d^2 \cdot e^{- d (p+p') }.
	\end{equation}
	This implies that the two probabilities are entirely uncorrelated with one another.
	In this subsection, we find that for a Scrooge-random state, the 2-bitstring marginals are instead described by a \emph{bivariate gamma} (BG) distribution~\cite{kibble1941two,kuriki2010graph},
	\begin{equation} \nonumber
		\text{Pr}_{\text{BG}}\left( \begin{pmatrix}
			p(x) \\
			p(x')
		\end{pmatrix}
		=
		\begin{pmatrix}
			p \\
			p'
		\end{pmatrix}
		\right) 
		= \frac{1}{p^\rho(x) p^\rho(x')}  \frac{1}{1-|r|^2}  \exp(-\frac{1}{1-|r|^2} \left(\frac{p}{p^\rho(x)}+\frac{p'}{p^\rho(x')} \right)) I_0\left( \frac{2|r|}{1-|r|^2} \sqrt{\frac{p}{p^\rho(x)} \frac{p'}{p^\rho(x')}} \right),
	\end{equation}
	where $I_0$ is the modified Bessel function of the first kind. The degree of correlation between $p(x)$ and $p(x')$ is controlled by the magnitude of a complex parameter,
	\begin{equation}
		r \coloneqq \frac{\bra{x} \rho \ket{x'}}{\sqrt{p_\rho(x) p_\rho(x')}},
	\end{equation}
	which is determined by the off-diagonal matrix elements of $\rho$ between the computational basis states $x$ and $x'$.
	We have $0 \leq |r| \leq 1$.
	When $r = 0$ (i.e.~$\rho$ has no off-diagonal matrix elements), the bivariate gamma distribution reduces to the tensor product of two re-scaled Porter-Thomas distributions.
	However, for $|r| > 0$ they strongly differ.
	The bivariate gamma distribution is most easily viewed as a marginal distribution of the diagonal elements of a random matrix $\bs{W}$ that itself has the \emph{complex central $2\times 2$ Wishart distribution} (with $\nu = 1$ samples)~\cite{wishart1928generalised,kibble1941two,goodman1963statistical,kuriki2010graph},
	\begin{equation}
		\text{Pr}_{\text{CW}} \!\left( \begin{pmatrix}
			p(x) & p_{x,x'} \\
			\bar{p}_{x,x'} & p(x')
		\end{pmatrix}
		=
		\bs{W}
		\right) 
		=  \frac{1}{\pi} \frac{1}{\text{det}(\bs{W})} \frac{1}{\text{det}(\bs{\Sigma})} e^{- \tr( \bs{\Sigma}^{-1} \bs{W} )},  \quad \text{ with } \quad  \bs{\Sigma} = \begin{pmatrix} p^\rho(x) & r\sqrt{p^\rho(x)p^\rho(x')} \\ \bar{r} \sqrt{p^\rho(x)p^\rho(x')} & p^\rho(x') \end{pmatrix} 
	\end{equation}
	where the domain of $\bs{W}$ is taken to be over all positive semi-definite matrices. The above probability density function describes a joint distribution over the real variables $p(x), p(x')$ and the complex variable $p_{x,x'}$.
	Here, the additional variable $p_{x,x'}$ should be interpreted as the off-diagonal matrix element of a Scrooge-random state, $\braket{x|\psi}\braket{\psi|x'}$.
	
	As in the previous subsection, we support these claims by comparing the moments of the output distributions of the Scrooge ensemble to the moments of the distributions above.
	The moments of the re-scaled complex central Wishart distribution are known~\cite{wishart1928generalised,kibble1941two,kuriki2010graph},
	\begin{equation}
		\E_{\text{rCW}} \left[ |p_{x,x'}|^{2a} p(x)^b p(x')^c \right] = p_\rho(x)^{b+a} p_\rho(x')^{c+a} \cdot (b+a)! (c+a)! \sum_{\ell=0}^{\min(b+a,c+a)} {b+a \choose \ell} {c+a \choose \ell} |r|^{2\ell}.
	\end{equation}
	Meanwhile, the moments of the tensor product of two Porter-Thomas distributions are obtained by setting $r \rightarrow 0$,
	\begin{equation}
		\E_{\text{PT}^2} \left[ p(x)^b p(x')^c \right] = p_\rho(x)^{b} p_\rho(x')^{c} \cdot b! \, c!.
	\end{equation}
	The moments of the Scrooge ensemble can be computed using our relative error formula,
	\begin{equation}
		\begin{split}
			\E_{\psi \sim \mathcal{S}_\rho}\left[ \big| \braket{x|\psi}\!\braket{\psi|x'}\big|^{2a} p^\psi(x)^b p^\psi(x')^c \right] 
			& \approx \sum_\pi \tr( \left(\dyad{x}{x'}^{\otimes a} \otimes \dyad{x'}{x}^{\otimes a} \otimes \dyad{x}^{\otimes b} \otimes \dyad{x'}^{\otimes c}\right) \cdot \rho^{\otimes (2a+b+c)} \cdot \pi )\\
			& = \sum_\pi \tr( \left(\dyad{x}^{\otimes (b+a)} \otimes \dyad{x'}^{\otimes (c+a)}\right) \cdot \rho^{\otimes (2a+b+c)} \cdot \pi )\\
			& = 
			p^\rho(x)^{b+a} p^\rho(x')^{c+a} \cdot (b+a)! (c+a)! \sum_{\ell = 0}^{\min(b+a,c+a)} {b+a \choose \ell} {c+a \choose \ell} |r|^{2\ell},
		\end{split}
	\end{equation}
	where the approximation in the first equality holds up to relative error $\mathcal{O}((2a+b+c)^{2}2^{-S_\infty(\rho)/2})$ for $2a+b+c \leq e^{S_\infty/3}$.
	In going from the first to second line, we re-index the summation over $\pi$ to exchange $a$ copies of $x'$ between $\dyad{x}{x'}^{\otimes a}$ and $\dyad{x'}{x}^{\otimes a}$.
	In going from the second to third line, we note that each term in the sum is given by $p^\rho(x)^{b+a-\ell} p^\rho(x)^{b+a-\ell} |\bra{x} \rho \ket{x'}|^{2\ell} = p^\rho(x)^{b+a} p^\rho(x)^{b+a} |r|^{2\ell}$, where $\ell$ counts the number of times that the permutation $\pi$ that exchanges $x$ with $x'$.
	There are $(b+a)!(c+a)! {b+a \choose \ell} {c+a \choose \ell}$ such $\pi$ for each value of $\ell$; summing over $\ell$ yields the third expression.
	This is precisely equal to the moments of the re-scaled complex central Wishart distribution, as claimed.

	\subsection{$m$-bitstring marginals and the $m \times m$ Wishart distribution}
	
	Here we provide a non-rigorous argument that the joint distribution of $m$ distinct bitstring probabilities $\vec{p} = (p(x_1), \ldots, p(x_m))$ in Scrooge-random states should converge towards the diagonal marginal of the complex central Wishart distribution for an $m \times m$ matrix $\bs{W}$. Using the general expression for expectation values over Scrooge random states, and making the approximation that $r_\phi$ can be replaced with its mean value of unity, the moment generating function for $\vec{p}$ is expected to be close to
	\begin{align}
		\mathbb{E}_{\psi \sim \mathcal{S}_\rho} \left[e^{-\sum_{i=1}^m t_i p(x_i)}\right] \approx \int \dif \mu_{\rm Haar}(\phi) \exp\left(-d\sum_i t_i|\braket{\phi|\sqrt{\rho}|x_i}|^2 \right) 
	\end{align}
	While we do not have a precise characterization of how accurate this approximation is, it would seem appropriate in light of the strong sub-exponential concentration implied by Lemma \ref{lem:RConc}. Moreover, it is well known that for large dimension $d$,  Gaussian random vectors $\ket{z} \sim \mathcal{N}_{\mathbb{C}^d}(0,1/d)$ approximate the Haar measure well, since $\ket{z}$ is approximately normalized with high probability. On the basis of this approximation, we have
	\begin{align}
		\mathbb{E}_{\psi \sim \mathcal{S}_\rho} \left[e^{-\sum_{i=1}^m t_i p(x_i)}\right] \approx \int \frac{\dif z \dif \bar{z}}{(2\pi)^d}  \exp\left(-d\Braket{z|I+\sum_it_iP_i|z} \right) = \det\left(I+\sum_it_i P_i\right)^{-1}
	\end{align}
	where $P_i = \sqrt{\rho}\ket{x_i}\bra{x_i}\sqrt{\rho}$.
	
	On the other hand, consider a $m\times m$ random matrix $\bs{W}$ with a complex central Wishart distribution parametrized by a positive semi-definite matrix $\bs{\Sigma}$. This distribution can be constructed by taking a complex $m$-dimensional Gaussian random vector $\vec{x} \sim \mathcal{N}_{\mathbb{C}^m}(0,\bs{\Sigma})$ with zero mean and covariance matrix $\bs{\Sigma}$, and setting $\bs{W} = \vec{x} \vec{x}^T$. The moment generating function for the diagonal elements are \cite{kuriki2010graph}
	\begin{align}
		\mathbb{E}_{\bs{W}} \left[e^{-\sum_{i=1}^m t_i W_{ii}}\right] &= \det(\bs{1} +  \bs{T}\bs{\Sigma})^{-1} & \text{where }\bs{T} = \text{diag}(t_1, \ldots, t_m)
	\end{align}
	We set
	\begin{align}
		[\bs{\Sigma}]_{ij} = \braket{x_i|\rho|x_j},
	\end{align}
	and then note that
	\begin{align}
		\sum_i t_i P_i &= VV^\dagger \text{ and } \bs{T}^{1/2} \bs{\Sigma}\bs{T}^{1/2} = V^\dagger V & \text{where }V = \sum_i \sqrt{t_i \rho}\ket{x_i}\bra{i}.
	\end{align}
	Using the fact that $\bs{T}^{1/2} \bs{\Sigma}\bs{T}^{1/2}$ and $\bs{T} \bs{\Sigma}$ are related by a similarity transformation, along with the Weinstein-Aronszajn identity $\det(\bs{1}+\bs{A}\bs{B}) = \det(\bs{1}+\bs{B}\bs{A})$, we see that the two moment generating functions coincide.
	
	\subsection{Universal relation between read-out noise sensitivity and off-diagonal quantum coherence}
	
	How can one test that the output distribution in the computational basis is sensitive to off-diagonal coherences of the density matrix $\rho$?
	The simplest signature is the moment,
	\begin{equation} \label{eq: 2nd moment noise}
		\E_{\psi \sim \mathcal{S}_\rho} \left[ p^\psi(x) p^\psi(x') \right] \approx p^\rho(x) p^\rho(x') + \big| \! \bra{x} \rho \ket{x'} \! \big|^2 =  p^\rho(x) p^\rho(x') \cdot (1+r),
	\end{equation}
	for two bitstrings $x$ and $x'$.
	For example, if $\rho$ contained a large expectation value of a Pauli $X$ operator on qubit $i$, then one could set $x'$ to be equal to $x$ on all bits except qubit $i$.
	In that case, let us denote the resulting bitstring as $x' \rightarrow \hat{x}_i$.
	This would yield a large matrix element $| \bra{x} \rho \ket{\hat{x}_i}|$, which would manifest in large deviations of the above quantity from its tensor-product Porter-Thomas value, $p^\rho(x) p^\rho(\hat{x}_i)$.
	To quantify this precisely, let
	\begin{equation}
		\rho_{x_{\neq i}} = \frac{\tr_{\neq i}( \dyad{x_{\neq i}} \rho) }{\tr( \dyad{x_{\neq i}} \rho )}
	\end{equation}
	denote the post-measurement state on qubit $i$ when all qubits except $i$ are measured in the state $x_{\neq i}$, where $x_{\neq i}$ is the restriction of $x$ or $\hat{x}_i$ to qubits other than $i$.
	We can then compute the value of $r$,
	\begin{equation}
		r = \frac{\big| \! \bra{0} \rho_{x_{\neq i}} \ket{1} \! \big|^2}{\bra{0} \rho_{x_{\neq i}} \ket{0} \bra{1} \rho_{x_{\neq i}} \ket{1}} = \frac{\alpha_{x_\neq i} \sin^2(\theta_{x_{\neq i}})}{1-\alpha_{x_{\neq i}} \cos^2(\theta_{x_{\neq i}})},
	\end{equation}
	where $\alpha_{x_{\neq i}} = 2 \text{tr}(\rho_{x_{\neq i}}^2) - 1 \leq 1$ quantifies the post-measurement purity of the state $\rho_{x_{\neq i}}$, and the variable $\theta_{x_{\neq i}} = \text{arccos}( \text{tr}( \rho_{x_{\neq i}} Z_i ) / \sqrt{\alpha_{x_{\neq i}}})$ measures its polar angle. 
	For any $0 < \theta_{x_{\neq i}} < \pi$, the value of $r$ is non-zero.

	A related signature is the sensitivity of the output distribution to readout noise, which we propose to probe using the collision probability $\sum_x p^\psi(x)^2$. Absent noise, the emergence of the rescaled Porter-Thomas distribution implies that the average collision probability is twice that of the background state, $\mathbb{E}_{\psi \sim \mathcal{S}_\rho} \sum_x p^\psi(x)^2 \approx 2\sum_x p^\rho(x)^2$, up to small relative error $\mathcal{O}(k^2 2^{-S_\infty(\rho)/2})$. 
	Suppose that the output distribution is acted on by readout noise of strength $\gamma$, or equivalently that depolarizing noise is applied to the state $\psi$ immediately before measurement, such that $p^\psi(x) \rightarrow \tilde{p}^\psi(x) = \braket{x|\mathcal{D}_\gamma[\psi]|x}$. This has the effect of randomizing each bit of $x$ with probability $\gamma$.
	At leading order in $\gamma$, this causes the collision probability of the output distribution to decay as,
	\begin{equation}
		\E_{\psi \sim \mathcal{S}_\rho} \sum_x \tilde{p}^\psi(x)^2  \approx \E_{\psi \sim \mathcal{S}_\rho} \sum_xp^\psi(x)^2 - \gamma \E_{\psi \sim \mathcal{S}_\rho}  \sum_i \sum_x p^\psi(x) p^\psi(\hat{x}_i) + \mathcal{O}(\gamma^2), 
	\end{equation}
	where, as in the previous paragraph, $\hat{x}_i$ is the bitstring $x$ with the $i$-th bit flipped.
	The $\mathcal{O}(\gamma)$ term can be computed from Eq.~(\ref{eq: 2nd moment noise}),
	\begin{equation}
		\gamma \E_{\psi \sim \mathcal{S}_\rho}  \sum_i \sum_x p^\psi(x) p^\psi(\hat{x}_i) \approx \gamma \sum_i \sum_x p^\rho(x) p^\rho(\hat{x}_i) + \gamma \sum_i \sum_x \big| \! \bra{x} \rho \ket{x'} \! \big|^2.
	\end{equation}
	The first term is the noise sensitivity of the output distribution of $\rho$, which can in principle be determined independently.
	Any increase in the noise sensitivity of $\chi$ compared to the noise sensitivity of $\rho$ is due to statistical correlations between output probabilities of distinct bitstrings, which serves as a signature of large off-diagonal matrix elements in $\rho$.

	\section{Sampling from Scrooge $k$-designs}

	In this section, we analyse the structure of sampling problems where the underlying states $\psi$ are drawn from approximate Scrooge $k$-designs with respect to a background state $\rho$. These results substantiate the discussion in the main text regarding the putative classical hardness (in the absence of noise) and easiness (in the presence of noise) of simulating sampling from such states. These calculations are motivated on the predication that the background state $\rho$ is sufficiently mixed so as to make sapling from $p^\rho(x)$ classically tractable. Our results highlight the key differences between sampling from $\rho$ and $\psi$ (or, in the noisy case, the lack thereof) that would appear to prohibit classical simulations of the latter.

	Specifically, we begin by showing that the output distributions of states drawn from Scrooge $4$-designs are far in TVD from the output distribution of the background state $\rho$. In contrast, when a small number of qubits are traced out, or noise is applied, then the two distributions become indistinguishable, and hence any algorithm that samples from $\rho$ can be used to simulate sampling from the noisy or marginal distribution of $\psi$.
	We then present rigorous computations of the classical conditional mutual information (CMI) of the output distributions of Scrooge-random states, and states drawn from Scrooge $k$-designs, showing that the output distribution exhibits CMI among any three subsets of qubits, irrespective of geometric locality. This rules out the usage of classical algorithms for sampling that rely on the short-ranged nature of CMI.
	%
	%
	
	\subsection{Total variation distance bounds on the output distributions of Scrooge-random states}
	
	In this subsection, we provide upper and lower bounds on the total variation distance between the output distributions of a Scrooge-random state $\psi$ in comparison to the output distribution of the background state $\rho$. 
	Our main results follow from our relative error bounds on the moments of the Scrooge ensemble in Sections~\ref{app:approximate} and~\ref{app:concentration}. 
	
	We first show that the expected total variation distance between the full output distribution of a Scrooge-random state and that of $\rho$ is large.
	\begin{corollary}[Scrooge-random output distributions are far from $\rho$] \label{cor:TVDfar}
		Let $\psi$ be drawn from an  $\epsilon$-approximate Scrooge $4$-design with relative error. The expected total variation distance between the output distribution  $p^\psi(x) = \braket{x|\psi|x}$ and the output distribution of the background state $p^\rho(x) = \braket{x|\rho|x}$ is at least
		\begin{align}
			\mathbb{E}_\psi \|p^\psi - p^\rho\|_{\rm TVD} \geq \frac{1}{3} - \mathcal{O}(2^{-S_\infty(\rho)/2} + \epsilon)
		\end{align}
	\end{corollary}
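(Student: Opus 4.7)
The plan is to reduce the TVD lower bound to a sum of pointwise estimates $\mathbb{E}_\psi |p^\psi(x) - p^\rho(x)|$ over bitstrings $x$, and then bound each such expectation from below using only the second and fourth moments of $p^\psi(x)$, which are controlled by the $4$-design hypothesis. Writing $X_x \coloneqq p^\psi(x) - p^\rho(x)$ and using linearity, $\mathbb{E}_\psi \|p^\psi - p^\rho\|_{\rm TVD} = \sum_x \mathbb{E}|X_x|$, so it suffices to show $\mathbb{E}|X_x| \geq p^\rho(x)/3 - \mathcal{O}((\epsilon + 2^{-S_\infty(\rho)/2})\, p^\rho(x))$ for each $x$ and then sum, using $\sum_x p^\rho(x) = 1$.

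The analytical heart of the argument is a Hölder-type lower bound on the first absolute moment. Splitting $|X_x|^2 = |X_x|^{2/3} \cdot |X_x|^{4/3}$ and applying Hölder's inequality with conjugate exponents $3/2$ and $3$ yields
\begin{align}
\mathbb{E}|X_x| \;\geq\; \frac{(\mathbb{E} X_x^2)^{3/2}}{(\mathbb{E} X_x^4)^{1/2}}.
\end{align}
This reduces the problem to computing $\mathbb{E} X_x^2$ and $\mathbb{E} X_x^4$, which I would do by expanding each binomially in $p^\psi(x)$ and $p^\rho(x)$ and invoking Eq.~\eqref{eq:porterthomas} in the form $\mathbb{E}_\psi p^\psi(x)^j = j!\, p^\rho(x)^j (1 + \mathcal{O}(\epsilon + 2^{-S_\infty(\rho)/2}))$ for each $j \leq 4$. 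The cross terms assemble into alternating sums $\sum_{j=0}^{2} \binom{2}{j}(-1)^{2-j} j! = 1$ and $\sum_{j=0}^{4} \binom{4}{j}(-1)^{4-j} j! = 9$, giving $\mathbb{E} X_x^2 = p^\rho(x)^2 (1 + \mathcal{O}(\epsilon + 2^{-S_\infty(\rho)/2}))$ and $\mathbb{E} X_x^4 = 9\, p^\rho(x)^4 (1 + \mathcal{O}(\epsilon + 2^{-S_\infty(\rho)/2}))$. Substituting into the Hölder bound yields $\mathbb{E}|X_x| \geq (p^\rho(x)/3)(1 - \mathcal{O}(\epsilon + 2^{-S_\infty(\rho)/2}))$, and summing over $x$ completes the estimate.

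The only delicate point is error tracking through the binomial expansions: because the individual raw moments $j!\, p^\rho(x)^j$ combine with alternating signs into the central moments $\mathbb{E} X_x^k$, one must check that relative-error contributions (which add in absolute value with combinatorial weights $\binom{4}{j} j!$) do not swamp the leading constants $1$ and $9$. They do not, because the aggregate weight is $\sum_{j=0}^{4} \binom{4}{j} j! = 65 = \mathcal{O}(1)$, so the relative errors on the central moments remain $\mathcal{O}(\epsilon + 2^{-S_\infty(\rho)/2})$ and propagate cleanly through the Hölder bound to give the advertised factor of $1/3$. This is the main---but still routine---obstacle; no further techniques are required.
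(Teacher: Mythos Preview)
Your proposal is correct and follows essentially the same approach as the paper: both apply the fourth moment inequality $\E|X| \geq (\E X^2)^{3/2}/(\E X^4)^{1/2}$ and evaluate the central moments via the Porter--Thomas relation $\E_\psi p^\psi(x)^j \approx j!\,p^\rho(x)^j$, arriving at the constants $1$ and $9$ and hence the factor $1/3$. The only cosmetic difference is that you apply the inequality pointwise in $x$ and then sum, whereas the paper rewrites the TVD as $\E_\psi \E_{x\sim p^\rho}\big|p^\psi(x)/p^\rho(x)-1\big|$ and applies the fourth moment method once to the joint expectation over $(\psi,x)$; since the normalized moments are (approximately) $x$-independent, the two routes are equivalent.
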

	\noindent This shows that for a typical, known instance of $\psi$, one can can always distinguish its output distribution from that of $\rho$ by a statistical test. Hence, one cannot spoof the output distribution of $\psi$ simply by sampling from $\rho$. The test is allowed to use knowledge of the particular instance $\psi$ that was drawn---for example, it could involve computing a cross-entropy between samples from the distribution being tested and the output distribution that one would obtain from $\psi$. This setting (which also applies to the subsequent two results) should be contrasted with that considered in Corollary \ref{cor: noise}, where the distinguisher receives no information about the particular instance $\psi$, but instead can perform adaptive sequences of measurements on successive copies of $\psi$.  
	
	Second, using our approximate formula for the Scrooge moments (Theorem \ref{thm:approx}), we prove a converse bound for marginal distributions. 
	\begin{corollary}[Marginal output distributions of Scrooge-random states are close to $\rho$] \label{cor:TVDcloseSubsys}
		Let $\psi$ be drawn from an  $\epsilon$-approximate Scrooge $2$-design with relative error, and let $A$ be a subset of qubits with complement $B$. The expected total variation distance between the marginal output distribution of $\psi$, $p^\psi_A(x_A) = \braket{x_A|\tr_B\psi|x_A}$, and the output distribution of the background state $p^\rho_A(x) = \braket{x_A|\tr_B\rho|x_A}$, is at most
		\begin{align}
			\delta_A^{\rm TVD} \coloneqq \mathbb{E}_\psi \|p^\psi_A - p^\rho_A\|_{\rm TVD} \leq \mathcal{O}(2^{-\frac{1}{2}S^*_\infty(B)} + \sqrt{\epsilon})
			\label{eq:TVDMarginal}
		\end{align}
		where $S^*_\infty(B)$ is the minimum post-measurement min-entropy, defined in \eqref{eq:post meas def}.
	\end{corollary}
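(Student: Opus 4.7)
The plan is to reduce the total variation distance to a chi-squared divergence, which can then be bounded using the second moment estimates from Theorem~\ref{thm:subsystem full}. First, I would use the standard Cauchy-Schwarz bound
\begin{align}
\|p^\psi_A - p^\rho_A\|_{\rm TVD} \leq \tfrac{1}{2}\sqrt{\chi^2(p^\psi_A\|p^\rho_A)}, \qquad \chi^2(p\|q) \coloneqq \sum_x \frac{(p(x)-q(x))^2}{q(x)},
\end{align}
and then apply Jensen's inequality to the concave square root to push the expectation inside, reducing the problem to bounding $\E_\psi \chi^2(p^\psi_A \| p^\rho_A)$.

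Second, I would expand the squared deviation as $\E_\psi p^\psi_A(x_A)^2 - 2 p^\rho_A(x_A) \E_\psi p^\psi_A(x_A) + p^\rho_A(x_A)^2$ and compute each moment using the Scrooge $2$-design assumption. The first moment is controlled by noting that a relative-error $2$-design is automatically a relative-error $1$-design after partial trace (both $\chi^{(2)}_{\mathcal{E}}$ and $\chi^{(2)}_{\mathcal{S}_\rho}$ are positive, and partial trace preserves the relative-error ordering), so $\E_\psi p^\psi_A(x_A) = (1+\mathcal{O}(\epsilon)) \, p^\rho_A(x_A)$. The second moment is computed as $\E_\psi p^\psi_A(x_A)^2 = \braket{x_A^{\otimes 2} | \chi^{(2)}_{A,A}[\mathcal{E}] | x_A^{\otimes 2}}$, and since $\ket{x_A}^{\otimes 2}$ is manifestly product across the two copies, the product-structure variant of Theorem~\ref{thm:subsystem full} (with relative error $\delta_{\rm prod}$) gives
\begin{align}
\E_\psi p^\psi_A(x_A)^2 = (1+ \delta) \, p^\rho_A(x_A)^2, \qquad \delta = \mathcal{O}\big(\epsilon + 2^{-S_\infty^*(B)}\big).
\end{align}

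Combining these two estimates yields $\E_\psi [(p^\psi_A(x_A)-p^\rho_A(x_A))^2] = \mathcal{O}(\delta + \epsilon) \cdot p^\rho_A(x_A)^2$. Dividing by $p^\rho_A(x_A)$ and summing over $x_A$ collapses (using $\sum_{x_A} p^\rho_A(x_A) = 1$) to $\E_\psi \chi^2(p^\psi_A\|p^\rho_A) = \mathcal{O}(\epsilon + 2^{-S_\infty^*(B)})$. Taking the square root and plugging back into the Jensen bound gives $\E_\psi \|p^\psi_A - p^\rho_A\|_{\rm TVD} = \mathcal{O}(\sqrt{\epsilon} + 2^{-S_\infty^*(B)/2})$ as claimed. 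The square-root loss on $\epsilon$, compared to the linear dependence in Corollary~\ref{cor: noise}, is the price of passing through the chi-squared divergence inside the Jensen inequality.

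The main technical point to verify carefully is the invocation of the product-structure form of Theorem~\ref{thm:subsystem full}: one has to check that $\ket{x_A}^{\otimes 2}$ lies in $\mathrm{sup}[(\rho^A)^{\otimes 2}]$ (so that the ratio defining $\delta_{\rm prod}$ is well-defined for the $x_A$ contributing to the sum), which follows because $p^\rho_A(x_A)=0$ implies $\E_\psi p^\psi_A(x_A) = 0$ and $\E_\psi p^\psi_A(x_A)^2 = 0$ by positivity, so such $x_A$ do not contribute to either the TVD or the $\chi^2$. The remaining steps are routine algebra once the product-structure moment bound is in hand; the real work was already done in establishing Theorem~\ref{thm:subsystem full}.
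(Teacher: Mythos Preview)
Your proposal is correct and is essentially identical to the paper's proof: the paper writes the TVD as $\E_{x_A \sim p^\rho_A}|p^\psi_A(x_A)/p^\rho_A(x_A)-1|$ and applies Jensen over both $\psi$ and $x_A$, which is precisely your Cauchy--Schwarz/chi-squared step followed by Jensen over $\psi$; both then expand the square and invoke the product-structure case of Theorem~\ref{thm:subsystem full} exactly as you describe. The only cosmetic discrepancy is the TVD normalization (the paper uses $\sum_x|p-q|$ without the $\tfrac12$), which is immaterial for the $\mathcal{O}(\cdot)$ conclusion.
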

	Finally, we show a closely related result, that the expected total variation distance between the output distribution of a noisy Scrooge-random state and that of $\rho$ is exponentially small in the noise rate times an entropic quantity which is typically extensive in system size.
	\begin{corollary}[Output distributions of noisy Scrooge-random states are close to $\rho$] \label{cor:TVDclose}
		Let $\psi$ be drawn from an  $\epsilon$-approximate Scrooge $2$-design with relative error, and let $\tilde{p}^\psi(x) = \braket{x|\mathcal{D}_\gamma[\psi]|x}$ and $\tilde{p}^\rho(x) = \braket{x|\mathcal{D}_\gamma[\rho]|x}$ denote the measurement outcome distributions after the depolarising channel of strength $\gamma$ has been applied to each qubit. The expected total variation distance between the two distributions is at most
		\begin{align}
			\mathbb{E}_{\psi} \|\tilde{p}^\psi - \tilde{p}^\rho\|_{\rm TVD} \leq e^{-\mathcal{O}(\gamma n)} + \mathcal{O}(2^{-\frac{1}{2}S_\infty^*(\gamma n/4)} + \sqrt{\epsilon})
		\end{align}
		where $S_\infty^*(\ell) \coloneqq \min_{A:|A| \geq \ell} S^*_\infty(A)$ is the minimum post-measurement min-entropy over all regions of size at least $\ell$.
	\end{corollary}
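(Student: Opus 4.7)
The plan is to combine two ingredients already established in the Supplemental Material: the decomposition of depolarizing noise as a probabilistic mixture of complete depolarizing channels on random subregions (used in the proof of Corollary~\ref{cor: noise}), and the TVD bound on marginal output distributions (Corollary~\ref{cor:TVDcloseSubsys}). First I would write
\begin{align}
\mathcal{D}_\gamma[\sigma] = \sum_{A \subseteq [n]} q(A)\, \tr_A[\sigma] \otimes \frac{\mathbb{I}^A}{2^{|A|}}, \qquad q(A) = \gamma^{|A|}(1-\gamma)^{n-|A|},
\end{align}
and evaluate $\tilde p^\psi(x) = \sum_A q(A) \, 2^{-|A|} p^\psi_{\bar A}(x_{\bar A})$, and similarly for $\tilde p^\rho$. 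Summing over $x$ and using the triangle inequality collapses the $x_A$ sum against $2^{-|A|}$, giving
\begin{align}
\|\tilde p^\psi - \tilde p^\rho\|_{\rm TVD} \leq \sum_{A \subseteq [n]} q(A)\, \|p^\psi_{\bar A} - p^\rho_{\bar A}\|_{\rm TVD}.
\end{align}

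Next I would take the expectation over $\psi$ and exchange it with the finite sum over $A$ (Fubini is trivially justified since all summands are non-negative), reducing the problem to bounding $\mathbb{E}_\psi \|p^\psi_{\bar A} - p^\rho_{\bar A}\|_{\rm TVD}$ for each subregion $A$. Corollary~\ref{cor:TVDcloseSubsys}, applied with the kept subregion $\bar A$ and traced-out region $A$, immediately supplies the bound $\mathcal{O}(2^{-S^*_\infty(A)/2} + \sqrt{\epsilon})$.

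I would then split the sum at the threshold $\ell_* = \lceil \gamma n / 4 \rceil$. For $|A| < \ell_*$, bound the TVD trivially by $1$; the total weight of these terms is $q(\ell_*) = e^{-\mathcal{O}(\gamma n)}$, exactly the binomial tail estimate that was already carried out in Eq.~\eqref{eq:cutoff prob} via Stirling's inequality (the noise acts on a subregion of expected size $\gamma n$, so the probability that fewer than $\gamma n/4$ qubits are hit is exponentially small). For $|A| \geq \ell_*$, use monotonicity $S^*_\infty(A) \geq S^*_\infty(\ell_*) = S^*_\infty(\gamma n/4)$ to replace the $A$-dependent bound with the uniform estimate $\mathcal{O}(2^{-S^*_\infty(\gamma n/4)/2} + \sqrt{\epsilon})$, and sum the remaining $q(A)$'s against $1$. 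Adding the two contributions yields $e^{-\mathcal{O}(\gamma n)} + \mathcal{O}(2^{-S^*_\infty(\gamma n/4)/2} + \sqrt{\epsilon})$, as claimed.

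There is no substantial obstacle in this proof once the two inputs are in place; the only point requiring mild care is to verify that the binomial tail argument of Corollary~\ref{cor: noise} applies verbatim here (it does, since it depends only on $q(A)$ and not on the state at all), and that the threshold $\ell_*$ is chosen so that both error terms remain parametrically small. In particular, for typical many-body $\rho$ one has $S^*_\infty(\gamma n/4) = \Theta(\gamma n)$, so both contributions decay exponentially in $\gamma n$.
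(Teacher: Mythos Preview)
Your proposal is correct and matches the paper's proof essentially step for step: the same convex decomposition of $\mathcal{D}_\gamma$, the same triangle-inequality reduction to marginal TVDs, the same invocation of Corollary~\ref{cor:TVDcloseSubsys}, and the same threshold split at $\ell_* = \lceil \gamma n/4\rceil$ with the binomial tail estimate from Eq.~\eqref{eq:cutoff prob}. The only cosmetic discrepancy is that the paper's TVD convention (without the factor $\tfrac{1}{2}$) is bounded by $2$ rather than $1$, but this is absorbed in the $\mathcal{O}(\cdot)$.
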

	\noindent This shows that, once a small amount of noise is applied, the output distribution of a typical element of a Scrooge $k$-design state cannot be distinguished from that of the background state $\rho$ by any statistical test. This holds even if one knows which instance $\psi$ was drawn from the ensemble. Hence, one can always spoof the output distribution of a noisy Scrooge-random state by simply sampling from the output distribution of $\rho$.
	
	
	At a high-level, these results mirror known properties of spherical $k$-designs produced by random quantum circuits, which has as its background state the maximally mixed distribution~\cite{boixo2018characterizing, shaw2025experimental}.
	Those results are substantially easier to derive in the Haar setting, due to the simplicity of the background distribution.
	To extend these results to Scrooge-random states, we require two technical insights.
	First, we use our strong relative-error bounds on the moments of the Scrooge ensemble in Sections~\ref{app:approximate} and~\ref{app:concentration}.
	Second, when bounding the total variation distance using higher moments of the output distribution, we ensure that the higher moment bounds always feature exactly one more copy of $p^\psi(x)$ or $p^\rho(x)$ in the numerator than in the denominator.
	This ensures that, when averaging over $\psi$, no terms with multiple copies of $\rho$ will appear, such as $\sum_x p^\rho(x)^2$ or $\sum_x p^\rho(x)^4$, which are difficult to treat.
	We speculate that this approach may be useful in other noisy circuit sampling settings, for example in random circuits with non-unital noise~\cite{fefferman2023effect,mele2024noise} or constant-depth random circuits~\cite{napp2022efficient,bao2024finite,mcginley2025measurement,bene2025quantum}, where challenges in bounding the total variation distance using second moments are also encountered.

	\subsection{Long-ranged conditional mutual information}
	
	Here we consider the conditional mutual information of the output distribution. Throughout, we will consider any tripartition of the system into subsets of qubits $A, B, C$, such that any bitstring can be specified as a triple of substrings $x = (x_A, x_B, x_C)$. Letting $X = (X_A, X_B, X_C)$ be a random bistring-valued variable with distribution $\text{Pr}(X=x) =p(x_A, x_B, x_C)$, the conditional mutual information (CMI) is
	\begin{align}
		I(X_A:X_C|X_B)_p = H(X_AX_B)_p+H(X_BX_C)_p - H(X_B)_p - H(X_AX_BX_C)_p,
		\label{eq:CMIDef}
	\end{align}
	where $H(Y) = -\sum_y \text{Pr}(Y=y)\log[\text{Pr}(Y=y)]$ is the Shannon entropy of an arbitrary discrete distribution. The CMI quantifies the average amount of correlations between the variables $X_A$ and $X_C$ (as quantified by the mutual information) after the variable $X_B$ is revealed.
	
	First, for states $\psi$ drawn from the exact Scrooge ensemble, an exact closed-form expression for the average CMI can be obtained in the special case where $\rho^{ABC}$ factorizes across the tripartition.
	\begin{theorem}[Output distributions of Scrooge designs have quantized CMI for product $\rho$] \label{thm:CMI quantized}
		For any tripartition $ABC$, let $\psi \sim \mathcal{S}_\rho$ be a Scrooge-random state with respect to a background density operator that satisfies $\rho = \rho^A \otimes \rho^B \otimes \rho^C$. The average CMI \eqref{eq:CMIDef} of the output distribution of $p^\psi(x) = \braket{x|\psi|x}$ with respect to the tripartition $x  =(x_A, x_B, x_C)$ is
		\begin{align}
			\mathbb{E}_{\psi \sim \mathcal{S}_\rho} I(X_A:X_C|X_B)_{p^\psi} &= Q(\rho^A) + Q(\rho^C) - Q(\rho^{AC}) \label{eq:CMI Exact} \\
			&=  \frac{1-\gamma}{\ln 2} + \mathcal{O}\left(2^{-S_2(\rho^A)} + 2^{-S_2(\rho^C)} + 2^{-S_2(\rho^{AC})}\right) \approx 0.61\textup{ bits} \label{eq:CMI asymptote}
		\end{align}
		where $Q(\rho) = \sum_i \left(\prod_{j\neq i}  \frac{\lambda_j}{\lambda_j - \lambda_i}\right)\lambda_i \log \lambda_i$, with $\lambda_i$ the eigenvalues of $\rho$, is the subentropy \cite{jozsa1994lower, datta2014}.
	\end{theorem}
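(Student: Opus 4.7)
My plan is to reduce the three-partition CMI to an expected mutual information on $AC$ alone, then apply a universal identity relating the expected Shannon entropy of Scrooge output distributions to the subentropy. The central observation I would establish first is a \emph{Scrooge conditioning property}: for $\rho = \rho^{AC}\otimes \rho^B$ and $\psi \sim \mathcal{S}_\rho$, the post-measurement state on $AC$ obtained after a computational-basis measurement on $B$ with outcome $x_B$ is itself Scrooge-random with respect to $\rho^{AC}$, independently of $x_B$. To prove this, I would start from the explicit Scrooge construction $\ket{\psi} = \sqrt{\rho}\ket{\phi}/\sqrt{p_\rho(\phi)}$ with $\phi \sim \mathrm{Haar}$ and write $\bra{x_B}\psi = \sqrt{\rho^{AC}}\ket{u_\phi^{x_B}}/\sqrt{p_\rho(\phi)}$ with $\ket{u_\phi^{x_B}} \coloneqq (I^{AC}\otimes \bra{x_B}\sqrt{\rho^B})\ket{\phi}$. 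By rotational invariance of Haar under $U^{AC}\otimes I^B$, the direction $\ket{e} = \ket{u_\phi^{x_B}}/\|u_\phi^{x_B}\|$ is Haar on $AC$ independently of $x_B$ and $\|u_\phi^{x_B}\|$; meanwhile, the joint Scrooge-plus-measurement weight simplifies to $r_\phi \cdot p^\psi(x_B) = d\braket{u_\phi^{x_B}|\rho^{AC}|u_\phi^{x_B}}$, whose dependence on $\ket{e}$ is exactly $\braket{e|\rho^{AC}|e}$---precisely the Haar reweighting that defines the Scrooge seed for $\rho^{AC}$. This reduces $\E_\psi I(X_A:X_C|X_B)$ to $\E_{\psi_{AC}\sim \mathcal{S}_{\rho^{AC}}} I(X_A:X_C)_{p^{\psi_{AC}}}$.

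Next, I would establish the basis-free identity
\begin{equation}
\E_{\psi \sim \mathcal{S}_\sigma} H(X)_{p^\psi} = H(X)_{p^\sigma} - Q(\sigma),
\end{equation}
valid for any rank-1 PVM $\{\ket{x}\}$. The derivation starts from the Scrooge-Haar integral and uses that, for fixed $\ket{\tilde{x}} \coloneqq \sqrt{\sigma}\ket{x}$ and $\phi$ Haar, $|\braket{\tilde{x}|\phi}|^2/\|\tilde{x}\|^2 \sim \mathrm{Beta}(1,d-1)$. Summing over $x$ and using $\sum_x |\braket{\tilde{x}|\phi}|^2 = r_\phi/d$ yields $\E_\psi H(X)_{p^\psi} = H(X)_{p^\sigma} - (\ln d + 1 - H_d) + \E_\phi[r_\phi \ln r_\phi]$. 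Identifying $(\ln d + 1 - H_d) - \E_\phi[r_\phi \ln r_\phi]$ with the subentropy $Q(\sigma)$ is then a purely algebraic step: substitute the spline density of $r_\phi$ from Lemma~\ref{lem:RDist}, evaluate $\int_0^{\mu_i} r (\mu_i - r)^{d-2}\ln r\,dr$ via a beta-function/digamma identity, and collapse the resulting sum using the Lagrange identities $\sum_i \mu_i^{d-1}/\prod_{j\neq i}(\mu_i - \mu_j) = 1$ and $\sum_i \mu_i^d/\prod_{j\neq i}(\mu_i-\mu_j) = d$ to recover the closed-form subentropy.

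With these ingredients, the remainder is mechanical. Applying the conditioning property a second time inside $\mathcal{S}_{\rho^A\otimes \rho^C}$ with bipartition $A|C$ gives $\E_{\psi_{AC}} H(X_A|X_C) = H(X_A)_{p^{\rho^A}} - Q(\rho^A)$ and symmetrically $\E_{\psi_{AC}} H(X_C|X_A) = H(X_C)_{p^{\rho^C}} - Q(\rho^C)$. Combining with the chain rule $H(X_{AC}) = H(X_A) + H(X_C|X_A) = H(X_C) + H(X_A|X_C)$ applied pointwise in $\psi_{AC}$ and then averaged, and substituting into $I(X_A:X_C) = H(X_A) + H(X_C) - H(X_{AC})$, the basis-dependent single-site entropies $H(X_A)_{p^{\rho^A}}, H(X_C)_{p^{\rho^C}}$ cancel---using additivity of Shannon entropy under the product distribution $p^{\rho^A}\otimes p^{\rho^C}$ into which $p^{\rho^{AC}}$ factorizes---leaving exactly $Q(\rho^A) + Q(\rho^C) - Q(\rho^{AC})$, establishing Eq.~\eqref{eq:CMI Exact}.

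For the asymptotic statement, I would combine the identity $Q(\sigma) = (\ln d_\sigma + 1 - H_{d_\sigma}) - \E[r_\phi \ln r_\phi]$ with the exact variance $\mathrm{Var}(r_\phi) = (d \tr\sigma^2 - 1)/(d+1) = O(2^{-S_2(\sigma)})$ and the sub-exponential tail bounds from Lemma~\ref{lem:RConc} to obtain $\E[r_\phi \ln r_\phi] = \mathrm{Var}(r_\phi)/2 + O(2^{-3S_\infty(\sigma)/2}) = O(2^{-S_2(\sigma)})$. The dimension-dependent pieces of $Q(\rho^A) + Q(\rho^C) - Q(\rho^{AC})$ collapse to $1 + H_{d_A d_C} - H_{d_A} - H_{d_C}$, which tends to $1-\gamma$ nats $=(1-\gamma)/\ln 2 \approx 0.61$ bits as the marginal dimensions grow. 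The main anticipated obstacle is the algebraic step in the second ingredient---matching the spline expectation $\E[r_\phi \ln r_\phi]$ to the closed-form subentropy via the Lagrange identities---which, while ultimately routine, requires careful bookkeeping of the $\log \lambda_i$ contributions.
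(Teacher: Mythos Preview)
Your approach is correct and genuinely different from the paper's. The paper does not isolate a ``Scrooge conditioning property'' at all; instead it computes $\mathbb{E}_{\psi\sim\mathcal{S}_\rho} H(X_A)_{p^\psi}$ directly for a general bipartition $A|B$ with product $\rho$ by passing to a Gaussian integral representation, evaluating the moment generating function $\det(I-tP_{x_A})^{-1}$, and extracting $\mathbb{E}[p\log p]$ via an inverse Laplace transform and a residue computation. This yields the intermediate formula $\mathbb{E}_\psi H(X_A) = H(p^\rho_A) + Q(\rho^B) - Q(\rho^A\otimes\rho^B)$, which is then applied to each of the four Shannon entropies in the CMI. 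Your route is more structural: the conditioning reduction collapses the tripartite problem to a bipartite mutual information on $AC$, after which everything follows from the single full-system identity $\mathbb{E}_{\psi\sim\mathcal{S}_\sigma} H(X)_{p^\psi} = H(p^\sigma) - Q(\sigma)$, together with the chain rule. This identity is in fact the original Jozsa--Robb--Wootters result that the Scrooge ensemble has accessible information exactly $Q(\sigma)$ for \emph{every} rank-1 POVM, so you could simply cite it rather than re-derive it via splines and Lagrange identities; that would make your argument considerably shorter than the paper's contour-integral computation.

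Two small points to tighten. First, in justifying the conditioning property, your sketch suppresses the step where the norm factor $\|u_\phi^{x_B}\|^2$ is integrated out: the joint weight $r_\phi\, p^\psi(x_B) = d\,\|u_\phi^{x_B}\|^2\,\braket{e|\rho^{AC}|e}$ still carries a piece that is not a function of $\ket{e}$ alone, and you need to show that after averaging over the orthogonal Gaussian degrees of freedom this factor becomes a constant times $p^{\rho^B}(x_B)$. This is most cleanly done by writing the Haar integral as a Gaussian expectation (as the paper does for its own purposes) and decomposing $z$ into the component along $\sqrt{\rho^B}\ket{x_B}$ and its orthogonal complement. Second, for the asymptotic, your expression $Q(\sigma) = (\ln d_\sigma + 1 - H_{d_\sigma}) - \mathbb{E}_\phi[r_\phi\ln r_\phi]$ is correct but introduces a harmless dimension-dependent residual $1 + H_{d_Ad_C} - H_{d_A} - H_{d_C} = 1-\gamma + O(1/d_A + 1/d_C)$; the paper avoids this by working with the Gaussian $r_z$ instead of the Haar $r_\phi$, for which the analogous identity reads $Q(\sigma) = \frac{1-\gamma}{\ln 2} - \mathbb{E}_z[r_z\log r_z]$ with no dimension term. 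Either way the final error is absorbed into the stated $O(2^{-S_2})$ bounds.
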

	\noindent The subentropy---a quantity that is closely related to the Scrooge ensemble \cite{jozsa1994lower}---converges towards the quantized value $(1-\gamma)/\ln 2$, with $\gamma$ the Euler-Mascheroni constant, when $\rho$ has large entropy (see Lemma \ref{lem:subentropy bounds} below), hence the quantization \eqref{eq:CMI asymptote}. The above result implies that for regions $A$ and $C$ of size $\Omega(\log(1/\epsilon))$, irrespective of the geometrical distance between $A$ and $C$ we would still expect to see a universal quantized value $ \approx 0.61 \text{ bits'}$ for the CMI, up to error $\epsilon$. As we discuss in the main text, the presence of non-negligible CMI among all tripartitions suggests that simulating sampling from Scrooge random states cannot be simulated by an efficient classical algorithm, even if sampling from the background state can can be classically simulated. 
	

	In contrast to the majority of our results, which are well-controlled approximations, Theorem \ref{thm:CMI quantized} gives us an exact expression for the CMI. Thus, to derive it, we require that the distribution of $\psi$ be exactly that of the Scrooge ensemble, and we also make an assumption that $\rho$ is a tensor product across the tripartition. Using our approximate formula for the Scrooge moments (Theorem \ref{thm:approx}), we can substantially generalize this to settings where $\rho$ is arbitrary, and $\psi$ drawn from an approximate Scrooge $k$-design, rather than the exact Scrooge ensemble $\mathcal{S}_\rho$.
	\begin{corollary}
		[Output distributions of Scrooge designs have long-ranged CMI] \label{cor: CMI}
		Let $\ket{\psi}$ be drawn from an ensemble $\mathcal{E}$ that forms an $\epsilon$-approximate Scrooge 4-design with relative error, and let $ABC$ be any tripartition of qubits. The average CMI \eqref{eq:CMIDef} of the output distribution of $p^\psi(x) = \braket{x|\psi|x}$ with respect to the tripartition $x  =(x_A, x_B, x_C)$ satisfies
		\begin{equation}
			\mathbb{E}_{\psi \sim \mathcal{E}}I(X_A:X_C|X_B) \geq \frac{1}{8 \ln 2} - \sqrt{\delta_{A}^{\rm TVD}} -\sqrt{\delta_{C}^{\rm TVD}} - \sqrt{\epsilon}.
		\end{equation}
		where $\delta_{A}^{\rm TVD} = O(2^{-S^*_\infty(A)})$ is the error from Corollary \ref{cor:TVDcloseSubsys}.
	\end{corollary}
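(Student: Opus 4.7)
The plan is to apply Pinsker's inequality to reduce the lower bound on the conditional mutual information to a lower bound on a total variation distance, and then combine the previously established bounds on the global non-concentration (Corollary~\ref{cor:TVDfar}) and marginal concentration (Corollary~\ref{cor:TVDcloseSubsys}) of Scrooge-random output distributions.

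First, I would use the KL-divergence representation $I(X_A:X_C|X_B)_{p^\psi} = D(p^\psi \,\|\, q^\psi)$, where $q^\psi(x) := p^\psi(x_B)\, p^\psi(x_A|x_B)\, p^\psi(x_C|x_B)$ is the Markov factorization of $p^\psi$ that saturates the CMI, and apply Pinsker's inequality in the form $I \geq (2/\ln 2)\|p^\psi - q^\psi\|_{\rm TVD}^2$. Averaging over $\psi \sim \mathcal{E}$ and invoking Jensen's inequality on the square reduces the task to a suitable lower bound on $\mathbb{E}_\psi \|p^\psi - q^\psi\|_{\rm TVD}$, of order $1/4$, which after squaring furnishes the target constant $1/(8\ln 2)$.

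Second, I would lower bound this expected TVD via a triangle-inequality argument anchored at $p^\rho$: $\|p^\psi - q^\psi\|_{\rm TVD} \geq \|p^\psi - p^\rho\|_{\rm TVD} - \|p^\rho - q^\psi\|_{\rm TVD}$. The first term is controlled by Corollary~\ref{cor:TVDfar}, giving $\mathbb{E}_\psi \|p^\psi - p^\rho\|_{\rm TVD} \geq 1/3 - \mathcal{O}(\sqrt\epsilon + 2^{-S_\infty(\rho)/2})$. For the second term, I would exploit that $q^\psi$ depends on $\psi$ only through its $AB$, $BC$, and $B$ marginals---each close to the corresponding marginal of $p^\rho$ in expected TVD by Corollary~\ref{cor:TVDcloseSubsys}---and apply an elementary stability lemma for Markov factorizations to bound $\mathbb{E}_\psi \|q^\psi - q^\rho\|_{\rm TVD}$ by $\mathcal{O}(\delta_A^{\rm TVD} + \delta_C^{\rm TVD})$, where $q^\rho$ is the analogous factorization of $p^\rho$.

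The main obstacle is that the residual distance $\|p^\rho - q^\rho\|_{\rm TVD}$ equals the CMI-TVD of the background distribution itself and can be arbitrarily large, seemingly spoiling the naive triangle argument. I would resolve this by supplementing the triangle estimate with a complementary variational lower bound $\|p^\psi - q^\psi\|_{\rm TVD} \geq |\sum_x f(x)(p^\psi(x) - q^\psi(x))|$ for a carefully chosen test function $f$ with $|f|\leq 1$---for example $f(x) = \mathrm{sign}(p^\rho(x) - q^\rho(x))$, which on average extracts a contribution proportional to $\|p^\rho - q^\rho\|_{\rm TVD}$ via the identity $\mathbb{E}_\psi p^\psi = p^\rho$ and the approximate matching $\mathbb{E}_\psi q^\psi \approx q^\rho$ (controlled by the 4-design assumption). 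Combining the two complementary lower bounds, the dependence on $\|p^\rho - q^\rho\|_{\rm TVD}$ can be eliminated to yield a TVD lower bound that is strictly positive and depends only on the parameters in the statement; the square-root error scaling $\sqrt{\delta^{\rm TVD}}$ then arises naturally when this TVD is squared and propagated through Pinsker's inequality.
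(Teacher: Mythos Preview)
Your overall framework---Pinsker plus Jensen to reduce to a lower bound on $\Delta = \mathbb{E}_\psi\|p^\psi - q^\psi\|_{\rm TVD}$, followed by a triangle-inequality swap of $q^\psi$ for $q^\rho$ at a cost controlled by marginal concentration (Corollary~\ref{cor:TVDcloseSubsys})---matches the paper's proof. The divergence is in how you then lower bound the remaining quantity $\mathbb{E}_\psi\sum_x|p^\psi(x)-q^\rho(x)|$.

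The paper does \emph{not} invoke Corollary~\ref{cor:TVDfar} as a black box and then patch up the residual $\|p^\rho-q^\rho\|_{\rm TVD}$. Instead it re-runs the fourth moment method (the engine behind Corollary~\ref{cor:TVDfar}) directly with $q^\rho(x)$ as the anchor. Using the Porter-Thomas moments, a short calculation shows that for \emph{any} constant $q\geq 0$,
\[
\mathbb{E}_\psi|p^\psi(x)-q|\;\geq\;\frac{\big(\mathbb{E}_\psi(p^\psi(x)-q)^2\big)^{3/2}}{\big(\mathbb{E}_\psi(p^\psi(x)-q)^4\big)^{1/2}}\;\geq\;\frac{p^\rho(x)}{4}\sqrt{1+\big(q/p^\rho(x)-1\big)^2}\;\geq\;\frac{p^\rho(x)}{4},
\]
so summing over $x$ gives $\Delta\geq 1/4$ with no dependence on $\|p^\rho-q^\rho\|$ whatsoever. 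This is precisely what produces the stated constant $1/(8\ln 2)$.

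Your proposed workaround---combining the triangle bound $\Delta\geq 1/3-\|p^\rho-q^\rho\|-e$ with a variational bound $\Delta\geq \|p^\rho-q^\rho\|-e'$---can at best yield $\Delta\geq 1/6$ (take the average of the two), hence only $1/(18\ln 2)$ after Pinsker, not the stated $1/(8\ln 2)$. Separately, your explanation that the $\sqrt{\delta^{\rm TVD}}$ scaling ``arises naturally when this TVD is squared'' is backwards: squaring $\Delta\geq C-e$ produces an $\mathcal{O}(e)$ error in the CMI, not $\mathcal{O}(\sqrt{e})$; in the paper the square-root errors already enter at the level of the $q^\psi\to q^\rho$ replacement, via the bound on the conditional-marginal difference.
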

	\noindent These results serve multiple purposes. 
	As is the case for random circuits, a high CMI rules out many simple classical algorithms for simulating sampling. Thus, these results provide starting evidence for the hardness of this task. 
	Second, we envision that the universal quantized value of the CMI could be useful as a numerical or experimental signature of Scrooge-random behavior. In particular, choosing the regions $X_A$ and $X_C$ to be small, such that $Q(\rho^{A,C})$ is nonzero but appreciably different from its Haar value, this could principle allow one to distinguish the Scrooge and Haar ensembles.

	As an aside, one should distinguish the above results from a particular instatiation of CMI studied in a similar context in Section V H of Ref.~\cite{mark2024maximum}. The quantity considered there involves an ensemble of states $\{(p_t,\ket{\psi_t})\}_{t}$ with labels $t$ serving as the random variable on which we condition---the CMI they consider thus reduces to the average of a mutual information over this ensemble. In contrast, the CMI here is computed for a fixed instance $\psi$, with the conditioning variable being a substring of the outcome $x_B$, and the averaging is performed afterwards.
	
	\subsection{Proofs of Corollaries~\ref{cor:TVDfar},~\ref{cor:TVDcloseSubsys}, and~\ref{cor:TVDclose}}
	
	\begin{proof}[Proof of Corollary~\ref{cor:TVDfar}]
		The expected total variation distance between the distributions of $\psi$ and $\rho$ can be written,
		\begin{equation}
			\E_{\psi \sim \mathcal{S}_\rho} \| p^\psi -  p^\rho\|_{\rm TVD}
			= \E_{\psi \sim \mathcal{S}_\rho} \sum_x \left| p^\psi(x) - p^\rho(x) \right|
			= \E_{\psi \sim \mathcal{S}_\rho} \E_{x \sim p^\rho} \left| p^\psi(x)/p^\rho(x) - 1 \right|,
			\label{eq:TVDRatio}
		\end{equation}
		where $\E_{x \sim p^\rho} f(x) \equiv \sum_x p^\rho(x) f(x)$.
		Following recent work on random quantum circuits~\cite{nietner2023average,schuster2024random}, we will apply the fourth moment method~\cite{berger1997fourth} to lower bound this value. 
		The fourth moment method is simply the inequality,
		\begin{equation}
			\E_X |X| \geq \frac{(\E_X X^2)^{3/2}}{(\E_X X^4)^{1/2}}
			\label{eq:FourthMoment}
		\end{equation}
		for any real-valued random variable $X$.
		Applied to our formula for the total variation distance, we have
		\begin{equation}
			\E_{\psi \sim \mathcal{S}_\rho} \E_{x \sim p^\rho} \left| p^\psi(x)/p^\rho(x) - 1 \right|
			\geq 
			\frac{
				\left( \E_{\psi \sim \mathcal{S}_\rho} \E_{x \sim p^\rho} \left( p^\psi(x)/p^\rho(x) - 1 \right)^2 \right)^{3/2}
			}
			{
				\left( \E_{\psi \sim \mathcal{S}_\rho} \E_{x \sim p^\rho} \left( p^\psi(x)/p^\rho(x) - 1 \right)^4 \right)^{1/2}
			}.
		\end{equation}
		The numerator and denominator can now be computed using our approximate formulas for the moments of Scrooge $k$-designs for $k \leq 4$. In particular, we have the Porter-Thomas moments derived in Eq.~\eqref{eq:PTMoments}, $\E_{\psi \sim \mathcal{S}_\rho} p^\psi(x)^k \approx  k! p^\rho(x)^k$ up to relative error $\delta_{\rho, k} + \epsilon$, where $\delta_{\rho, k} = O(k^2 2^{-S_\infty(\rho)})$ is the error in the approximate moment formula, defined in Theorem \ref{thm:approx}.
		By expanding out the bracket $( p^\psi(x)/p^\rho(x) - 1 )^2$ in powers of $p^\psi(x)$, the second moment can be bounded as
		\begin{equation}
			\E_{\psi \sim \mathcal{S}_\rho} \E_{x \sim p^\rho} \left( p^\psi(x)/p_\rho(x) - 1 \right)^2
			\geq 
			\E_{x \sim p_\rho} \big( 2 - 2 + 1 - \mathcal{O}(\delta_{\rho, 2}+\epsilon) \big) 
			=
			1 - \mathcal{O}(\delta_{\rho, 2}+\epsilon)
		\end{equation}
		For the fourth moment, expanding the bracket and using the approximate moments, we find
		\begin{equation}
			\E_{\psi \sim \mathcal{S}_\rho} \E_{x \sim p^\rho} \left( p^\psi(x)/p^\rho(x) - 1 \right)^4
			\leq 
			\E_{x \sim p_\rho} \big( 24 - 24 + 12 - 4 + 1  + \mathcal{O}(\delta_{\rho, 4}+\epsilon)\big)
			=
			9 + \mathcal{O}(\delta_{\rho, 4}+\epsilon).
		\end{equation}
		Inserting into our lower bound on the expected total variation distance and invoking the upper bound on $\delta_{\rho, k}$ from Theorem \ref{thm:approx}, we find
		\begin{equation}
			\E_{\psi \sim \mathcal{S}_\rho} \| p^\psi- p^\rho \|_{\rm TVD}
			\geq 
			\frac{1}{3} - \mathcal{O}(2^{-S_\infty(\rho)/2}+\epsilon).
		\end{equation}
		which completes our proof.
	\end{proof}
	
	\begin{proof}[Proof of Corollary~\ref{cor:TVDcloseSubsys}]
		Following the same logic as Eq.~\eqref{eq:TVDRatio}, we have
		\begin{equation}
			\E_{\psi \sim \mathcal{S}_\rho} \|p^\psi_A - p^\rho_A \|_{\rm TVD}
			= \E_{\psi \sim \mathcal{S}_\rho} \E_{x_A \sim p^\rho_A} \left| \tilde{p}^\psi(x_A)/\tilde{p}^\rho(x_A) - 1 \right|.
		\end{equation}
		By Jensen's inequality, $\mathbb{E}[|X|] \leq \mathbb{E}[X^2]^{1/2}$, we have
		\begin{align}
			\E_{\psi \sim \mathcal{S}_\rho} \|p^\psi_A - p^\rho_A \|_{\rm TVD} 
			&\leq 
			\left( \E_{\psi \sim \mathcal{S}_\rho} \E_{x_A \sim p_A^\rho} \left( \frac{p^\psi(x_A)}{p^\rho(x_A)} - 1 \right)^2 \right)^{1/2} \nonumber\\
			&= 
			\left( \E_{\psi \sim \mathcal{S}_\rho} \E_{x_A \sim \tilde{p}_A^\rho} \left( \frac{p^\psi(x_A)^2}{p^\rho(x_A)^2} - 2 \frac{p^\psi(x_A)}{p^\rho(x_A)} + 1 \right) \right)^{1/2}.
		\end{align}
		By Theorem \ref{thm:subsystem full}, we have $\mathbb{E}_{\psi} p^\psi_A(x_A)^2 \approx p^\rho_A(x_A)^2 $ up to relative error $\epsilon +\delta_{\rm prod}[A]$, where $\delta_{\rm prod}[A] = 2^{-S^*_\infty(B)+1}$ as per Eq.~\eqref{eq:epsilonProd}. We also have $\mathbb{E}_{\psi} p^{\psi}_A(x_A) \approx p^{\rho}_A(x_A)$ up to relative error $\epsilon$. Hence,
		\begin{align}
			\E_{\psi \sim \mathcal{S}_\rho} \|p^\psi_A - p^\rho_A \|_{\rm TVD} &\leq \sqrt{\delta_{\rm prod}[A] + 3\epsilon}
		\end{align}
		and since $\sqrt{a+b} \leq \sqrt{a} + \sqrt{b}$ for $a, b \geq 0$, this completes the proof.
	\end{proof}
	
	\begin{proof}[Proof of Corollary~\ref{cor:TVDclose}]
		As in the proof of Corollary \ref{cor: noise}, we expand $\mathcal{D}_\gamma$ as a convex combination of channels which act as the fully depolarising channel on subsets of qubits. Namely, $\mathcal{D}_\gamma[\,\cdot\,] = \sum_{A \subseteq [n]} q(A) \tr_{A}[\,\cdot\,] \mathbbm{1}^A/2^{|A|}$, where $q(A) = \gamma^{|A|}(1-\gamma)^{n-|A|}$. Then, using the triangle inequality for TVD $\|\sum_i \alpha_i(p_i - q_i)\|_{\rm TVD} \leq \sum_i \alpha_i \|p_i -q_i\|_{\rm TVD}$ for any $\alpha_i \geq 0$, we have
		\begin{align}
			\E_\psi \|\tilde{p}^\psi(x) - \tilde{p}^\rho(x)\|_{\rm TVD} \leq \E_\psi\sum_{A \subseteq [n]}q(A) \|p^\psi_A(x_A) - p^\rho_A(x_A)\|_{\rm TVD}
		\end{align}
		Again we separate out regions $A$ of size less than $\ell_* = n\gamma/4$, which have total weight $q(\ell_*) \coloneqq \sum_{A : |A|< \ell_*} q(A) \leq e^{-\mathcal{O}(\gamma n)}$ [see Eq.~\eqref{eq:cutoff prob} and the discussion thereafter]. Since the TVD for any two distributions is upper bounded by 2, we have
		\begin{align}
			\E_\psi \|\tilde{p}^\psi(x) - \tilde{p}^\rho(x)\|_{\rm TVD} &\leq q(\ell_*) + \sum_{\substack{A \subseteq [n] \\ |A| \geq \ell_*}} q(A) \|p^\psi_A(x_A) - p^\rho_A(x_A)\|_{\rm TVD} \nonumber\\
			&\leq e^{-\mathcal{O}(\gamma n)} + \max_{A : |A| \geq \ell_*}\delta^{\rm TVD}_A
		\end{align}
		where $\delta^{\rm TVD}_A$ is defined and bounded in Corollary \ref{cor:TVDcloseSubsys}. By the definition of $S^*_\infty(\ell_*)$, this completes the proof.
	\end{proof}

	\subsection{Proof of Theorem~\ref{cor: CMI}: Quantization of conditional mutual information for product $\rho$}
	
	Our starting point is the general formula for averages of arbitrary functions $f(\ket{\psi})$ over Scrooge random states, Eq.~\eqref{eq:Scrooge expectation}, which we restate here
	\begin{align}
		\mathbb{E}_{\psi \sim \mathcal{S}_\rho}\big[f(\ket{\psi})\big]  &= \int \dif \mu_{\rm Haar}(\phi) r_\phi  f\left( \frac{\sqrt{d \rho}\ket{\phi}}{\sqrt{r_\phi}} \right), & \text{where }r_\phi \coloneqq d\braket{\phi|\rho|\phi}.
	\end{align}
	Using the fact that complex $d$-dimensional Gaussian vectors $\ket{z} = \sum_i z_i \ket{i}$, with $z_i \sim \mathcal{N}_\mathbb{C}(0,1)$, can be written as a product of a normalization $s = \braket{z|z}$ with a Haar-random normalized vector $\ket{\phi_z} = \ket{z}/\sqrt{\braket{z|z}} \sim \text{Haar}(d)$, we can rewrite the above as
	\begin{align}
		\mathbb{E}_{\psi \sim \mathcal{S}_\rho}\big[f(\ket{\psi})\big]  &= \int \frac{\dif z \dif \bar{z}}{(2\pi)^d} e^{-\braket{z|z}} r_z f\left( \frac{\sqrt{\rho}\ket{z}}{\sqrt{r_z}}  \right) & \text{where }r_z \coloneqq \braket{z|\rho|z}
	\end{align}
	We will compute the CMI by evaluating each of the four Shannon entropies in Eq.~\eqref{eq:CMIDef}. Accordingly, we will fix an arbitrary bipartition $AB$ and choose $f(\ket{\psi}) = H(X_A)_{p^\psi} = -\sum_{x_A}p^\psi(x_A)\log p^\psi(x_A)$, where $p^\psi(x_A) = \braket{x_A|\tr_B\psi |x_A}$ as usual. We can write $r_zf(\sqrt{\rho}\ket{z}/\sqrt{r_z}) = -\sum_{x_A}\tilde{p}^z(x_A)[\log(\tilde{p}^z(x_A)) - \log(r_z)]$, where
	\begin{align}
		\tilde{p}^z(x_A) &\coloneqq \braket{z|P_{x_A}|z} & \text{where } P_{x_A} \coloneqq \rho^{1/2}(\ket{x}\bra{x}_A\otimes I^B)\rho^{1/2}.
	\end{align}
	Using the fact that $\sum_{x_A} \tilde{p}^z(x_A) = r_z$, we find
	\begin{align}
		\mathbb{E}_{\psi \sim \mathcal{S}_\rho}\big[f(\ket{\psi})\big] &= \int \frac{\dif z \dif \bar{z}}{(2\pi)^d} e^{-\braket{z|z}} r_z \log r_z -\sum_{x_A} \int \frac{\dif z \dif \bar{z}}{(2\pi)^d} e^{-\braket{z|z}} \tilde{p}^z(x_A)\log \tilde{p}^z(x_A) \nonumber\\
		&\eqqcolon \mathbb{E}_z[r_z\log r_z] - \sum_{x_A}\mathbb{E}_z[\tilde{p}^z(x_A)\log \tilde{p}^z(x_A)],
		\label{eq:quantized cmi split}
	\end{align}
	where we use $\mathbb{E}_z$ to denote the average over uniform Gaussian random vectors $\ket{z}$. 
	The moment generating function of $\tilde{p}^z(x_A)$ over Gaussian random vectors $\ket{z}$ can then be written
	\begin{align}
		I_{x_A}(t) \coloneqq \mathbb{E}_z[ e^{t \tilde{p}^z(x_A)}] = \int \frac{\dif z \dif \bar{z}}{(2\pi)^d} e^{-\braket{z|I-tP_{x_A}|z}} = \det(I-tP_{x_A})^{-1} = \det(I^B-t p^\rho(x_A)\rho^B)^{-1}.
	\end{align}
	In the last step, we invoke the property $\rho = \rho^A \otimes \rho^B$, which implies that the eigenvalues of $P_{x_A}$ are either zero or of the form $p^\rho(x_A)\times \lambda^B_i$, where $p^\rho(x_A) = \braket{x_A|\rho|x_A}$, and $\lambda^B_i$ are the eigenvalues of $\rho^B$. We now use the fact that the Laplace transform of $f(x) = x\ln x$ is $g(s) = s^{-2}(1-\gamma) - s^{-2}\ln s$, where $\gamma$ is the Euler-Mascheroni constant (see Section 1.6, Eq.~6.9 in Ref.~\cite{oberhettinger2012tables}). By taking the inverse Laplace transform, we can therefore conclude that for any positive-valued random variable $X > 0$ such that $\mathbb{E}[e^{c X}] < \infty$ for some $c > 0$, we have
	\begin{align}
		\mathbb{E}_{X}[X\log(X/a)] = \frac{1-\gamma}{\ln 2}\mathbb{E}_{X}[X]- \frac{a}{2\pi \iu}\int_{-\iu \infty+c}^{+\iu \infty + c} \dif s\, \mathbb{E}[e^{sX/a}] s^{-2}\log s 
	\end{align}
	for any constant $a > 0$. We recognize the expectation value inside the integral as the moment generating function evaluated at $t = s/a$. Thus, taking $a = p^\rho(x_A)$ and using $\mathbb{E}_z [\tilde{p}^z(x_A)] = p^\rho(x_A)$, we have
	\begin{align}
		\mathbb{E}_z[\tilde{p}^z(x_A)\log\tilde{p}^z(x_A)] &= p^\rho(x_A)\log p^\rho(x_A)+ \frac{1-\gamma}{\ln 2}p^\rho(x_A) - \frac{p^\rho(x_A)}{2\pi \iu}\int_{-\iu \infty+c}^{+\iu \infty + c} \dif s\,s^{-2}\log s \prod_{i}\left(1-s \lambda^B_{i}\right)^{-1}.
	\end{align}
	The poles of the integrand in the domain $\Re s > 0$ occur at $s = 1/\lambda^B_i$, and assuming all eigenvalues of $\rho^B$ are distinct, these are all simple poles, with corresponding residue $\lambda^B_i\log(\lambda_i^B)\prod_{j \neq i}(1-\lambda_i^B/\lambda_j^B)^{-1}$. Deforming the contour to enclose all these poles in the clockwise direction, we obtain
	\begin{align}
		-\sum_{x_A}\mathbb{E}_z[\tilde{p}^z(x_A)\log \tilde{p}^z(x_A)] &= H(p^\rho_A) - \frac{1-\gamma}{\ln 2} + \sum_i \left(\prod_{j\neq i} \frac{\lambda_i^B}{\lambda_i^B - \lambda_j^B}\right) \lambda^B_i \log\lambda^B_i \nonumber\\ &= H(p^\rho_A) - \frac{1 - \gamma}{\ln 2} +  Q(\rho^B)
	\end{align}
	where $Q(\rho^B)$ is the subentropy \cite{jozsa1994lower, datta2014}. Finally, we can evaluate the first term in \eqref{eq:quantized cmi split} by taking the limit of $A = \emptyset$ in the above equation, since $\tilde{p}^z(x_\emptyset) = r_z$, which gives
	\begin{align}
		\mathbb{E}_z[r_z \log r_z] = \frac{1-\gamma}{\ln 2} - Q(\rho) 
		\label{eq:rz subentropy}
	\end{align}
	where $\rho = \rho^A \otimes \rho^B$ is the global state. Therefore,
	\begin{align}
		\mathbb{E}_{\psi \sim \mathcal{S}_\rho} H(X_A)_{p^\psi} = H(p^\rho_A)+Q(\rho^B) - Q(\rho^A \otimes \rho^B).
	\end{align}
	Applying this formula for each of the four terms in \eqref{eq:CMIDef}, and using the product structure of $\rho$ to infer that $H(p^\rho_{AB}) = H(p^\rho_{A}) + H(p^\rho_{B})$ and similar, we obtain
	\begin{align}
		\mathbb{E}_{\psi \sim \mathcal{S}_\rho}  I(X_A:X_C|X_B)_{p^\psi} = Q(\rho^A) + Q(\rho^C) - Q(\rho^{AC}).
	\end{align}
	
	Finally, the convergence of the subentropies $Q(\rho^A)$, $Q(\rho^C)$ to the quantized value $\frac{1-\gamma}{\ln 2}$, which implies the final statement, Eq.~\eqref{eq:CMI asymptote}, follows from the following lemma, which may be of independent interest in light of Ref.~\cite{datta2014}.
	\begin{lemma} \label{lem:subentropy bounds}
		For any density matrix, the subentropy $Q(\rho)$ satisfies $\frac{1-\gamma}{\ln 2} - \delta \leq Q(\rho) \leq \frac{1-\gamma}{\ln 2}$, where $\delta \leq \tr[\rho^2] = 2^{-S_2(\rho)}$.
	\end{lemma}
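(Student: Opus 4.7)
My plan is to specialize the integral representation derived inside the proof of Theorem~\ref{thm:CMI quantized} to the full system. The product structure $\rho = \rho^A\otimes\rho^B$ was only used there to handle nontrivial $A$; taking $A$ empty (so that $\tilde p^z(x_\emptyset) = r_z$) collapses the argument and yields, for any density matrix $\rho$,
\begin{equation*}
Q(\rho) \;=\; \frac{1-\gamma}{\ln 2} \;-\; \mathbb{E}_z\!\left[r_z \log_2 r_z\right],
\end{equation*}
where $r_z = \langle z|\rho|z\rangle$ with $z$ a standard complex Gaussian on $\mathbb{C}^d$ (covariance $\mathbb{E}[z_i\bar z_j] = \delta_{ij}$). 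Since $\mathbb{E}[r_z] = \tr(\rho) = 1$, the problem reduces to controlling the entropy-like functional $\mathbb{E}[r_z \log_2 r_z]$ of a positive random variable of mean one.

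The upper bound $Q(\rho)\leq (1-\gamma)/\ln 2$ is then immediate: the map $x\mapsto x\log_2 x$ is convex on $(0,\infty)$, so Jensen's inequality gives $\mathbb{E}[r_z\log_2 r_z]\geq \mathbb{E}[r_z]\log_2\mathbb{E}[r_z] = 0$.

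For the lower bound I need $\delta := \mathbb{E}[r_z\log_2 r_z] \leq \tr(\rho^2)$. The strategy has two ingredients. First, diagonalising $\rho = \sum_i\lambda_i|i\rangle\langle i|$ gives $r_z = \sum_i\lambda_i |z_i|^2$ with $|z_i|^2$ i.i.d.\ unit-rate exponentials, so a direct second-moment computation yields $\mathrm{Var}(r_z) = \sum_i\lambda_i^2 = \tr(\rho^2)$. Second, the pointwise inequality $x\ln x \leq (x-1)+(x-1)^2$ for $x\geq 0$ (the Kullback--$\chi^2$ comparison $D(\nu\|\mu)\leq\chi^2(\nu\|\mu)$ applied to $d\nu = r_z\,d\mu$ on complex Gaussian space) yields, after taking expectation and using $\mathbb{E}[r_z-1] = 0$, the bound $\mathbb{E}[r_z\ln r_z]\leq \tr(\rho^2)$, and hence $\delta\leq \tr(\rho^2)/\ln 2$.

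The main obstacle is tightening the constant from $1/\ln 2\approx 1.44$ down to $1$. A purely pointwise inequality $x\log_2 x\leq (x-1)/\ln 2 + C(x-1)^2$ is saturated as $x\downarrow 0$ and forces $C\geq 1/\ln 2$, so no pointwise estimate alone can close the gap. To do so I would exploit the detailed law of $r_z$: for example, a modified log-Sobolev inequality for the product measure $\bigotimes_i\mathrm{Exp}(1)$, or direct evaluation of $\mathbb{E}[r_z\ln r_z]$ by contour integration against $\mathbb{E}[e^{-tr_z}] = \prod_i(1+t\lambda_i)^{-1}$, which is sharper in precisely the regime---$\tr(\rho^2)$ small, $r_z$ tightly concentrated about $1$---where the quadratic bound is loosest. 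Evaluating on extremal spectra (rank-one $\rho$ gives $\delta = (1-\gamma)/\ln 2$ with $\tr(\rho^2) = 1$; maximally mixed $\rho$ in dimension $d$ gives $\delta\sim 1/(2d\ln 2)$ with $\tr(\rho^2) = 1/d$) shows that in fact $\delta/\tr(\rho^2) \lesssim 1/(2\ln 2)\approx 0.72$, so the stated form does hold, and verifying this improved constant uniformly in $\rho$ is the technical heart of the argument.
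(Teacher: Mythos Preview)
Your approach is essentially the paper's. Both proofs start from the identity $Q(\rho)=\frac{1-\gamma}{\ln 2}-\mathbb{E}_z[r_z\log r_z]$ obtained by collapsing the $A=\emptyset$ case of the Theorem~\ref{thm:CMI quantized} calculation, and both dispatch the upper bound by Jensen's inequality applied to $x\mapsto x\log x$.

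For the lower bound you and the paper use closely related but not identical inequalities. You use the pointwise estimate $x\ln x\le (x-1)+(x-1)^2$ (equivalently, the Kullback--$\chi^2$ comparison), which after taking expectations and using $\mathbb{E}[r_z]=1$ gives $\mathbb{E}[r_z\ln r_z]\le\mathrm{Var}(r_z)=\tr(\rho^2)$ directly. The paper instead passes to the size-biased density $g(x)=x f(x)$, recognises $\mathbb{E}[r_z\log r_z]=D_{\mathrm{KL}}(g\|f)$ and $\log\mathbb{E}[r_z^2]=D_2(g\|f)$, and invokes monotonicity of R\'enyi divergences $D_1\le D_2$ to obtain $\mathbb{E}[r_z\log r_z]\le\log(1+\tr\rho^2)\le\tr\rho^2$. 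The two routes reach the same destination; the paper's is marginally sharper at the intermediate step but loses that advantage in the final $\log(1+x)\le x$.

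Your last paragraph is unnecessary. The paper does not improve on the constant you obtain: its final inequality $\log(1+\tr\rho^2)\le\tr\rho^2$ is valid only for the natural logarithm, so in base-$2$ the paper's argument yields exactly the same $\delta\le\tr(\rho^2)/\ln 2$ that you derive. The lemma as applied (Eq.~\eqref{eq:CMI asymptote}) only needs $\delta=\mathcal{O}(2^{-S_2(\rho)})$, so neither the paper nor you need the sharper constant. You have already proved what the paper proves; the proposed refinements via modified log-Sobolev or contour integration are not required.
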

	\noindent We note that the upper bound was established in Ref.~\cite{datta2014} using a proof technique different from ours, presented below. The above suffices to prove Eq.~\eqref{eq:CMI asymptote}.
	
	\textit{Proof of Lemma \ref{lem:subentropy bounds}} We start from Eq.~\eqref{eq:rz subentropy}, and our aim will be to show $ 0 \leq \mathbb{E}_z[r_z \log r_z] \leq \tr[\rho^2]$, which suffices to prove Lemma \ref{lem:subentropy bounds}. For this purpose, we use the series of inequalities $
	0 \leq \mathbb{E}[X\log X] \leq \log(\mathbb{E}[X^2])$, which hold for any non-negative random variable $X$ with unit mean $\mathbb{E}[X] = 1$. The first of these is simply Jensen's inequality for the convex function $x \mapsto x \log x$. To show the second, we take the probability density function $f(x)$, and define a new distribution with density function $g(x) = xf(x)$, which is properly normalized thanks to the assumption on $X$. We then have $\mathbb{E}[X\log X] = \int_0^\infty \dif x g(x) \log(\frac{g(x)}{f(x)}) = D_{KL}(g\| f)$, while $\log(\mathbb{E}[X^2]) = \log(\int_0^\infty \dif x\, g(x) \frac{g(x)}{f(x)}) = D_2(g\|f)$, where $D_\alpha(g\|f) = \frac{1}{\alpha-1}\log (\int\dif x\, g(x)^\alpha f(x)^{1-\alpha})$ is the $\alpha$-R{\'e}nyi divergence, which tends to $D_{KL}$ in the limit $\alpha \rightarrow 1$. The inequality then follows from the fact that $D_\alpha$ is non-decreasing in $\alpha$ \cite{vanErven2014renyi}.
	
	We have $\mathbb{E}[r_z] = \mathbb{E}_z[\braket{z|\rho|z}] = \tr[\rho] = 1$, and so we immediately have that the left hand side of \eqref{eq:rz subentropy} is non-negative, thus proving the lower bound. By standard Gaussian integration methods, $\mathbb{E}[r_z^2] = \tr[\rho]^2 + \tr[\rho^2]$, and by the inequality proved above, we have $\mathbb{E}_z[r_z \log r_z] \leq \log(1+\tr[\rho^2]) \leq \tr[\rho^2]$.
	
	\qed
	
	\subsection{Proof of Corollary~\ref{cor: CMI}: Long-range conditional mutual information for Scrooge $k$-designs}
	
	Since we are working with classical probability distributions, we can re-express the CMI as an average of the mutual information $I(X_A:X_C)_{p^\psi_{x_B}}$ of the conditional distributions $p^\psi_{x_B}(x_Ax_C) \coloneqq p^\psi(x_Ax_Bx_C)/p^\psi(x_B)$. Specifically,
	\begin{align}
		I(X_A:X_C|X_B)_{p^\psi} = \sum_{x_B} p^\psi(x_B) I(X_A:X_C)_{p^\psi_{x_B}} 
	\end{align}
	For each $x_B$, the mutual information $I(X_A:X_C)_{p^\psi_{x_B}} $ is itself equal to the Kullback-Liebler divergence $D_{KL}(p(z)\|q(z)) = \sum_z p(x) \log[p(z)/q(z)]$ between the true conditional distribution $p^\psi_{x_B}(x_Ax_C)$ and $ p^\psi_{x_B}(x_A)p^\psi_{x_B}(x_C)$, which is the product of the marginalised conditional distributions $p^\psi_{x_B}(x_A) \coloneqq \sum_{x_C} p^{\psi}_{x_B}(x_Ax_C)$ (and similar for $p^\psi_{x_B}(x_C)$). We will now use Pinsker's inequality
	\begin{align}
		D_{KL}(p\|q) \geq \frac{2}{\ln 2} \|p - q\|_{\rm TVD}^2,
	\end{align}
	where $\|p - q\|_{\rm TVD} \coloneqq \sum_z |p(z) - q(z)|$ is the total variation distance between two distributions. Combining this with Jensen's inequality $\mathbb{E}[X^2] \geq \mathbb{E}[X]^2$, we can lower bound the CMI as
	\begin{align}
		\mathbb{E}_{\psi \sim \mathcal{E}}I(X_A:X_C|X_B) &\geq \frac{2}{\ln 2}\Delta^2 & \text{where } \Delta \coloneqq \mathbb{E}_{\psi \sim \mathcal{E}} \mathbb{E}_{x_B \sim \psi} \sum_{x_Ax_C} |p^\psi_{x_B}(x_Ax_C) - p^{\psi}_{x_B}(x_A) p^{\psi}_{x_B}(x_C)|
	\end{align}
	where we use the shorthand $\mathbb{E}_{x_B \sim \psi}f(x_B) = \sum_{x_B} p^\psi(x_B) f(x_B)$. From Corollary \ref{cor:TVDcloseSubsys}, we can obtain the following intermediate result
	\begin{align}
		\mathbb{E}_\psi \mathbb{E}_{x_B \sim \psi} \sum_{x_A}  \left| p^\psi_{x_B}(x_A) - p^\rho_{x_B}(x_A) \right| &= \mathbb{E}_{\psi} \sum_{x_B} \left| p^\psi(x_Ax_B) - p^\psi(x_B)p^\rho_{x_B}(x_A) \right| \nonumber\\
		&\leq \mathbb{E}_{\psi} \sum_{x_Bx_A} \left| p^\psi(x_Ax_B) - p^\rho(x_Ax_B) \right| + \sum_{x_B} |p^{\psi}(x_B) - p^\rho(x_B)|\sum_{x_A}p^\rho_{x_B}(x_A)  \nonumber\\
		&\leq \sqrt{\delta_{C}^{\rm TVD}} + \sqrt{\delta_{AC}^{\rm TVD}}
	\end{align}
	where $\delta_{C}^{\rm TVD}$ is defined in Eq.~\eqref{eq:TVDMarginal}. By the reverse triangle inequality, we have
	\begin{align}
		\Delta &=  \mathbb{E}_{\psi \sim \mathcal{E}} \sum_{x_Ax_Bx_C} \bigg|\big(p^\psi(x_Ax_Bx_C) - p^{\rho}_{x_B}(x_A) p^{\rho}(x_Bx_C)\big) + p^\rho_{x_B}(x_A)\big(p^{\rho}(x_Bx_C)- p^{\psi}(x_Bx_C) \big)  \nonumber\\ & \hspace*{60pt}  + p^{\psi}(x_Bx_C) \big(p^{\rho}_{x_B}(x_A) - p^{\psi}_{x_B}(x_A) \big) \bigg| \nonumber\\
		&\geq \mathbb{E}_{\psi \sim \mathcal{E}} \left[\sum_{x_Ax_Bx_C} |p^\psi(x_Ax_Bx_C) - p^{\rho}_{x_B}(x_A) p^{\psi}(x_Bx_C)| -\sum_{x_A}| p^{\psi}_{x_B}(x_A) - p^{\rho}_{x_B}(x_A)| - \sum_{x_Bx_C}| p^{\psi}(x_Bx_C) - p^{\rho}(x_Bx_C)|\right] \nonumber\\
		&\geq  \sum_{x_Ax_Bx_C} \mathbb{E}_{\psi \sim \mathcal{E}}|p^\psi(x_Ax_Bx_C) - p^{\rho}_{x_B}(x_A) p^{\rho}(x_Bx_C)| - \mathcal{O}(\sqrt{\delta_A^{\rm TVD}} + \sqrt{\delta_C^{\rm TVD}})
	\end{align}
	In going to the second line, we use the reverse triangle inequality $|x+y| \geq |x| - |y|$, along with the fact that $p_{x_B}^\rho(x_A)$ and $p^\psi(x_Bx_C)$ are properly normalised sets of probabilities. We now lower bound each term in the sum using the fourth moment method \eqref{eq:FourthMoment}, which we restate here
	\begin{align}
		\mathbb{E}[|X|] \geq \frac{(\mathbb{E}[X^2])^{3/2}}{(\mathbb{E}[X^4])^{1/2}}.
	\end{align}
	The second moment in the numerator can be evaluated using the fact that $\mathcal{E}$ is a Scrooge 2-design with relative error $\epsilon$ moments of $p^{\psi}(x_Ax_Bx_C)$ are close to those of the Porter-Thomas distribution [Eq.~\eqref{eq:PTMoments}]
	\begin{align}
		\mathbb{E}_{\psi \sim \mathcal{E}} \big(p^\psi(x_Ax_Bx_C) - p^{\rho}_{x_B}(x_A) p^{\rho}(x_Bx_C)\big)^2 &\geq p^\rho(x_Ax_Bx_C)^2(1 - 2\epsilon) + \big(p^\rho(x_Ax_Bx_C) - p_{x_B}^\rho(x_A)p^\rho(x_Bx_C)\big)^2 \nonumber\\ &\equiv p^2(1 + x^2 - 2\epsilon)
	\end{align}
	For notational convenience, we are temporarily using the shorthand $p \coloneqq p^\rho(x_Ax_Bx_C)$, $q \coloneqq p^{\rho}_{x_B}(x_A) p^{\rho}(x_Bx_C)$, $x = q/p - 1$. The fourth moments can be evaluated similarly
	\begin{align}
		\mathbb{E}_{\psi \sim \mathcal{E}} \big(p^\psi(x_Ax_Bx_C) - p^{\rho}_{x_B}(x_A) p^{\rho}(x_Bx_C)\big)^4 &\leq 24p^4 - 24 p^3q + 12 p^2q^2 - 4pq^3 + q^4 + p^4\mathcal{O}(\epsilon) \nonumber\\
		&= p^4\left(9 - 8 x + 6x^2 + x^4 + \mathcal{O}(\epsilon)\right)  \nonumber\\
		&\leq 16p^4\big[(1 + x^2)^2 + \mathcal{O}(\epsilon)\big]
	\end{align}
	Since $S_\infty(\rho) \geq S^*_\infty(A)$, we have $\delta_{\rho, 4} = \mathcal{O}(\delta^{\rm TVD}_A)$, and so
	\begin{align}
		\Delta \geq \frac{1}{4} - \mathcal{O}\left(\epsilon + \sqrt{\delta_A^{\rm TVD}} + \sqrt{\delta_C^{\rm TVD}}\right)
	\end{align}
	and in turn
	\begin{align}
		\mathbb{E}_{\psi \sim \mathcal{E}} I(X_A:X_C|X_B) \geq \frac{1}{8 \ln 2} - \mathcal{O}\left(\epsilon + \sqrt{\delta_A^{\rm TVD}} + \sqrt{\delta_C^{\rm TVD}}\right)
	\end{align}
	as claimed in Corollary \ref{cor: CMI}. \hfill $\square$


	\section{Lower bounds on the formation of Scrooge-random states}\label{sec:Bits}
	
	In this section, we present the full details of our lower bounds on the formation of Scrooge-random states.
	We begin by providing the proof of our lower bound on the number of bits of randomness required to form a Scrooge state $k$-design, and then apply our lower bound to the temporal ensemble.
	
	\subsection{Lower bound on the bits of randomness of Scrooge state designs}
	
	Any ensemble generated by a finite number $m < \infty$ of bits of randomness must contain at most $2^m$ distinct states. We can bound the number of bits of randomness needed to form an approximate Scrooge $k$-design using the following result.
	\begin{lemma} \label{lem:DiscreteBound}
		Suppose that an ensemble of states $\mathcal{E}$ has a discrete distribution $\{(p_i, \ket{\psi_i})\}_{i=1}^r$ of cardinality $r$, and forms an $\epsilon$-approximate Scrooge $k$-design with additive error. Then
		\begin{align}
			r \geq \left(1-\delta_{\rho,k} -\frac{\epsilon}{2} \right)\frac{2^{k S_\infty(\rho)}}{k!}
			\label{eq:RankBoundDiscrete}
		\end{align}
		where $\delta_{\rho, k} = O(k^2 2^{-S_\infty(\rho)/2})$ is the approximation error in Theorem \ref{thm:approx}. 
	\end{lemma}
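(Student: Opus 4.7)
The strategy is to exploit the fact that $\chi^{(k)}_{\mathcal{E}} = \sum_{i=1}^r p_i (|\psi_i\rangle\langle \psi_i|)^{\otimes k}$ has rank at most $r$, and to argue that this rank is forced to be large because the Scrooge moment $\chi^{(k)}_{\mathcal{S}_\rho}$ is sufficiently ``spread out.'' The three ingredients I would combine are (i) the additive-error hypothesis, (ii) the relative-error form of Theorem~\ref{thm:approx}, and (iii) an explicit computation of the top eigenvalue of $\rho^{\otimes k}\sum_\pi \pi$.

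Let $P$ be the projector onto the range of $\chi^{(k)}_{\mathcal{E}}$, so that $\mathrm{rank}(P) \leq r$ and $\mathrm{tr}(P\chi^{(k)}_{\mathcal{E}}) = 1$. Because both moments have unit trace, their difference $\Delta \coloneqq \chi^{(k)}_{\mathcal{E}} - \chi^{(k)}_{\mathcal{S}_\rho}$ is traceless, and I would use this to write $\mathrm{tr}(P\Delta) = \mathrm{tr}\big((P - \tfrac12 I)\Delta\big)$. H\"older's inequality then gives $|\mathrm{tr}(P\Delta)| \leq \|P - \tfrac12 I\|_\infty \|\Delta\|_1 \leq \epsilon/2$, yielding $\mathrm{tr}(P\chi^{(k)}_{\mathcal{S}_\rho}) \geq 1 - \epsilon/2$. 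This tracelessness step is what produces the factor of $\epsilon/2$, rather than $\epsilon$, in the final bound.

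Next, Theorem~\ref{thm:approx} in its relative-error form gives $\chi^{(k)}_{\mathcal{S}_\rho} \preceq (1+\delta_{\rho,k})\,\rho^{\otimes k}\sum_\pi \pi$, so $\mathrm{tr}\big(P\,\rho^{\otimes k}\sum_\pi \pi\big) \geq (1-\epsilon/2)/(1+\delta_{\rho,k})$. For the matching upper bound, I would use $\sum_\pi \pi = k!\,\Pi_{\rm sym}$ together with $[\rho^{\otimes k},\Pi_{\rm sym}] = 0$ to recognize $\rho^{\otimes k}\sum_\pi \pi = k!\,\Pi_{\rm sym}\rho^{\otimes k}\Pi_{\rm sym}$; its largest eigenvalue is $k!\,\lambda_{\max}(\rho)^k = k!\, 2^{-kS_\infty(\rho)}$, attained on $|v_{\max}\rangle^{\otimes k}$ with $|v_{\max}\rangle$ the top eigenvector of $\rho$. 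The trace-rank inequality $\mathrm{tr}(PX) \leq \mathrm{rank}(P)\,\lambda_{\max}(X)$ for PSD $X$ then yields $\mathrm{tr}\big(P\,\rho^{\otimes k}\sum_\pi \pi\big) \leq r \cdot k!\,2^{-kS_\infty(\rho)}$. Combining the upper and lower bounds on $\mathrm{tr}(P\,\rho^{\otimes k}\sum_\pi\pi)$, and using $(1-\epsilon/2)/(1+\delta_{\rho,k}) \geq 1 - \delta_{\rho,k} - \epsilon/2$, produces Eq.~\eqref{eq:RankBoundDiscrete}. I do not foresee a substantive obstacle; the only subtlety worth flagging is the tracelessness trick, which is essential both to obtain the factor-of-two saving in $\epsilon/2$ and to make the bound nontrivial in the regime of very small $2^{-kS_\infty(\rho)}$ that is the physically interesting one.
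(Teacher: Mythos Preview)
Your argument is correct and essentially identical to the paper's: both use the rank-$r$ projector onto the range of $\chi^{(k)}_{\mathcal E}$, test the difference of moments against the observable $I-2P$ (equivalently $P-\tfrac12 I$) to extract the factor $\epsilon/2$, and then bound $\tr(P\cdot)$ by $r$ times the top eigenvalue $k!\,2^{-kS_\infty(\rho)}$ of the approximate moment. The only cosmetic difference is ordering: the paper first converts Theorem~\ref{thm:approx} to an additive bound $\|\chi^{(k)}_{\mathcal E}-\chi^{(k),\mathrm{Appr}}_{\mathcal S_\rho}\|_1\le\epsilon+2\delta_{\rho,k}$ and runs the trace-norm argument directly against the approximate moment, whereas you run the tracelessness step against the exact Scrooge moment and only afterwards invoke the PSD inequality $\chi^{(k)}_{\mathcal S_\rho}\preceq(1+\delta_{\rho,k})\rho^{\otimes k}\sum_\pi\pi$; your ordering is arguably a touch cleaner since the tracelessness is exact there.
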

	\textit{Proof of Lemma \ref{lem:DiscreteBound}.---}The stated assumption implies that the $k$th moment of $\mathcal{E}$, $\chi_{\mathcal{E}}^{(k)} = \sum_i p_i \dyad{\psi_i}^{\otimes k}$, has rank at most $r$, and also that $\|\chi_{\mathcal{E}}^{(k)} - \chi_{\mathcal{S}_\rho}^{(k)}\|_1 \leq  \epsilon$. Converting Theorem \ref{thm:approx} from relative to additive error, this implies that $\|\chi_{\mathcal{E}}^{(k)} - \chi_{\mathcal{S}_\rho}^{(k), \text{Appr}}\|_1 \leq  \epsilon + 2\delta_\rho$. Now, we use the fact that for an arbitrary state $\sigma$ with $\text{rank}(\sigma) > r$, and an arbitrary rank-$r$ state $\tau_r$ supported by a rank-$r$ projector $\Pi_\tau$, we have
	\begin{align}
		\|\sigma - \tau_r\|_1  &= \sup_{U\in \mathrm{U}(d)}\tr[U(\sigma-\tau_r)] \nonumber\\
		&\geq \tr[(1-2\Pi_\tau)(\sigma-\tau)] \nonumber\\
		&= 2-2\tr[\Pi_\tau \sigma]  \nonumber\\ &\geq 2-2\|\Pi_\tau\|_1\|\sigma\|_\infty= 2 - 2r \|\sigma\|_\infty
	\end{align}
	The final line is due to the fact that for any projector $P$, we have $\|XP\|_1 \leq \|P\|_1\|X\|_\infty = \text{rank}(P)\|X\|_\infty$ by H{\"o}lder's inequality. This general result we apply with $\chi^{(k), \text{Appr}}_{\mathcal{S}_\rho}$ in place of $\sigma$, and $\chi_\mathcal{E}^{(k)}$ in place of $\tau_r$, which indeed has rank $r$. The maximum eigenvalue of the approximate moment is $k! 2^{-k S_\infty(\rho)}$, corresponding to the eigenvector $\ket{\lambda^\star}^{\otimes k}$, where $\ket{\lambda^\star}$ is the eigenvector of $\rho$ with maximum eigenvalue. Eq.~\eqref{eq:RankBoundDiscrete} then follows. \qed
	
	\subsection{Failure of subexponential-time temporal ensembles}
	
	As discussed in the main text, one interesting consequence of our lower bound is that the temporal ensemble cannot realize a Scrooge state $k$-design for any evolution times subexponential in the number of qubits $n$.
	\begin{corollary}[Failure of subexponential-time temporal ensembles]\label{cor:temporal}
		For any Hamiltonian $H$ normalized such that $\lVert H \rVert_\infty = n$ and any initial state $\ket{\psi_0}$, define the temporal ensemble $\mathcal{E}_T = \{ e^{-iHt} \ket{\psi_0} \, | \, t \sim \textup{Unif}(0,T) \}$. Then, if $\mathcal{E}_T$ forms an $\epsilon$-approximate Scrooge $k$-design with additive error, the range of evolution times $T$ must be at least
		\begin{align}
			T \geq \mathcal{O}\left( \frac{2^{kS_\infty(\rho)}}{k!} \frac{(1-\epsilon/2)^2}{n}\right).
		\end{align}
	\end{corollary}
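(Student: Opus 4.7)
The strategy is to discretize the continuous temporal ensemble and invoke Lemma~\ref{lem:DiscreteBound}, which controls the number of bits of randomness in any discrete Scrooge design. I would partition the interval $[0,T]$ into $N$ equal subintervals of length $T/N$, and define a surrogate discrete ensemble $\mathcal{E}' = \{(1/N, \ket{\psi_j})\}_{j=1}^{N}$, with $\ket{\psi_j} = e^{-iHt_j}\ket{\psi_0}$ at the midpoints $t_j = (j-\tfrac{1}{2})T/N$. This $\mathcal{E}'$ is supported on at most $N$ distinct states, which is exactly the quantity Lemma~\ref{lem:DiscreteBound} lower bounds.

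The second step is to estimate the trace-norm discretization error between the $k$-th moments of $\mathcal{E}_T$ and $\mathcal{E}'$. Using $\|\partial_t \ket{\psi(t)}^{\otimes k}\| \leq k\|H\|_\infty = kn$ together with the standard pure-state inequality $\|\dyad{\phi}-\dyad{\chi}\|_1 \leq 2\|\ket{\phi}-\ket{\chi}\|$, each $k$-copy projector within the $j$-th subinterval differs from $\dyad{\psi_j}^{\otimes k}$ by at most $2kn|t-t_j|$ in trace norm. Integrating the uniform weight over each subinterval and summing over $j$ gives
\begin{equation}
\|\chi^{(k)}_{\mathcal{E}_T} - \chi^{(k)}_{\mathcal{E}'}\|_1 \leq \frac{knT}{2N}.
\end{equation}
By the triangle inequality, if $\mathcal{E}_T$ is an $\epsilon$-approximate Scrooge $k$-design with additive error, then $\mathcal{E}'$ is an $\epsilon'$-approximate Scrooge $k$-design with $\epsilon' = \epsilon + knT/(2N)$. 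Applying Lemma~\ref{lem:DiscreteBound} to $\mathcal{E}'$ then yields
\begin{equation}
N \geq \left(1 - \delta_{\rho,k} - \frac{\epsilon}{2} - \frac{knT}{4N}\right)\frac{2^{kS_\infty(\rho)}}{k!}.
\end{equation}

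Finally, I would optimize over the free parameter $N$. Setting $\mu \coloneqq knT/(2N)$ and rearranging, the last inequality becomes $T \geq (2\mu/kn)\cdot(1 - \delta_{\rho,k} - \epsilon/2 - \mu/2)\cdot 2^{kS_\infty(\rho)}/k!$; the right-hand side is a downward-opening quadratic in $\mu$ maximized at $\mu^\star = 1 - \delta_{\rho,k} - \epsilon/2$, producing
\begin{equation}
T \geq \Omega\!\left(\frac{2^{kS_\infty(\rho)}}{k!\, n}\,(1-\epsilon/2)^2\right),
\end{equation}
which is the claimed bound once $\delta_{\rho,k}$ is absorbed into the $(1-\epsilon/2)^2$ factor. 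The proof is essentially routine once one views the temporal ensemble as a grid-discretized ensemble plus small smearing error; the only delicate step is choosing $N$ so that neither source of error dominates, and checking that the factor of $k$ arising from differentiating $\ket{\psi(t)}^{\otimes k}$ is harmlessly absorbed into the $\Omega(\cdot)$ together with other constants. There is no substantial obstacle beyond this bookkeeping.
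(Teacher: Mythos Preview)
Your proposal is correct and follows essentially the same approach as the paper. Both arguments discretize the temporal ensemble into finitely many time points, bound the $k$-copy trace-norm discretization error by $\mathcal{O}(k\|H\|_\infty T/N)$ via the Lipschitz estimate on $t \mapsto \dyad{\psi(t)}^{\otimes k}$, apply Lemma~\ref{lem:DiscreteBound} to the resulting cardinality-$N$ ensemble, and then optimize over $N$; the paper simply cites Ref.~\cite{cui2025random} for the discretization bound rather than deriving it explicitly as you do, and phrases the final step via the reverse triangle inequality rather than your equivalent ``$\mathcal{E}'$ is an $\epsilon'$-design'' formulation. Your remark about the residual factor of $k$ being absorbed into the asymptotic notation matches what the paper's proof also does.
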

	For typical mixed states where $S_\infty(\rho) = \Theta(n)$, this gives $T \geq 2^{\mathcal{O}\big(k(n-\log k)\big)}$, which is exponentially large in $nk$ for all subexponential $k$.
	
	\begin{proof}[Proof of Corollary \ref{cor:temporal}]
		Our proof follows immediately from combining our lower bound on the number of bits of randomness in Scrooge $k$-designs with additive error, and a recent upper bound on the number of bits of randomness contained in any temporal ensemble~\cite{cui2025random}. In particular, there it was shown that for any integer $m \in \mathbb{N}^+$, one can construct an ensemble of pure states $\mathcal{E}'_m=\{(p_i, \ket{\phi_i})\}_{i=1}^m$ of cardinality $m$ whose $k$th moments $\chi^{(k)}_{\mathcal{E}'_m}$ are $\eta_m$-close to those of the temporal ensemble $\chi^{(k)}_{\mathcal{E}_T}$ in trace distance, where $\eta_m \leq 2kT\|H\|_\infty /m$. By the reverse triangle inequality, we have
		\begin{align}
			\|\chi^{(k)}_{\mathcal{E}_T} - \chi^{(k)}_{\mathcal{S}_\rho}\|_1 &\geq \|\chi^{(k)}_{\mathcal{E}'_m} - \chi^{(k)}_{\mathcal{S}_\rho}\|_1 - \|\chi^{(k)}_{\mathcal{E}_T} - \chi^{(k)}_{\mathcal{E}'_m}\|_1 \nonumber\\
			&\geq 1 -m(k!)2^{-kS_\infty(\rho)} - 2\delta_{\rho, k} - 2k\|H\|_\infty T/m
		\end{align}
		where the second inequality uses Lemma \ref{lem:DiscreteBound}. Setting $m = \left\lfloor\sqrt{T \|H\|_\infty  2^{-S_\infty(\rho)/2}/(k-1)!}\right\rfloor$ gives $\|\chi^{(k)}_{\mathcal{E}_T} - \chi^{(k)}_{\mathcal{S}_\rho}\|_1 \geq 1 - 2\delta_{\rho, k} -\mathcal{O}(\sqrt{T\|H\|_\infty2^{-kS_\infty(\rho)} k!})$. This remains larger than $\epsilon$ provided $T \leq (1-\delta_{\rho, k} -\epsilon/2)^2 2^{kS_\infty(\rho)}/(k!\|H\|_\infty )$, and since $\delta_{\rho, k} \leq \mathcal{O}(2^{-S_\infty(\rho)})$ the claim then follows. 
	\end{proof}
	
	\subsection{Lower bound on state complexity of Scrooge-random states}
	
	Finally, as a consequence of Lemma \ref{lem:DiscreteBound}, we can put a lower bound on the state complexity of states drawn from a Scrooge $k$-design, which mirrors analogous results for approximate spherical $k$-designs \cite{brandao2016local, brandao2019models}. We work with the following standard definition of state complexity.
	\begin{definition}
		For a given robustness parameter $\delta \in [0,1]$, an $n$-qubit state $\ket{\psi}$ has weak state complexity $\mathcal{C}_\delta(\psi)  \leq r$ if there exists a circuit $V$ composed of $r$ two-qubit gates of arbitrary connectivity such that $\frac{1}{2}\|\dyad{\psi} - V \dyad{0} V^\dagger\|_1 \leq \delta$.
	\end{definition}
	\noindent We then have the following.
	\begin{corollary}
		Let $\ket{\psi}$ be drawn from an ensemble $\mathcal{E}$ that forms an $\epsilon$-approximate Scrooge $k$-design with additive error. Then, for any $0 < \delta < 1/2k$, the state $\ket{\psi}$ has $\delta$-robust state complexity at least
		\begin{align}
			\mathcal{C}_\delta(\ket{\psi}) > \Omega\left(\frac{k[S_\infty(\rho)-\log k] - \log(1/\eta)}{\log[kS_\infty(\rho)]}\right)
			\label{eq:complexity lower bound app}
		\end{align}
		with probability at least $1 - \frac{\epsilon}{2} - \eta$.
	\end{corollary}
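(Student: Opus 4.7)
The plan is to combine Lemma~\ref{lem:DiscreteBound} with a standard $\delta$-net argument for circuit-prepared states, adapted from the complexity lower bounds for spherical $k$-designs of Brand\~ao--Harrow--Horodecki. The intuition is that states with low complexity form an effectively discrete subset of Hilbert space at resolution $\delta$; if too many samples from $\mathcal{E}$ landed in this subset, one could ``round'' them onto the net and produce a discrete ensemble whose $k$-th moments still track $\chi^{(k)}_{\mathcal{S}_\rho}$, contradicting Lemma~\ref{lem:DiscreteBound}.

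First I would construct a $\delta$-net $\mathcal{N}_r$ of $n$-qubit states preparable by $r$ two-qubit gates. Discretizing each $U(4)$ factor by an $(\delta/r)$-net of cardinality $(O(r/\delta))^{16}$ and using that there are at most $\binom{n}{2}$ geometric placements per slot yields
\begin{align}
    |\mathcal{N}_r| \;\leq\; \bigl(C n^2 r/\delta\bigr)^{C r}
\end{align}
for an absolute constant $C$; a telescoping operator-norm bound then ensures that every $\ket{\psi}$ with $\mathcal{C}_\delta(\ket{\psi}) \leq r$ lies within trace distance $\delta$ of some $\ket{\tilde\psi} \in \mathcal{N}_r$. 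Setting $p = \Pr_{\psi \sim \mathcal{E}}[\mathcal{C}_\delta(\psi) \leq r]$, I define the discrete ensemble $\mathcal{E}'$ that replaces each low-complexity sample by its nearest point in $\mathcal{N}_r$ and each high-complexity sample by a single fixed state $\ket{\phi_0}$; $\mathcal{E}'$ is supported on at most $|\mathcal{N}_r|+1$ states. A one-line telescope gives $\lVert \dyad{\psi}^{\otimes k} - \dyad{\tilde\psi}^{\otimes k} \rVert_1 \leq 2k\delta$, so
\begin{align}
    \bigl\lVert \chi^{(k)}_{\mathcal{E}} - \chi^{(k)}_{\mathcal{E}'} \bigr\rVert_1 \;\leq\; 2 p k \delta + 2(1-p),
\end{align}
and $\mathcal{E}'$ is an $(\epsilon + 2pk\delta + 2(1-p))$-approximate Scrooge $k$-design in additive error.

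Applying Lemma~\ref{lem:DiscreteBound} to $\mathcal{E}'$ and simplifying yields
\begin{align}
    |\mathcal{N}_r| + 1 \;\geq\; \Bigl( p\,(1-k\delta) - \delta_{\rho,k} - \tfrac{\epsilon}{2} \Bigr)\,\frac{2^{k S_\infty(\rho)}}{k!}.
\end{align}
Suppose, for contradiction, that $p > \epsilon/2 + \eta$. The hypothesis $k\delta < 1/2$ gives $1-k\delta \geq 1/2$, and absorbing the small $\delta_{\rho,k}$ correction into $\eta$, the right-hand side becomes $\Omega\bigl(\eta\,2^{k S_\infty(\rho)}/k!\bigr)$. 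Combining with the net cardinality bound and taking logarithms produces
\begin{align}
    r \,\log(C n^2 r/\delta) \;\gtrsim\; k\bigl( S_\infty(\rho) - \log k \bigr) - \log(1/\eta),
\end{align}
from which the claimed bound on $\mathcal{C}_\delta(\ket{\psi})$ follows by contraposition.

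The main subtlety is keeping the constants compatible: the factor $p\,k\delta$ from the $k$-copy rounding loss must not consume the gap $p - \epsilon/2 - \eta$, which is precisely why the hypothesis $\delta < 1/(2k)$ is imposed. A secondary point is justifying the denominator $\log[kS_\infty(\rho)]$ in the final expression rather than the more natural $\log(n^2 r/\delta)$; this is a self-consistent bootstrap, using that it suffices to consider $r$ no larger than the bound itself (so $\log r = O(\log[k S_\infty(\rho)])$) and that $\log n$ and $\log(1/\delta)$ are of the same order in the physically relevant regime where $S_\infty(\rho) = \Theta(n)$ and $\delta \geq 2^{-\mathrm{poly}(n)}$.
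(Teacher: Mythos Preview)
Your proposal is correct and follows essentially the same route as the paper's proof: both build a $\delta$-net for the set of states preparable with $r$ two-qubit gates (with the same cardinality bound, taken from Brand\~ao--Harrow--Horodecki), round ensemble elements onto the net (collapsing high-complexity outliers to a single fixed state), bound the resulting $k$-copy moment error via the telescoping inequality $\lVert\psi^{\otimes k}-\tilde\psi^{\otimes k}\rVert_1\le k\lVert\psi-\tilde\psi\rVert_1$, and then apply Lemma~\ref{lem:DiscreteBound} to the discrete ensemble. The only cosmetic differences are that the paper parametrizes by $p_r=\Pr[\mathcal{C}_\delta>r]=1-p$ and introduces a separate net parameter $\delta'$ which it sets equal to $\delta$ at the end; your contrapositive phrasing and the paper's direct substitution are logically equivalent, and the ``bootstrap'' you flag for the denominator is handled with the same level of informality in the paper.
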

	
	\begin{proof}
		We begin by making use of arguments used in the proof of Proposition 8 in Ref.~\cite{brandao2016local}. First, by Lemma 27 of the same reference, there exists a $\delta'$-covering of cardinality $M \leq {n \choose 2}^r (\frac{10 r}{{\delta'}})^{16r}$ for the set $\mathcal{V}[n, r]$ made up of $n$-qubit states that can be prepared with $r$ two-qubit gates. A $\delta'$-covering (or `$\delta'$-net') of the continuous space $\mathcal{V}[n, r]$ is a finite set $\Upsilon_{\delta'}[n,r]= \{\ket{\phi_i}\}_{i=1}^M$ such that for every $\ket{\chi} \in \mathcal{V}[n, r]$, there exists a $\ket{\phi_i} \in \Upsilon$ such that $\|\dyad{\phi_i} - \dyad{\chi}\|_1 \leq {\delta'}$.
		
		Now let $p_r \coloneqq \text{Pr}_{\psi \sim \mathcal{E}}\big[ \mathcal{C}_\delta(\ket{\psi}) > r \big]$. From the ensemble $\mathcal{E}$, with probability measure $\dif \nu_\mathcal{E}(\psi)$, let us construct a new ensemble $\mathcal{E}'_r$ as follows. To generate samples from $\mathcal{E}'_r$, first sample $\ket{\psi} \sim \mathcal{E}$, and if $\mathcal{C}_\delta(\ket{\psi}) > r$, then we replace $\ket{\psi}$ with $\ket{\phi_0}$, where $\ket{\phi_0}$ is any fixed state contained in $\Upsilon_\delta[n,r]$. This will occur with probability $p_r$. In the other case $\mathcal{C}_\delta(\ket{\psi}) \leq r$, by definition we can find a state $\ket{\chi_\psi}$ that  can be generated by a circuit made up of $r$ two qubit gates, and is $\delta$-close to $\ket{\psi}$. From $\ket{\chi_\psi}$, we can then identify the nearest state in the covering, $\ket{\phi_\psi} \in \Upsilon_{\delta'}[n,r]$, which by the triangle inequality is guaranteed to be $(\delta+\delta')$-close to $\ket{\psi}$. Finally, replace $\ket{\psi}$ with $\ket{\phi_\psi}$. Observe that all states in $\mathcal{E}'_r$ belong to $\Upsilon_\delta[n,r]$, and hence this ensemble has cardinality $\leq M$.
		
		The following inequalities imply that that $\mathcal{E}'_r$ forms a $\epsilon'$-approximate Scrooge $k$-design with additive error, where $\epsilon' \leq \epsilon + 2p_r + k(\delta+\delta')(1-p_r)$.
		\begin{align}
			\|\chi^{(k)}_{\mathcal{E}'_r} - \chi^{(k)}_{\rm Haar}\|_1  &\leq \|\chi^{(k)}_{\mathcal{E}} - \chi^{(k)}_{\rm Haar}\|_1  +\|\chi^{(k)}_{\mathcal{E}'_r} - \chi^{(k)}_\mathcal{E}\|_1  \nonumber\\
			&\leq \epsilon + \left\|\int \dif\nu_\mathcal{E}(\psi) \left(\mathbbm{1}\big[\mathcal{C}_\delta(\ket{\psi}) > r\big](\psi^{\otimes k} - \phi_0^{\otimes k})  + \mathbbm{1}\big[\mathcal{C}_\delta(\ket{\psi}) \leq r\big](\psi^{\otimes k} - \phi_\psi^{\otimes k})\right)\right\| \nonumber\\
			&\leq \epsilon + 2\int \dif\nu_\mathcal{E}(\psi) \mathbbm{1}\big[\mathcal{C}_\delta(\ket{\psi}) > r\big]+ k(\delta+\delta')\int \dif\nu_\mathcal{E}(\psi)  \mathbbm{1}\big[\mathcal{C}_\delta(\ket{\psi}) \leq r\big] \nonumber\\ &= \epsilon + 2p_r + k(\delta+\delta')(1-p_r).
		\end{align}
		In the last inequality, we use the fact that $\|\psi - \phi_\psi\|_1\leq \delta+{\delta'}$ by the arguments given above, and we also invoke the bound $\|\rho^{\otimes k} - \sigma^{\otimes k}\|_1 \leq k\|\rho - \sigma\|$ for any two density operators $\rho$, $\sigma$. (This last inequality follows the identity $(A-B)^{\otimes k} = \sum_{i=0}^{n-1} A^{\otimes i}\otimes (A-B)\otimes B^{\otimes(n-i-1)}$ along with the triangle inequality.)

		We have arrived at an ensemble of cardinality at most $|\Upsilon_{\delta'}[n,r]| \leq {n \choose 2}^r (\frac{10 r}{{\delta'}})^{16r}$ which forms a $\epsilon'$-approximate Scrooge $k$-design with additive error, and so Lemma \ref{lem:DiscreteBound} can be invoked. Recalling that $\epsilon' \leq \epsilon + 2p_r + k(\delta+\delta')(1-p_r)$, we set ${\delta'} = \delta$, and $p_r = 1 -\epsilon/2-\eta$. Using the assumption $k\delta < 1/2$, we then have
		\begin{align}
			{n \choose 2}^r \left(\frac{10 r}{\delta'}\right)^{16r} &\geq \frac{\eta}{2}\frac{2^{kS_\infty(\rho)}}{k!} \nonumber\\ \Rightarrow r &= \Omega\left(\frac{k(S_\infty(\rho)-\log k)-\log \eta}{\log(k S_\infty(\rho))}\right).
		\end{align}
	\end{proof}


\end{document}